\newcommand{\R}{\mathbb{R}}
\newcommand{\Rp}{\R^+}
\newcommand{\DTW}{\mathsf{DTW}}
\newcommand{\F}{\mathsf{F}}
\newcommand{\AC}{\mathsf{AddConst}}
\newcommand{\AtR}{\mathsf{AddToRange}}
\newcommand{\AG}{\mathsf{AddGradient}}
\newcommand{\LLW}{\mathsf{LeftLinearWave}}
\newcommand{\RLW}{\mathsf{RightLinearWave}}
\newcommand{\assign}{\leftarrow}
\newcommand{\AssignOp}{\mathsf{Assign}}
\newcommand{\ShiftOp}{\mathsf{Shift}}
\newcommand{\dist}{\mathsf{dist}}
\newcommand{\local}{\text{short }}
\newcommand{\longrange}{\text{long }}
\newcommand{\addmin}{\text{Add-min }}
\newcommand{\Insert}{\mathsf{Insert}}
\newcommand{\Remove}{\mathsf{Remove}}
\newcommand{\Lookup}{\mathsf{Lookup}}
\newcommand{\Successor}{\mathsf{Successor}}
\newcommand{\Predecessor}{\mathsf{Predecessor}}
\newcommand{\Add}{\mathsf{AddConst}}
\newcommand{\Min}{\mathsf{Min}}
\newcommand{\nextGT}{\mathsf{nextGT}}
\newcommand{\prevLT}{\mathsf{prevLT}}
\renewcommand{\P}{\mathcal{P}}
\newcommand{\AP}{\P_{\mathsf{active}}}
\newcommand{\TAP}{\tilde{\P}_{\mathsf{active}}}
\newcommand{\DAL}{D_{\alpha}}
\newcommand{\Dbeta}{D_{\beta}}
\newcommand{\DGAM}{D_{\gamma}}
\newcommand{\TA}{\tilde{A}}
\newcommand{\tell}{\tilde{\ell}}
\newcommand{\TP}{\tilde{\P}}
\newcommand{\para}[1]{\vspace{-0.1in} \subparagraph*{#1}}
\newcommand{\B}{\mathcal{B}}
\newcommand{\ray}{r}
\newcommand{\rl}[1]{\rho_{#1}}
\newcommand{\flush}{\mathsf{flush}}
\newcommand{\ceil}[1]{\left\lceil{#1}\right\rceil}
\newcommand{\floor}[1]{\left\lfloor{#1}\right\rfloor}
\newcommand{\polylog}{\mathsf{polylog}}
\newtheorem{invariant}{Invariant}
\crefname{invariant}{Invariant}{Invariants}
\title{Near-Optimal Dynamic Time Warping on Run-Length Encoded Strings}
\author{Itai Boneh}{Reichman University and University of Haifa, Israel}{itai.bone@biu.ac.il}{}{}
\author{Shay Golan}{Reichman University and University of Haifa, Israel}{golansh1@macs.biu.ac.il}{https://orcid.org/0000-0001-8357-2802}{}
\author{Shay Mozes}{Reichman University, Israel}{smozes@idc.ac.il}{https://orcid.org/0000-0001-9262-1821}{}
\author{Oren Weimann}{University of Haifa, Israel}{oren@cs.haifa.ac.il}{https://orcid.org/0000-0002-4510-7552}{}
\authorrunning{I. Boneh, S. Golan, S. Mozes, and O. Weimann}
\date{}
\keywords{Dynamic time warping, Fr\'echet distance, edit distance, run-length encoding}
\begin{document}

\maketitle
\begin{abstract}
We give an $\tilde O(n^2)$ time algorithm for computing the exact Dynamic Time Warping  distance between two strings whose run-length encoding is of size at most $n$. This matches (up to log factors) the known (conditional) lower bound, and should be compared with the previous fastest $O(n^3)$ time exact algorithm and the $\tilde O(n^2)$ time approximation algorithm.  
\end{abstract}

\section{Introduction}
\label{section:introduction}

Dynamic Time Warping (DTW)~\cite{vintzyuk1968} is one of the most popular methods for comparing time-series (see e.g.~\cite{DBLP:journals/pr/Liao05,dtwapp1,dtwapp2,dtwapp3,dtwapp4,dtwapp6,WMDTSK13,ASW15,BLBLK17}). It is appealing in numerous applications such as bioinformatics, signature verification, and speech recognition, where two time-series can vary in speed but still be considered similar. For example, in speech recognition, DTW can detect similarities even if one person is talking faster than the other.  

To define DTW, recall that a run-length encoding $S=s_1^{\ell_1} s_2^{\ell_2} \cdots s_n^{\ell_n}$ of a string $S$ over an alphabet $\Sigma$ is a concise (length $n$) representation of the (length $N=\sum_i \ell_i$) string $S$. Here $s_i^{\ell_i}$ denotes a letter $s_i\in\Sigma$ repeated $\ell_i$ times.
For example, the string $S= aaaabbbaaaaa$ is encoded as $a^4 b^3 a^5$.
A string $S'=s_1^{\ell'_1} s_2^{\ell'_2} \cdots s_n^{\ell'_n}$ is a {\em time-warp} of string $S=s_1^{\ell_1} s_2^{\ell_2} \cdots s_n^{\ell_n}$ if  every $\ell'_i \ge \ell_i$.   

\begin{definition}[Dynamic Time Warping]
	
For a function $\delta : \Sigma^2 \rightarrow \Rp$, 
 the Dynamic Time Warping distance of two strings $S$ 
and $T$ over alphabet $\Sigma$ is defined as 
$$
\DTW(S,T) =  \min_{|S'| = |T'|} \sum_{i = 1}^{|S'|} \delta(S'[i], T'[i]),
$$ where $S'$ and $T'$ range over all time-warps of $S$ and $T$ respectively. 
\end{definition}

In 1968, Vintzyuk~\cite{vintzyuk1968} gave an $O(MN)$ time dynamic programming algorithm for computing the DTW of two strings $S$ and $T$ of lengths $N$ and $M$ respectively. His algorithm is one of the earliest uses of dynamic programming and is  taught today in basic algorithms courses and textbooks. Apart from logarithmic factor improvements~\cite{DBLP:conf/icalp/GoldS17}, the $O(MN)$ quadratic time complexity remains the fastest known and a strongly subquadratic-time  $O((MN)^{1-\varepsilon})$ algorithm is unlikely as it would refute the popular Strong Exponential
Time Hypothesis (SETH) \cite{DTWhard2,DTWhard}. 

The complexity of DTW in terms of $N$ and $M$ is thus well understood. Special cases of DTW are also well understood. These include DTW on binary strings \cite{kuszmaul2021binary, schaar2020faster}, approximation algorithms \cite{DBLP:conf/icalp/Kuszmaul19, DTWapprox1, DTWapprox2}, the large distance regime \cite{DBLP:conf/icalp/Kuszmaul19}, sparse inputs \cite{sparseDTW1,sparseDTW2,sparseDTW3},  
and reductions to other similarity measures
\cite{DBLP:conf/icalp/Kuszmaul19,sakai2020reduction, sakai2020faster}.
However, the complexity of DTW is not yet resolved in terms of $n$ and $m$ (the run-length encoding sizes of $S$ and $T$ respectively). Namely, in the (especially appealing) case where the strings contain long runs. 
The currently fastest algorithms are $O(Nm+Mn)$~\cite{rle1,DBLP:journals/corr/abs-1903-03003,DBLP:conf/icalp/Kuszmaul19} and $O(n^2m +m^2n)$~\cite{DBLP:journals/corr/abs-1903-03003}. In particular, an  $\tilde O(nm)$ time algorithm is only known to be possible if we are willing to settle for a $(1+\varepsilon)$-approximation~\cite{DBLP:conf/esa/XiK22}. It remained an open question whether it is possible to obtain an exact $\tilde O(nm)$ algorithm (which is optimal up to log factors). In this paper we answer this open question in the affirmative.

\para{Prior work on DTW.}   

The classical dynamic programming for DTW is as follows. Let 
 $\DTW(i,j) = \DTW(S[1\ldots i], T[1\ldots j])$, then $\DTW(0,0) = 0$, $\DTW(i,0)=\DTW(0,j) = \infty$ for every $i>0$ and $j>0$, and otherwise:
\begin{equation}\label{eq:dtw}
\DTW(i,j)=   \delta(S[i],T[j]) + \min\begin{cases} 
\DTW(i-1,j)
\\
\DTW(i,j-1) 
\\
\DTW(i-1,j-1)
\end{cases}
\end{equation}

The above dynamic programming is equivalent to a single-source shortest path (SSSP) computation in the following grid graph. We denote $[n] = \{1,2,\ldots,n\}$.
\begin{definition}[The Alignment Graph]
The alignment graph of $S$ and $T$ is a directed weighted graph $G$ with vertices $V = [0\ldots N] \times [0\ldots M]$.
Every vertex $(i,j)\in [N]\times[M]$ 
has three entering edges, all with weight $\delta(S[i],T[j])$: A vertical edge from  $(i-1,j)$, a horizontal edge from  $(i,j-1)$, and a diagonal edge from $(i-1,j-1)$. 
\end{definition}

We denote the distance from vertex $(0,0)$ to $(i,j)$ as $\dist(i,j)$.\footnote{Abusing notation, we will later also use $\dist((x,y), (x',y'))$ to denote the distance from vertex $(x,y)$ to vertex $(x',y')$.} 
 Clearly, $\DTW(i,j) = \dist(i,j) $. Therefore, $\DTW(S,T)= \dist(N,M)$ and can be computed in $O(MN)$ time by an SSSP algorithm (that explicitly computes the distances from $(0,0)$ to all the $O(MN)$ vertices of the graph). The way to beat $O(MN)$ is to only compute distances to a subset of vertices. 

Namely, partition the alignment graph into {\em blocks} where each block is the subgraph corresponding to a single run in $S$ and a single run in $T$. Then, proceed block-by-block and for each block compute its {\em output} (the last row and last column) given its {\em input} (the last row of the block above and the last column of the block to the left). Since blocks are highly regular (i.e., all edges inside a block have the same weight), it is not difficult to compute the output in time linear in the size of the output. Since the total size of all outputs (and all inputs) is $O(Nm+Mn)$, this leads to an overall   $O(Nm+Mn)$ time algorithm~\cite{rle1,DBLP:journals/corr/abs-1903-03003,DBLP:conf/icalp/Kuszmaul19}. 
 
 In order to go below $O(Nm+Mn)$, in~\cite{DBLP:journals/corr/abs-1903-03003} it was observed that we do not really need to compute the entire output. It suffices to compute only the intersection of the output with a set of $O(mn)$ diagonals. Specifically, each block contributes one diagonal starting in its top-left corner, so there are overall $O(mn)$ diagonals and each diagonal intersects with $O(m+n)$ blocks. This leads to an $O(n^2m +m^2n)$ time algorithm. 
Finally, in~\cite{DBLP:conf/esa/XiK22} it was shown that if we are willing to settle for a $(1+\varepsilon)$-approximation, then it suffices to compute only $\tilde O(1)$ output values. 

\para{Prior work on Edit distance.}
There are many similarities between DTW and the edit distance problem: (1)  
like DTW, edit distance can be computed
in $O(MN)$ time using the alignment graph \cite{vintzyuk1968,
  needleman1970general}. The only difference is in the edge-weights. (2) 
like DTW, edit distance has a lower bound prohibiting strongly
subquadratic time algorithms conditioned on the SETH \cite{DTWhard,
  DTWhard2,DBLP:conf/icalp/Kuszmaul19}, and (3) like DTW, edit distance can be computed in $O(Nm+Mn)$ time by proceeding block-by-block and computing the outputs from the inputs. However, unlike DTW, it is known how to compute the
edit distance of run-length encoded strings in $\tilde{O}(nm)$ time
\cite{rle1, rle2, rle3, rle4, rle5, rle6, rle7, rle8, rle9}. Specifically, Clifford et. al.~\cite{rle1} showed that the input and output of a block can be implicitly represented by a piecewise linear function, and, that the representation of the output can be computed in amortized $O(\polylog (mn))$ time from the representation of the input. This implies an $\tilde{O}(nm)$ time algorithm for edit distance. 

In \cite{DBLP:conf/esa/XiK22}, Xi and Kuszmaul  write about the prospects of obtaining an $\tilde{O}(nm)$ time algorithm for DTW: 
``{\em Such an
algorithm would finally unify edit distance and DTW in the
run-length-encoded setting}''. 

\para{Our result and techniques.}
 We present an $\tilde O(nm)$ time algorithm for DTW. This is optimal up to logarithmic factors under the SETH. Our algorithm is independent of the alphabet size $|\Sigma|$ and of the function $\delta$. In fact, $\delta$ need not even satisfy the triangle inequality. 
 
 We follow the approach for edit distance by Clifford et. al.~\cite{rle1} of representing and manipulating inputs and outputs with a piecewise-linear function. However, the manipulation is more challenging for several reasons which were highlighted by Xi and Kuszmaul \cite{DBLP:conf/esa/XiK22}: (1) unlike edit distance, DTW does not satisfy the triangle inequality. 
  (2) we are interested in arbitrary cost functions $\delta$ for DTW, whereas the $\tilde{O}(nm)$ algorithm for edit distance~\cite{rle1} works only for Levenshtein distance (when  $\delta(\cdot,\cdot) \in \{0,1\}$). 
  (3) in the standard setting (i.e. not the run-length encoded setting)  
edit
distance actually reduces to DTW
\cite{DBLP:conf/icalp/Kuszmaul19}. 

In \cref{sec:DTWtoRanges}, we show  that the required manipulation of inputs and outputs naturally reduces to $O(nm)$ operations on a data structure that, given an array $A$ of size $M+N$ initialized to all zeros, supports the following range operations:
 
\begin{definition}[Range Operations]\label{def:rangeops}
\
\begin{itemize}
    \item $\Lookup(i)$ - return $A[i]$.
    \item $\AC(i,j,c)$ -  for every $k\in [i\ldots j]$, set $A[k] \leftarrow A[k]+c$.
    \item $\AG(i,j,g)$ - for every $k\in [i\ldots j]$, set $A[k] \leftarrow A[k]+ k \cdot g$.
    \item $\LLW(i,j,\alpha)$ - for every $k\in [i\ldots j]$, set $A[k] \leftarrow \min_{t \in [i\ldots k]} \big(A[t] + (k-t)\alpha \big)$. 
    \item $\RLW(i,j,\alpha)$ - for every $k\in [i\ldots j]$, set  $A[k] \leftarrow \min_{t \in [k\ldots j]} \big(A[t] + (t-k)\alpha \big)$.
\end{itemize}

\end{definition}

In \cref{sec:IntervalDS}, we show our main technical contribution:  

\begin{theorem}\label{t:mainds}
Performing $s$ range operations of \cref{def:rangeops} can be done in amortized $O(\polylog (s))$ time per operation.
\end{theorem}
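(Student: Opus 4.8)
The plan is to maintain the array $A$ not explicitly but as a piecewise-linear function $k\mapsto A[k]$ on $[1\ldots M+N]$, represented by the sorted sequence of its \emph{breakpoints} — the indices where the slope changes, together with the (finitely many) jump discontinuities. I would keep this sequence in a balanced search tree supporting split and concatenation in $O(\log s)$ time (a weight-balanced tree, treap, or $2$-$3$ tree), where every node carries lazy tags and every subtree stores aggregates. Two lazy tags suffice for the affine operations: ``add a constant to the whole subtree'' and ``add the linear function $k\mapsto k\cdot g$ to the whole subtree''; these compose associatively, so $\AC(i,j,c)$ and $\AG(i,j,g)$ are implemented by splitting out the subtree for $[i\ldots j]$, pushing one tag onto its root, and concatenating back, each creating only $O(1)$ new breakpoints (at the two ends of the range). $\Lookup(i)$ is a root-to-leaf descent accumulating lazy tags, in $O(\log s)$ time. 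Hence the whole difficulty is concentrated in $\LLW$ and $\RLW$, which are mirror images of each other (reverse the index axis), so I focus on $\LLW$.

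The second step is to recognize that $\LLW(i,j,\alpha)$ is a prefix-minimum in disguise. Writing $B[k]=A[k]-k\alpha$ (just another lazy re-parametrization — push a slope tag $-\alpha$ onto the range), the operation replaces $A[k]$ by $\big(\min_{t\in[i\ldots k]}B[t]\big)+k\alpha$, i.e.\ it is the left-to-right running minimum on $B$. Thus after the operation $B$ restricted to $[i\ldots j]$ becomes its non-increasing piecewise-linear ``lower-left envelope'': it coincides with the old $B$ on the pieces that are left-to-right minima (equivalently, where the slope of $A$ is $\le\alpha$ and decreasing fast enough) and is flat at the current running-minimum level on the remaining pieces (slope exactly $\alpha$ in $A$). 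I would implement this by a left-to-right sweep over the subtree of $[i\ldots j]$ that alternates two $O(\log s)$-time steps: (a) using a subtree ``minimum slope'' aggregate, locate the maximal prefix on which $B$ is non-increasing and leave it untouched; (b) at the first piece where $B$ turns upward, read off the current running-minimum level $v$, then use a subtree ``minimum value'' aggregate to locate by a search/ray-shooting query the first later index at which $B$ returns to level $v$, split that piece there, discard in one shot the whole subtree strictly between, splice in a single flat segment at level $v$, and continue from the recovery point.

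The third step — and the one I expect to be the real obstacle — is the amortized analysis. I would use a potential $\Phi=c\log s$ times the current number of breakpoints, so that $\Lookup$, $\AC$, $\AG$ are $O(\log s)$ amortized immediately. For $\LLW$ each step of type (b) does $O(\log s)$ actual work, deletes an entire subtree at once, and introduces only one new breakpoint region; the crux is to show that a single $\LLW$ costs $O(\log s)$ \emph{amortized} regardless of how many pieces lie in its range, which forces a structural lemma roughly of the form ``a left linear wave never increases the number of breakpoints beyond an additive $O(1)$'' together with the quantitative statement that each type-(b) step destroys enough breakpoints for the drop in $\Phi$ to pay for its $O(\log s)$ work. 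Proving this requires a careful accounting of which pieces of $B$ survive as left-to-right minima, how many flat pieces can be interleaved between them, and how the sweep interacts with jump discontinuities (produced by $\AC$) and with range endpoints that fall strictly inside an existing piece; guaranteeing a \emph{strict} per-step decrease (not merely a non-negative net change) is the delicate point, and may require storing the inserted flat segments implicitly/shared rather than as one tree node apiece, so that a step that logically inserts many equal-slope pieces still costs only $O(\log s)$.
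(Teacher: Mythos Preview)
Your proposal correctly identifies the piecewise-linear representation and correctly reduces $\LLW$ to a running-minimum sweep, but the amortized analysis in your third step has a genuine gap that you yourself flag and do not close. Consider an array whose segments alternate between slope $<\alpha$ and slope $>\alpha$ (in $B$: alternately decreasing and increasing), forming $k$ small ``bumps''. Your sweep performs $k$ type-(b) steps; each one deletes exactly one up-segment and splices in exactly one flat segment, so the breakpoint count is unchanged and $\Phi$ does not drop. Your potential therefore cannot pay for the $\Theta(k\log s)$ actual work, and a single $\LLW$ costs $\Theta(k\log s)$ amortized. Your suggested escape, ``storing the inserted flat segments implicitly/shared'', does not work as stated: the $k$ flat segments sit at $k$ distinct heights and are interleaved with the $k$ surviving down-segments, so they cannot be represented by a single shared node.

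The paper's proof supplies exactly the missing mechanism. It distinguishes \emph{active} breakpoints (where the slope strictly increases) from passive ones, and observes that the problematic type-(b) steps are precisely the \emph{short} rays: rays whose slope $\alpha$ exceeds the slope of the mega-segment between consecutive active points, so the ray dies inside that mega-segment without deleting any active point. Long rays, by contrast, always delete an active point and can be charged to the decrease in the active-point count. The fix is to never execute short rays eagerly: instead, each active point $q_z$ stores a single pending slope $\rho_z$, and a $\LLW(i,j,\alpha)$ updates all pending slopes in range by $\rho_z\leftarrow\min(\rho_z,\alpha)$ in one shot. Crucially this range-$\min$ over the $\rho$ values, interleaved with the range-add coming from $\AG$, is itself a nontrivial data-structure problem (the paper's \emph{Add-min} structure, \cref{lem:addminds}), and the invariant that one pending ray per mega-segment suffices (\cref{inv:lazyrs}) requires its own proof. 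None of this machinery appears in your proposal; without it the amortization does not go through.
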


The proof of \cref{t:mainds} can be roughly described as follows: We represent the array $A$ by the line segments of the linear interpolation of $A$.
This way, the range operations of \cref{def:rangeops} translate to creating and deleting segments, changing their slopes, and shiftings segments up and down. For most operations, these changes apply to a single contiguous range of $A$ and are therefore quite simple to implement in polylog time. 
The difficult operations are $\LLW$ and $\RLW$. These operations may need to replace each of $\Omega(n)$ different sets of consecutive segments with a single new segment. We refer to the process of replacing  a set of consecutive segments with a single new segment as a {\em ray shooting} process.
Shooting each of these rays separately would be too costly.
More accurately, a ray shooting process that replaces many segments with a single one is not problematic since its cost can be charged to the decrease in the number of segments.
The challenge is in shooting rays that replace a single segment with another one, as this does not decrease the number of segments. 

Our main technical contribution is a sophisticated lazy approach for handling the problematic ray shooting processes. 
We study the structural properties of ray shooting processes, and characterize \longrange rays which we can afford to shoot explicitly, and \local rays, which we cannot. The structure we identify allows us to divide the segments representing $A$ into mega-segments, and keep track of a single pending \local ray in each mega-segment such that executing the pending ray shooting process in each mega-segment would result in the correct representation of the array $A$. While we cannot afford to actually carry out all of these pending ray shooting processes, we can afford to perform the process locally, e.g., in order to support $\Lookup$ for a specific element of $A$, or to facilitate the other range operations. 

One component of our lazy approach is a data structure (sometimes called {\em Segment tree beats} in programming olympiads) for the following problem: Maintain an array $A$ under lookup queries and two kinds of update: $\Add(i,j,c)$ - for every $k\in [i\ldots j]$ set $A[k] \leftarrow A[k]+c$, and $\Min(i,j,c)$ - for every $k\in [i\ldots j]$ set $A[k] \leftarrow \min\{A[k],c\}$. Though we are not aware of any official publication, it is known (see e.g.~\cite{beats}) that this problem can be solved in amortized polylog time. In \cref{sec:minadd} we show a different and {\em worst-case} polylog time solution.\footnote{We note that the solution in~\cite{beats} also supports range-sum queries and for such a conditional lower bound (from the Online Matrix-Vector Multiplication (OMV) problem) is  known~\cite{Private}. The lower bound implies that {\em worst-case} operations unlikely to be possible in $O(n^{1/2-\varepsilon})$ time. 
We are able to circumvent this lower bound because we only support lookups, but not range-sum queries. 
}

\section{DTW via Range Operations}
\label{sec:DTWtoRanges}

In this section we prove that \cref{t:mainds} implies an $\tilde O(nm)$ algorithm for DTW. Namely, that DTW reduces to efficiently supporting the range operations of \cref{def:rangeops}.

\para{Blocks in the alignment graph.}
Let $S[i_1 \ldots i_2]$ and $T[j_1 \ldots j_2]$ be the $i$'th run in $S$ and the $j$'th run in $T$ respectively. 
The block $B_{i,j}$ in the alignment graph is the set of vertices $(a,b)$ with $a\in [i_1 \ldots i_2]$ and $b\in [j_1 \ldots j_2]$.
All of the edges entering any vertex in block $B_{i,j}$ have the same weight $\delta(S[i_1],T[j_1])$, which we denote by $c_{B_{i,j}}$. 
We call the blocks $B_{i-1,j}$, $B_{i,j-1}$ and $B_{i-1,j-1}$ the {\em entering blocks} of $B_{i,j}$. The {\em input} of a block consists of all vertices belonging to the first row or first column of the block. The {\em output} of a block consists of all vertices belonging to the last row or last column of the block. The following structural lemma was also used implicitly in previous works (see formal proof in the appendix). 

\begin{lemma}
\label{lem:onlydiagonal}
Let $B$ be a block. 
\begin{itemize}
    \item If $(x,y),(x,y+1) \in B$ then there is a shortest path from $(0,0)$ to $(x+1,y+1)$ that does not visit $(x,y+1)$.
    \item If $(x,y),(x+1,y) \in B$ then there is a shortest path from $(0,0)$ to $(x+1,y+1)$ that does not visit $(x+1,y)$.
    \item If $(x,y),(x+1,y+1) \in B$ then there is a shortest path from $(0,0)$ to $(x+1,y+1)$ that goes through $(x,y)$.
\end{itemize}
\end{lemma}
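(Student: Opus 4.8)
The plan is to argue directly about the structure of the alignment graph, exploiting the fact that all edges entering vertices of a fixed block $B$ carry the same weight $c = c_B$. I would phrase the three items uniformly as ``exchange arguments'': start with an arbitrary shortest path $P$ from $(0,0)$ to $(x+1,y+1)$ and modify its last one or two steps without increasing its length, so that the modified path avoids the forbidden vertex (items 1 and 2) or passes through the prescribed vertex (item 3). The key observation to set up first is that, since $(x+1,y+1)\in B$ and its three in-edges all have weight $c$, the last edge of any shortest path into $(x+1,y+1)$ contributes exactly $c$; hence it suffices to compare the distances $\dist(x,y)$, $\dist(x,y+1)$, $\dist(x+1,y)$ among the three possible predecessors.

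For item 1: assume $(x,y),(x,y+1)\in B$. If the shortest path $P$ to $(x+1,y+1)$ does not enter through $(x,y+1)$ we are done, so suppose it does, i.e.\ $\dist(x+1,y+1)=\dist(x,y+1)+c$. Now I would look one step further back along $P$: the step into $(x,y+1)$ also has weight $c$ (because $(x,y+1)\in B$), and its predecessor is one of $(x-1,y+1),(x,y),(x-1,y)$. In each case I claim there is a shortest path to $(x+1,y+1)$ through $(x,y)$ instead: the point is that $(x,y)$ is reachable in one step from each of those same three candidate predecessors (or is itself one of them), with an edge of weight $c$ into $(x,y)$ since $(x,y)\in B$, so $\dist(x,y)\le \dist(x,y+1)$; then the diagonal edge $(x,y)\to(x+1,y+1)$ of weight $c$ gives a path of length $\dist(x,y)+c\le \dist(x,y+1)+c=\dist(x+1,y+1)$, which is therefore also shortest and avoids $(x,y+1)$. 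Item 2 is completely symmetric, swapping the roles of rows and columns: assume $(x,y),(x+1,y)\in B$ and reroute through $(x,y)$ via the diagonal edge using $\dist(x,y)\le \dist(x+1,y)$.

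For item 3: assume $(x,y),(x+1,y+1)\in B$, and additionally note $(x,y+1),(x+1,y)\in B$ as well (these lie between $(x,y)$ and $(x+1,y+1)$ in both coordinates, so they are in the same run of $S$ and the same run of $T$). Take any shortest path $P$ to $(x+1,y+1)$. Its last vertex before $(x+1,y+1)$ is one of $(x,y)$, $(x,y+1)$, $(x+1,y)$; if it is already $(x,y)$ we are done. If it is $(x,y+1)$, then by item 1 (applicable since $(x,y),(x,y+1)\in B$) there is a shortest path to $(x+1,y+1)$ avoiding $(x,y+1)$, hence entering through $(x,y)$ or $(x+1,y)$; iterating the argument (or invoking item 2 in the $(x+1,y)$ case) we reduce to entry through $(x,y)$. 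So the remaining case is entry through $(x+1,y)$: apply item 2 (valid since $(x,y),(x+1,y)\in B$) to get a shortest path avoiding $(x+1,y)$, which then must enter through $(x,y)$ or $(x,y+1)$, and again item 1 handles the $(x,y+1)$ subcase. In all cases we land on a shortest path through $(x,y)$.

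The main obstacle I anticipate is purely bookkeeping: making the case analysis over the predecessor of the forbidden vertex clean and non-circular, and being careful that when I ``reroute through $(x,y)$'' I actually have a valid edge of the grid (vertical, horizontal, or diagonal) from the chosen earlier vertex into $(x,y)$, and that that edge's weight is exactly $c$ because its head lies in $B$. One subtlety worth stating explicitly up front is that whenever a vertex $(a,b)$ lies strictly between two vertices of $B$ in both coordinates it is itself in $B$ (runs are contiguous), which is what guarantees the weight-$c$ property for all the intermediate edges used in the exchange; once that is isolated as a preliminary remark, the rest is a short mechanical verification.
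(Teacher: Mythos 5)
Your proof of item 1 has a genuine gap in the case where the predecessor of $(x,y+1)$ on the shortest path $P$ is the \emph{vertical} predecessor $(x-1,y+1)$. You claim that ``$(x,y)$ is reachable in one step from each of those same three candidate predecessors,'' but there is no edge from $(x-1,y+1)$ to $(x,y)$: edges in the alignment graph never decrease a coordinate, so the second coordinate cannot drop from $y+1$ back to $y$. Your exchange works fine when the predecessor is $(x,y)$ itself or $(x-1,y)$ (horizontal or diagonal), but for a vertical entry into $(x,y+1)$ the one-step-back comparison gives you nothing. To make this rigorous you would have to recurse/induct on $x$ (i.e., reduce $\dist(x,y)\le\dist(x,y+1)$ to $\dist(x-1,y)\le\dist(x-1,y+1)$, noting that $(x-1,y)$ and $(x-1,y+1)$ also lie in a common block, possibly a different one), but you have not set up any such induction.

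The paper sidesteps this entirely by a \emph{global} exchange: it proves the monotonicity $\dist(x,y)\le\dist(x,y+1)$ by taking a shortest path $P$ to $(x,y+1)$, letting $(x',y)$ be the \emph{last} vertex of $P$ with second coordinate $y$ (this jumps over the whole vertical suffix in one stroke), observing that the suffix of $P$ after $(x',y)$ can be taken to be a single diagonal edge followed by vertical edges, and then replacing that suffix by $x-x'$ vertical edges to reach $(x,y)$. Both suffixes use the same number of edges, and since every intermediate vertex lies in $B$ all those edge weights are equal, so the rerouted path is no longer. This is the idea your argument is missing: you must look back to where the path left row $y$, not just one step back. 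Once the monotonicity is in hand, your rerouting of the final step through the diagonal into $(x+1,y+1)$ is correct and matches the paper. Your derivation of item 3 from items 1 and 2 is also fine in spirit, though it is cleaner to note (as you essentially do in passing) that your proof of item 1 already \emph{constructs} a shortest path entering via the diagonal $(x,y)\to(x+1,y+1)$, so item 3 follows without iterating.
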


\para{Frontiers in the alignment graph.}
Our algorithm for DTW processes all blocks in the alignment graph. At every step, the algorithm can processes any block $B$ as long as all its entering blocks have already been processed.
When block $B$ is processed, the algorithm computes $\dist(x,y)$ for every output vertex $(x,y)$ of $B$.
After processing block $B$, we say that the output vertices of $B$ are {\em resolved}.
At every step of the algorithm, the {\em frontier} is the set of resolved vertices with an outgoing edge to a block that was not yet processed. 
Observe that, at any given time in the execution of the algorithm, for every value $d \in [-N \ldots M]$, the frontier includes exactly one vertex $(x,y)$ such that $y-x = d$. 
At every step $t$ of the algorithm, we will maintain an array $\F_t[-N \ldots M]$ where $F_t[d] = \dist(x,y)$ such that vertex $(x,y)$ belongs to the current frontier and $y-x = d$. 
The main result of this section is the following lemma:
\begin{lemma}\label{l:outputsfromoutputs} 
 $F_{t+1}$ can be obtained by using $O(1)$ range operations (\cref{def:rangeops}) on $F_{t}$. 
\end{lemma}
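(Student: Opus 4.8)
The plan is to show that a single step of the algorithm --- processing one block of the alignment graph --- can be carried out by a constant number of the range operations of \cref{def:rangeops} on a contiguous sub-array of $F_t$.

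\emph{Setup.} Let $B = B_{i,j}$ be the block processed at step $t$; let $p$ be the length of the $i$'th run of $S$, $q$ the length of the $j$'th run of $T$, and $c = c_B$ the common weight of all edges of $B$. Writing $d_0$ for the diagonal index of the top-left corner of $B$, the vertices with an edge into $B$ --- the last row of $B_{i-1,j}$, the last column of $B_{i,j-1}$, and the corner inherited from $B_{i-1,j-1}$ --- lie one per diagonal on the diagonals of the window $W = [d_0 - p \ldots d_0 + q]$, and just before step $t$ these are exactly the frontier vertices on the diagonals of $W$; symmetrically, the output vertices of $B$ lie one per diagonal on $[d_0 - p + 1 \ldots d_0 + q - 1]$, and just after step $t$ these become the frontier vertices on those diagonals, while the frontier vertex on every other diagonal is unchanged. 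Thus all operations will be confined to $W$, will read only from $W$, and must leave $F_t$ untouched outside the interior of $W$ (the two endpoint diagonals of $W$ also stay fixed, up to trivial degenerate cases at the border of the alignment graph).

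\emph{The intra-block recurrence.} First I would prove, using \cref{lem:onlydiagonal}, that for an output vertex $v$ of $B$,
\[
F_{t+1}[\text{diag}(v)] \;=\; \min_{u}\Big( F_t[\text{diag}(u)] + c\cdot \max\big(\text{row}(v) - \text{row}(u),\ \text{col}(v) - \text{col}(u)\big) \Big),
\]
where $u$ ranges over the vertices with an edge into $B$ satisfying $\text{row}(u) \le \text{row}(v)$ and $\text{col}(u) \le \text{col}(v)$. Indeed, a shortest $(0,0)\to v$ path crosses into $B$ at some such vertex $u$, its prefix up to $u$ is itself a shortest path and has length $F_t[\text{diag}(u)] = \dist(u)$ (as $u$ is on the frontier before step $t$), and from $u$ onward the path stays inside $B$, where every edge costs $c$; by \cref{lem:onlydiagonal} a cheapest continuation inside $B$ uses as many diagonal edges as possible and hence consists of exactly $\max(\text{row}(v) - \text{row}(u),\ \text{col}(v) - \text{col}(u))$ edges.

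\emph{Reduction to range operations.} Next I would pass to the diagonal coordinate. Writing $d = \text{diag}(v)$ and $d' = \text{diag}(u)$ and substituting the (piecewise-linear) coordinates of $u$ and $v$ as functions of $d',d$, the term $c\cdot\max(\text{row}(v) - \text{row}(u), \text{col}(v) - \text{col}(u))$ splits into four regimes according to whether $u$ lies on the top row or the left column of the set of vertices entering $B$, and whether $v$ lies on the bottom row or the right column of $B$. In each regime the contribution to $F_{t+1}[d]$ is, for $d$ in a sub-interval of the output window and minimized over $d'$ in a sub-interval of $W$, of one of the forms $F_t[d'] + c\max(0, d - d') + \kappa$, or $F_t[d'] + c\max(0, d' - d) + \kappa$, or $F_t[d'] + c\max(d, d') + \kappa$, or $F_t[d'] - c\min(d, d') + \kappa$, where $\kappa$ is a constant depending only on $p, q, c, d_0$. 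Each of these four forms is produced by a short fixed routine of range operations on $W$: an $\AC$ injects $\kappa$; for the $\max(d,d')$ and $\min(d,d')$ forms an $\AG$ of slope $\pm c$ first rewrites the non-diagonal part into a form a wave can carry; and a single $\LLW(\cdot,\cdot,c)$ (respectively $\RLW(\cdot,\cdot,c)$) propagates the $+c(d-d')$ (respectively $+c(d'-d)$) part of the cost along the diagonal axis. Because $\LLW$ and $\RLW$ only ever decrease entries and never alter the left (respectively right) endpoint of their range, running the four routines over the correct sub-intervals and in the correct order makes the four regime contributions combine as a single running minimum and keeps the endpoint diagonals of $W$ fixed. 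Altogether $O(1)$ operations turn $F_t$ into $F_{t+1}$, as claimed.

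\emph{Main obstacle.} The hard part is precisely this last matching: checking that one fixed constant-length sequence of $\AC$/$\AG$/$\LLW$/$\RLW$ calls --- with the sub-intervals, constants and order chosen as above --- simultaneously realizes all four regimes, correctly handles the interaction at the corner of $B$ and the degenerate blocks $p = 1$ or $q = 1$, and in particular never lets one regime leak a value onto a diagonal it must not influence. This last point is exactly where the feasibility constraints $\text{row}(u) \le \text{row}(v)$, $\text{col}(u) \le \text{col}(v)$ come in, and where the choice of which endpoint of each $\LLW$/$\RLW$ range to freeze matters. Everything else is the routine coordinate bookkeeping that converts the geometric picture of \cref{lem:onlydiagonal} into the four algebraic forms above.
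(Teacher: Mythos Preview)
Your intra-block recurrence is correct, and you have rightly identified that everything reduces to realising it by $O(1)$ in-place range operations. But the four-regime plan, as stated, does not go through, and the ``main obstacle'' you flag is not bookkeeping --- it is the whole proof. The routines you sketch are destructive and interfere: to handle a form like $F_t[d']+c\max(d,d')$ you propose applying an $\AG$ before a wave, but $\AG$ overwrites the array, so the next regime no longer sees $F_t$. Your remark that $\LLW$/$\RLW$ ``only decrease'' is true but irrelevant once an $\AG$ (or an $\AC$ with a regime-specific constant such as $cp$ versus $cq$) has been applied in between. And the regimes genuinely overlap on the array --- a single output diagonal receives contributions from both top-row and left-column entering vertices --- so you cannot isolate them to disjoint sub-intervals. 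You have not produced a concrete sequence of calls, and the argument you give for why an unspecified sequence should work is incorrect.

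The paper avoids all of this by a two-phase decomposition you are missing. Phase~I replaces $F_t$ on the window by the array $\F$ of $\dist$-values to the \emph{input} vertices of $B$ (its first row and first column): the top-row inputs are computed by a single $\LLW$ on $[z\ldots b]$ and the left-column inputs by a single $\RLW$ on $[a\ldots z]$, after one $\AC$ at the corner diagonal $z$; these two waves act on intervals meeting only at $z$, so there is no interference. Phase~II then uses the key consequence of \cref{lem:onlydiagonal} that the shortest path to any output vertex passes through the input vertex \emph{on the same diagonal}; hence $F_{t+1}[d]=\F[d]+c\cdot(\text{diagonal steps from input to output on }d)$, a fixed piecewise-linear function of $d$ implemented by three $\AC$'s and two $\AG$'s with no further minimisation. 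The decomposition ``one min to reach the boundary of $B$, then a purely additive walk along the diagonal inside $B$'' is what collapses your four interacting regimes into two non-interacting waves plus affine corrections.
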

  
  Before we prove \cref{l:outputsfromoutputs}, we prove that it implies our main result:
  
\begin{figure}[htb]
  \begin{center}
 \includegraphics[scale=0.8]{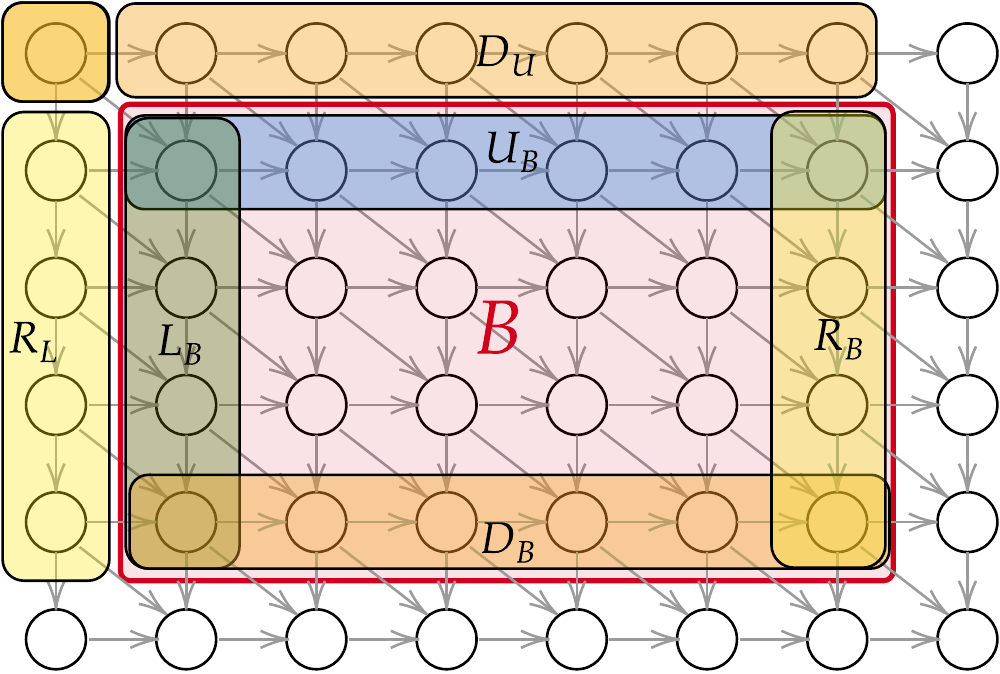} 
  \caption{A block $B$, its inputs $L_B\cup U_B$ and its outputs $D_B\cup R_B$. 
  The entering edges to $L_B$ are from $R_L$, the corner of $C$, and the leftmost node of $D_U$.
  The entering edges to $U_B$ are from $D_U$, the corner of $C$, and the topmost node of $R_L$.  \label{fig:BlockSketch}}
   \end{center}
\end{figure}

\begin{theorem}\label{t:maindtw}
The Dynamic Time Warping distance of two run-length encoded strings $S$ and $T$ with $n$ and $m$ runs respectively can be computed in $\Tilde{O}(nm)$ time.
\end{theorem}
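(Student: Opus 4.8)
The plan is to combine \cref{l:outputsfromoutputs} and \cref{t:mainds} essentially verbatim; the only real work is the accounting. I would keep a single instance of the data structure of \cref{t:mainds} on an array indexed by the diagonal values $[-N\ldots M]$, maintaining the invariant that after the first $t$ blocks have been processed this array stores exactly the frontier array $\F_t$. To initialize it, I would treat the zeroth row and zeroth column of the alignment graph as already resolved, with $\dist(0,0)=0$ and $\dist(i,0)=\dist(0,j)=+\infty$ for $i,j>0$; then the frontier vertex on diagonal $d$ is $(0,d)$ for $d\ge 0$ and $(-d,0)$ for $d<0$, so $\F_0[0]=0$ and $\F_0[d]=+\infty$ otherwise. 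Starting from the all-zero array, this is realized by the two range operations $\AC(-N,-1,+\infty)$ and $\AC(1,M,+\infty)$, where $+\infty$ stands for any sentinel larger than the largest finite distance in the alignment graph (e.g.\ $1+NM\cdot\max_{\sigma,\tau\in\Sigma}\delta(\sigma,\tau)$) and small enough to avoid overflow; in particular the inputs entering the first row and first column of blocks are already encoded in $\F_0$.

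Next I would fix a topological order of the block DAG --- for concreteness the lexicographic order on the pairs $(i,j)$, under which every block $B_{i,j}$ indeed comes after its entering blocks $B_{i-1,j},B_{i,j-1},B_{i-1,j-1}$ --- and precompute in $O(n+m)$ time the prefix sums of the run lengths of $S$ and of $T$, so that for every block $B_{i,j}$ the spanned rows $[i_1\ldots i_2]$, the spanned columns $[j_1\ldots j_2]$, and the weight $c_{B_{i,j}}=\delta(S[i_1],T[j_1])$ are available in $O(1)$ time. Then I would process the $nm$ blocks in this order: when block $B$ is processed at step $t$, \cref{l:outputsfromoutputs} lets me turn the stored representation of $\F_t$ into that of $\F_{t+1}$ using $O(1)$ range operations, whose parameters (endpoints inside the affected diagonal window $[j_1-i_2\ldots j_2-i_1]$, together with the weight $c_B$) are each computed in $O(1)$ time from the precomputed data. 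After all $nm$ blocks have been processed, the frontier contains the sink $(N,M)$, which lies on diagonal $M-N$, so the answer is obtained by a single query $\Lookup(M-N)=\F_{\mathrm{final}}[M-N]=\dist(N,M)=\DTW(S,T)$.

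Correctness is immediate: the initialization sets $\F_0$ correctly, every invocation of \cref{l:outputsfromoutputs} (applied in an order that respects the DAG, hence always to a correct $\F_t$) maintains the invariant, and $\DTW(S,T)=\dist(N,M)$ by \cref{eq:dtw}. For the running time, the algorithm issues $O(nm)$ range operations altogether --- two at initialization, $O(1)$ per block, and one final lookup --- together with $O(nm)$ units of elementary bookkeeping. Instantiating \cref{t:mainds} with $s=O(nm)$, each range operation takes amortized $O(\polylog(nm))$ time, so the total is $O(nm\cdot\polylog(nm))=\tilde{O}(nm)$. I do not expect any genuine obstacle in this step: the only points needing a moment's care are the choice of the $+\infty$ sentinel and checking that the chosen processing order always feeds \cref{l:outputsfromoutputs} a correct $\F_t$ (both routine), while the real weight of the result sits in \cref{l:outputsfromoutputs} and, above all, \cref{t:mainds}.
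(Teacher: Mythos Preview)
Your proposal is correct and follows essentially the same approach as the paper's own proof: initialize the array with two $\AC$ calls to encode $\F_0$, process the $nm$ blocks in a topological order using \cref{l:outputsfromoutputs} for $O(1)$ range operations per block, and return $\Lookup(M-N)$, appealing to \cref{t:mainds} for the $\tilde O(nm)$ total time. The only differences are cosmetic (you spell out the sentinel for $+\infty$, the lexicographic block order, and the prefix-sum precomputation), none of which changes the argument.
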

\begin{proof}
We initialize the data structure of \cref{t:mainds} as an array of length $N+M+1$. 
We treat the indices of $A$ as if they are in $[-N \ldots M]$\footnote{When a gradient update $\AG(i,j,c)$ affects a value $A[k]$, we would like $A[k]$ to be increased by $k\cdot c$ with $k \in [-N \ldots M]$ being the 'simulated' index rather then the actual index $k+ N + 1$. This can be achieved by applying an additional operation $\AC(i,j, (-N-1) \cdot c)$.}.
Initially, the frontier consists of the vertices $(x,0)$ with $x\in [0 \ldots N]$ and $(0,y)$ with $y\in [M]$.
We start by turning $A$ into $\F_0$.
According to \cref{eq:dtw}, we need to set $A[0] = 0$ and $A[i] = \infty$ for $i\neq 0$.
This can be done by applying $\AC(1,M, \infty)$ and $\AC(-N,-1,\infty)$. 

The algorithm runs in $nm$ iterations.
At the beginning of iteration $t$, we have $A = \F_t$.
The algorithm picks any block $B$ whose entering blocks have already been processed, and applies $O(1)$ range operations (due to \cref{l:outputsfromoutputs}) on $A$ in order to obtain $A = \F_{t+1}$.
After the last iteration, it is guaranteed that the block $B_{n,m}$ has been processed. 
Therefore, $\F_{nm}[M-N]=\DTW(S,T)$.
Every iteration requires $O(1)$ range operations each in $O(\polylog(nm))$ time, so overall the algorithm performs $O(nm)$ operations in total $\Tilde{O}(nm)$ time.
\end{proof}

In the remainder of this section, we prove \cref{l:outputsfromoutputs}.        

\para{Overview.}
We obtain $\F_{t+1}$ from $F_{t}$ in two phases. Let $B$ be the block processed at step $t$, and suppose $B$ corresponds to runs $S[i_1\ldots i_2]$ and $T[j_1 \ldots j_2]$. Then $\F_t$ and $\F_{t+1}$ differ only in the range $[a\ldots b]$ where   $a= j_1-i_2$ and $b=j_2-i_1$ (see \cref{fig:pathsinblocks}).
In phase I we apply a sequence of range operations on $\F_{t}$ in order to obtain $\F$, which is defined to be identical to $\F_{t}$ except that the inputs of $B$ replace the corresponding outputs of $B$'s entering blocks.
 Formally,  
 $\F[d] = \F_t[d]$ for every $d \notin [a\ldots b]$, and for $d\in [a\ldots b]$, $\F[d] = \dist(x,y)$ where $(x,y)$ is the input node of $B$ with $y-x = d$. 
In phase II we apply another sequence of range operations on $\F$ to obtain $\F_{t+1}$, which is identical to $\F$ except that the inputs of $B$ are replaced by the outputs of $B$.  

The height of $B$ is denoted $h=i_2-i_1+1$ and the width of $B$ is $w=j_2-j_1+1$.
We denote the first row of $B$ as 
$U_B$, the first column of $B$  as 
$L_B$, the last row of $B$ as 
$D_B$, and the last column of $B$ as 
$R_B$ (see \cref{fig:BlockSketch}). 
We note that the input nodes of $B$ are $L_B\cup U_B$ and the output nodes are $D_B\cup R_B$.
We denote the entering blocks of $B$ as  $L=B_{i,j-1}$, $C=B_{i-1,j-1}$ and $U=B_{i-1,j}$.
We define $R_L$ as the last column of $L$, and $D_U$ as the last row of $U$.
Notice that the values of $\dist(x,y)$ for vertices of $R_L$ are stored in $\F_t[a-1\ldots a+h-2$], and the values of $\dist(x,y)$ for vertices of $D_U$ are stored in $\F_t[b-w+2\ldots b+1$] (see \cref{fig:pathsinblocks}).

\para{Phase I - computing $\F$ from $\F_t$.}
We begin by computing $\dist(i_1,j_1)$. 
Recall that $\dist(i_1-1,j_1-1)$ is stored in $\F_t[z]$ where $z=a+h-1=b-w+1=j_1-i_1$.
By \cref{eq:dtw}, we have $\dist(i_1,j_1)=c_B+\min\{\F_t[z-1],\F_t[z],\F_t[z+1]\}$.
Let $\Tilde{\F}_t$ be $\F_t$ with the assignment $\Tilde{\F}_t[z] \leftarrow \dist(i_1,j_1) - c_B$. 
The definition of $\Tilde{\F}$ is motivated by the following lemma.
\begin{lemma}\label{c:BotToTopFormula}
For every $k\in [z \ldots b]$, $\F[k] =c_B + \min_{i\in [z\ldots k]}\big( \Tilde\F_t[i] + (k-i)c_B \big)$.
\end{lemma}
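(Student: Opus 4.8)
The plan is to first reinterpret the left-hand side as a shortest-path distance in the alignment graph, and then establish the formula by induction on $k$, using \cref{lem:onlydiagonal} to simplify the recurrence \cref{eq:dtw}.

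First I would note that for every $k\in[z\ldots b]$ the input node of $B$ on diagonal $k$ is the first-row node $(i_1,i_1+k)\in U_B$: the diagonals of $U_B$ are exactly $[z\ldots b]$, those of $L_B$ are $[a\ldots z]$, and the two agree at the shared corner $k=z$. Hence $\F[k]=\dist(i_1,i_1+k)$, and it is convenient to abbreviate $D(k):=\dist(i_1,i_1+k)$. The base case $k=z$ is immediate from the definitions: $D(z)=\dist(i_1,j_1)$, which has already been shown to equal $c_B+\min\{\F_t[z-1],\F_t[z],\F_t[z+1]\}$, and $\Tilde{\F}_t[z]$ is defined to be exactly $D(z)-c_B$; so $D(z)=c_B+\Tilde{\F}_t[z]=c_B+\min_{i\in[z\ldots z]}\big(\Tilde{\F}_t[i]+(z-i)c_B\big)$.

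For the inductive step I would take $k\in[z+1\ldots b]$ and expand $D(k)=\dist(i_1,i_1+k)$ via \cref{eq:dtw}. The column $i_1+k$ lies in the $j$-th run of $T$ (since $z<k\le b$), so the entering edges of $(i_1,i_1+k)$ all have weight $c_B$, and its three predecessors are: the horizontal predecessor $(i_1,i_1+k-1)$, with $\dist(i_1,i_1+k-1)=D(k-1)$; the diagonal predecessor $(i_1-1,i_1+k-1)$, which is the node of $D_U$ on diagonal $k$, so $\dist(i_1-1,i_1+k-1)=\F_t[k]=\Tilde{\F}_t[k]$; and the vertical predecessor $(i_1-1,i_1+k)$, which is the node of $D_U$ on diagonal $k+1$, so $\dist(i_1-1,i_1+k)=\F_t[k+1]$. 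The crucial point is that the vertical term is redundant: since both $(i_1-1,i_1+k-1)$ and $(i_1-1,i_1+k)$ belong to the block $U$, the first bullet of \cref{lem:onlydiagonal} (applied to $U$ with $x=i_1-1$, $y=i_1+k-1$) provides a shortest path from $(0,0)$ to $(i_1,i_1+k)$ that avoids $(i_1-1,i_1+k)$ and therefore enters $(i_1,i_1+k)$ horizontally or diagonally. Consequently $D(k)=c_B+\min\{D(k-1),\Tilde{\F}_t[k]\}$. Substituting the inductive hypothesis for $D(k-1)$ and absorbing the extra additive $c_B$ into each $(k-1-i)c_B$ term turns this into $D(k)=c_B+\min_{i\in[z\ldots k]}\big(\Tilde{\F}_t[i]+(k-i)c_B\big)$, which closes the induction.

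I expect the main obstacle to be the step that discards the vertical predecessor. Beyond invoking \cref{lem:onlydiagonal} correctly, one must verify that the two relevant nodes of $D_U$ genuinely lie inside the block $U$---this is where the inequalities $z<k\le b$ are used---and handle the degenerate case in which $U$ does not exist (i.e.\ $B$ lies in the top row of blocks, $i=1$): there the corresponding entries of $\F_t$ are $\infty$ by the initialization of the frontier, so the vertical term drops out vacuously and no appeal to \cref{lem:onlydiagonal} is needed. The rest is bookkeeping: keeping the diagonal arithmetic ($z=j_1-i_1$, $a=j_1-i_2$, $b=j_2-i_1$, and the placement of $R_L$, $D_U$, and the corner of $C$ inside $\F_t$) consistent, and the one-line telescoping of the minimum.
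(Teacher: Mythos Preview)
Your proof is correct and follows essentially the same approach as the paper: induction on $k$, with the base case handled by the definition of $\Tilde\F_t[z]$, and the inductive step reducing the three-way recurrence \cref{eq:dtw} to a two-way minimum (diagonal vs.\ horizontal) by invoking \cref{lem:onlydiagonal} to discard the vertical predecessor. The paper's proof is terser---it does not spell out which block and which bullet of \cref{lem:onlydiagonal} are being used, nor does it mention the degenerate case $i=1$---but the argument is the same.
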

\begin{proof}
For $k \in [z\ldots b]$, let $y =  k - z$.
Note that $\F[k]$ should be assigned $\dist(i_1,j_1 + y)$.
We prove the claim by induction on $y$.
For $y=0$, we need to prove that $\F[z]= \dist(i_1,j_1) = c_B +  \Tilde\F_t[z]$. This follows from the fact that $\Tilde{\F}_t[z] = \dist(i_1,j_1) - c_B$.
For the inductive step, we need to show that $\F[k] = \dist(i_1,j_1 + y) = \min_{i\in [z\ldots k]}\big( \Tilde\F_t[i] + (k-i)c_B \big) + c_B$.

By \cref{lem:onlydiagonal}, there are two options:
(i) The shortest path to $(i_1,j_1 + y)$ goes diagonally through $(i_1 - 1,j_1 + y -1)$. 
	Then, its length is $\dist(i_1 -1 , j_1 + y -1) + c_B= \Tilde\F_t[k] + c_B$ from the definition of $\Tilde\F_t$.
(ii) The shortest path to $(i_1,j_1 + y)$ goes horizontally through $(i_1, j_1 + y -1)$, and it follows that is length is $c_B + \F[k-1]$.
Then, by the induction hypothesis, the length of this path is
$$ \left (  c_B + \min_{i\in [z\ldots k-1]}\big( \Tilde\F_t[i] + ((k-1) - i)c_B \big)\right ) + c_B =  \min_{i\in [z\ldots k-1]}\big( \Tilde\F_t[i] + (k-i)c_B \big) + c_B. $$
Taking the minimum between (i) and (ii) yields the lemma. \qedhere
\end{proof}

It directly follows from \cref{c:BotToTopFormula} that $\F_t[z\ldots b]$ can be turned into $\F[z\ldots b]$ by applying the following range operations, in order:
\begin{enumerate}
    \item $\AC(z,z, \dist(i_1,j_1) - c_B - \F_t[z] )$.
    \item $\LLW(z,b,c_B)$.
    \item $\AC(z,b,c_B)$.
\end{enumerate}
The first operation turns $\F_t$ into $\Tilde\F_t$ and the other two operations turn $\Tilde\F_t[z\ldots b]$ into $\F$ by applying the formula given in \cref{c:BotToTopFormula}.
In a similar way, we can prove that $\F[a\ldots z]$ can be obtained from $\F_t$. This time,  
using $\RLW(i,j,c)$.

\para{Phase II - computing $\F_{t+1}$ from $\F$.} 

The following lemma will show that $\F_{t+1}$ can be obtained by applying $O(1)$ range operations on $\F$.
 
\begin{lemma}	
\label{cor:inputstooutputsformula}
Let $d_1 = j_1 - i_1$ and $d_2 = j_2 - i_2$. For every $d\in [a\ldots b]$:
\[
\F_{t+1}[d]=   \begin{cases} 
\F[d] + (d-a)c_B, &  \text{if } d\in [a\ldots \min(d_1,d_2)) 
\\
\F[d] + \min(w,h)c_B, &  \text{if } d\in [\min(d_1,d_2) \ldots \max(d_1,d_2)]
\\
\F[d] + (b-d)c_B, & \text{if } d\in (\max(d_1,d_2) \ldots b]
\end{cases}
\]
\end{lemma}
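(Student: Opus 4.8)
The plan is to mimic the structure of the proof of \cref{c:BotToTopFormula}, but now tracking shortest paths from the input nodes of $B$ (the cells $L_B \cup U_B$, whose distances are stored in $\F$) to the output nodes of $B$ (the cells $D_B \cup R_B$). Fix an output node $(x,y)$ of $B$ with $y - x = d$, so $\F_{t+1}[d]$ must equal $\dist(x,y)$. By \cref{lem:onlydiagonal}, any shortest path from $(0,0)$ to $(x,y)$ that enters $B$ will, once inside $B$, only ever move diagonally until it can no longer do so, and then move straight (all horizontal or all vertical) to $(x,y)$; intuitively, a shortest path through a uniform-weight block takes as many diagonal steps as geometry allows. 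Concretely, if the path enters $B$ at an input node $(x',y')$ with $y' - x' = d'$, then the path cost inside $B$ is $(\text{number of steps})\cdot c_B$, and the cheapest way to get from diagonal $d'$ to the corner cell $(x,y)$ inside an $h\times w$ block is to go diagonally for $\min$ of the two available extents and then straight. I would make this precise by induction on the distance of $(x,y)$ from the input boundary of $B$, exactly as in \cref{c:BotToTopFormula}: the first two bullets of \cref{lem:onlydiagonal} give that the optimal predecessor of $(x,y)$ inside $B$ is either the diagonal neighbor (if it lies in $B$) or a straight neighbor, and the third bullet guarantees that whenever the diagonal neighbor is in $B$ it is on some shortest path.

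Carrying this out, I would first handle the ``generic'' diagonal $d$, i.e. $d \notin \{d_1, d_2, \dots\}$ strictly between the two extreme diagonals of $B$: here the output node $(x,y)$ is reachable from the input node on the \emph{same} diagonal $d$ by a purely diagonal path of length $\min(w,h)$ steps inside $B$ — wait, more carefully, the diagonal $d$ of the output enters $B$'s input boundary $\min(x - i_1 + 1,\ y - j_1 + 1)$ cells earlier, and for $d \in [\min(d_1,d_2)\ldots \max(d_1,d_2)]$ this quantity is exactly $\min(w,h)$. Since $\F[d]$ already holds the distance to that input node and the block weight is $c_B$, we get $\F_{t+1}[d] = \F[d] + \min(w,h)c_B$, which is the middle case. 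For $d$ in the range $[a\ldots \min(d_1,d_2))$, the output node $(x,y) = (i_2, j)$ lies on the bottom row $D_B$ with $j - i_2 = d$; the cheapest path reaches it via a diagonal segment from the input node $(i_2 - \ell, j - \ell)$ where $\ell$ is chosen maximal subject to staying in $B$, and then — but in this regime the diagonal actually runs into the left column $L_B$ first, so the path is: start at input node on diagonal $a$, which is the bottom-left corner $(i_2, j_1)$, then move horizontally. Thus the path cost inside $B$ is $(d - a)$ horizontal steps plus $0$ diagonal steps from $\F[a]$ — no: one must be careful whether the optimum uses the corner $\F[a]$ or a nearer input node on $L_B$. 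The correct statement, and what the formula asserts, is $\F_{t+1}[d] = \F[d] + (d-a)c_B$, meaning the optimal path from the output node $(i_2,j)$ goes diagonally up-left until it hits the left column $L_B$, landing on the input node on diagonal $d$, then... hmm. I would resolve this sign/index bookkeeping by writing the one-line induction explicitly rather than arguing geometrically; the symmetric case $d \in (\max(d_1,d_2)\ldots b]$ (output on $R_B$) is handled identically with the roles of rows and columns swapped.

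The main obstacle I anticipate is purely the case analysis and index arithmetic around the three regimes and the two sub-cases ($h \le w$ versus $h > w$), together with correctly identifying, in each regime, which input node of $B$ is the one that lies on the shortest path to a given output node — i.e. verifying that the min in the unrolled recurrence is attained exactly at the claimed index and not at some interior input node. Once the right claim is pinned down, the proof is a routine induction on the number of steps from $(x,y)$ back to the input boundary, using \cref{lem:onlydiagonal} at each step exactly as in the proof of \cref{c:BotToTopFormula}: the diagonal option contributes $\F[\,\cdot\,] + (\text{diag count})c_B$ and the straight option reduces to the inductive hypothesis for $(x-1,y)$ or $(x,y-1)$, and the minimum of the two telescopes to the stated closed form. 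I would then close by noting that, given \cref{cor:inputstooutputsformula}, the transformation $\F \mapsto \F_{t+1}$ is effected by a constant number of $\AC$ and $\AG$ operations (one $\AG$/$\AC$ pair on each of the three sub-ranges to add the affine terms $(d-a)c_B$, $\min(w,h)c_B$, $(b-d)c_B$), which together with Phase I establishes \cref{l:outputsfromoutputs}.
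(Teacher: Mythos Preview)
Your approach is workable but you are making Phase~II harder than it needs to be by importing the minimization structure of \cref{c:BotToTopFormula}. In Phase~I (computing inputs of $B$ from outputs of the entering blocks) there genuinely is competition among entry points, hence the $\min_{t\in[z\ldots k]}$. In Phase~II (inputs of $B$ to outputs of $B$) there is no competition: the third bullet of \cref{lem:onlydiagonal} already says that whenever $(x-1,y-1)$ and $(x,y)$ are both in $B$, the diagonal predecessor lies on a shortest path. Iterating this one statement gives directly that for any $(x,y)\in B$, $\dist(x,y)=\dist(x',y')+c_B\cdot(x-x')$, where $(x',y')$ is the unique input node of $B$ on the same diagonal $y-x$. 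This is the paper's \cref{o:pathviadiagonalinput}, and it is a two-line induction using only the diagonal case of \cref{lem:onlydiagonal}; the straight-move cases never enter.

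Once you have that claim, your worry about ``whether the optimum uses the corner $\F[a]$ or a nearer input node on $L_B$'' evaporates: the input node used is always the one on diagonal $d$, namely $\F[d]$, and the three cases in the lemma are just the elementary geometry of counting how many diagonal steps separate the input and output nodes on diagonal $d$ inside an $h\times w$ block. For $d\in[a,\min(d_1,d_2))$ the diagonal hits $L_B$ after $d-a$ steps; for $d\in[\min(d_1,d_2),\max(d_1,d_2)]$ it runs the full $\min(w,h)$ steps; for $d\in(\max(d_1,d_2),b]$ it hits $U_B$ after $b-d$ steps. No induction on the three regimes, no telescoping min, no $h\le w$ versus $h>w$ split beyond the appearance of $\min(w,h)$.
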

\begin{proof}
We begin with the following claim:	
\begin{claim}\label{o:pathviadiagonalinput}
Let $(x,y)$ be a vertex in block $B$ and let $(x',y')$ be the input vertex of $B$ with $y'-x' = y-x$, then $\dist(x,y) = \dist(x',y') + c_B \cdot (x-x')$. 
\end{claim}
\begin{proof}
We prove by induction on $d = x-x'$ that there is a shortest path from $(0,0)$ to $(x,y)$ that visits $(x',y')$ and then uses $x'-x$ consecutive diagonal edges. 
If $d = 0$, this holds trivially.
Otherwise, since $d \ge 1$, the vertices $(x,y)$ and $(x-1,y-1)$ are both in the same block $B$.
It follows from \cref{lem:onlydiagonal} that there is a shortest path to $(x,y)$ via $(x-1,y-1)$, which by the  induction hypothesis goes through $(x',y')$ and then uses only diagonal edges. 
\end{proof}

\begin{figure}[htpb]
  \begin{center}
 \includegraphics[scale=0.7]{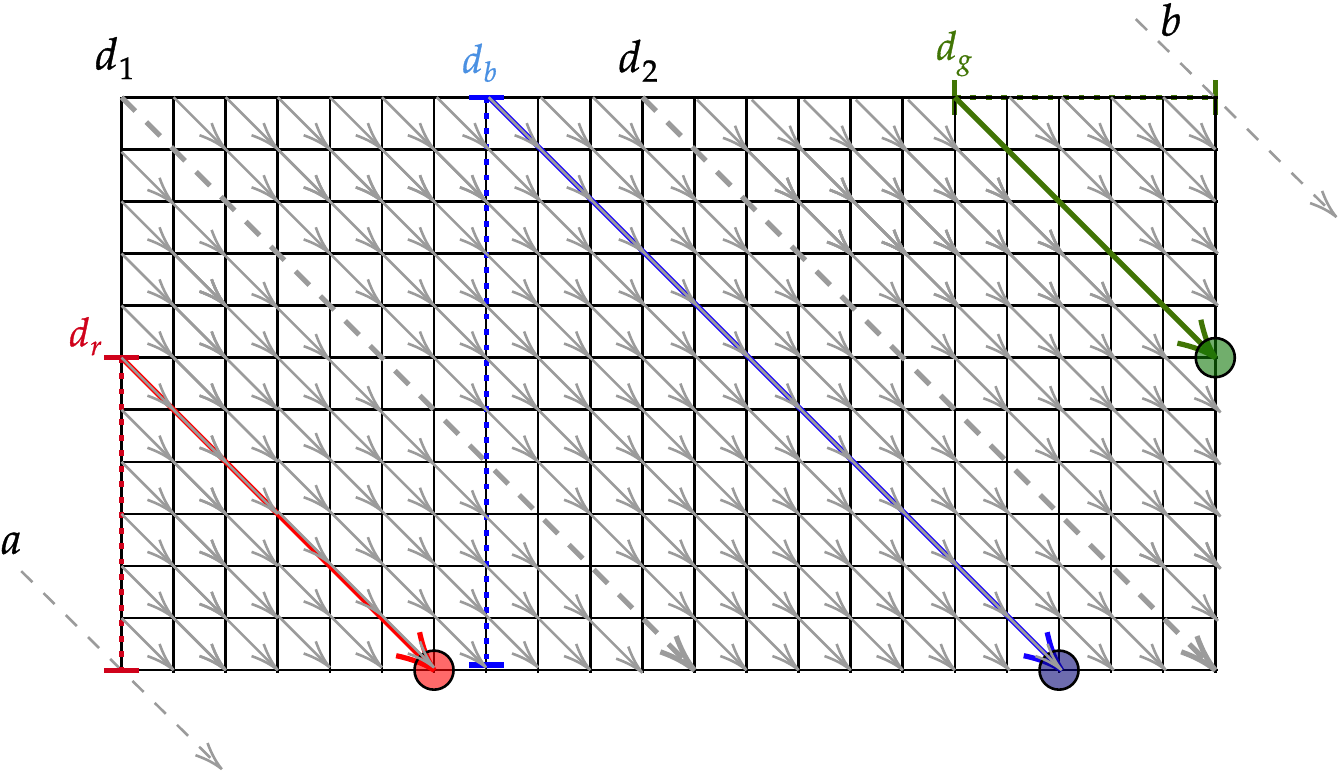} 
  \caption{ 
  A block $B$. The red vertex is an output vertex on a diagonal $d_r$ with $d_r \in [a\ldots d_1]$, and the number of diagonal steps from the matching input on the same diagonal to the red vertex is $d_r - a$. Similarly, the blue vertex is on a diagonal $d_b \in [d_1 \ldots d_2]$,  and the green vertex is on a diagonal $d_g \in [d_2 \ldots b]$.\label{fig:pathsinblocks}}
   \end{center}
\end{figure}

Let $(x,y)$ be an output of $B$ on diagonal $d=y-x$. Let $(x',y')$ be the input of $B$ on the same diagonal $d$. Therefore, the following holds (see \cref{fig:pathsinblocks}): 

\begin{enumerate}
    \item If $d \in [a \ldots \min(d_1,d_2))$, we have $x-x' = d - a$.
    \item If $d\in [\min(d_1,d_2) \ldots \max(d_1,d_2)]$, we have $x - x' = \min(w,h)$.
    \item If $d\in (\max(d_1,d_2) \ldots b]$, we have $x-x' = b-d$.
\end{enumerate}

\noindent Combined with \cref{o:pathviadiagonalinput}, this proves \cref{cor:inputstooutputsformula}.
\end{proof}

\noindent From \cref{cor:inputstooutputsformula}, $\F$ can be turned into $\F_{t+1}$ by applying: 
\begin{enumerate}
    \item $\AC(a, \min(d_1,d_2) - 1, -a \cdot c_B)$.
    \item $\AG(a, \min(d_1,d_2) - 1, c_B)$.
    \item $\AC(\min(d_1,d_2),\max(d_1,d_2),\min(w,h)c_B)$.
    \item $\AC(\max(d_1,d_2) +1, b, b \cdot c_B)$.
    \item $\AG(\max(d_1,d_2) + 1, b , -c_B)$.
\end{enumerate}

This concludes Phase II and the proof of \cref{l:outputsfromoutputs}. 
As for the parameters of the required range queries $d_1,d_2,w,h,a,b,c_B,$ and $z$ can all be calculated in advance for every block $B$ in $O(nm)$ time in a straightforward manner.

\section{Implementing the Range Operations}\label{sec:IntervalDS}

In this section we prove \cref{t:mainds}.
We view the array $A$ as a piecewise linear function. 
We associate with $A$ a set $\P = \{p_1 = (x_1,y_1) ,p_2 = (x_2,y_2), \ldots\}$ of points satisfying $A[x_i]=y_i$. The set $\P$ is uniquely defined by $A$ as the endpoints of the maximal linear segments of the linear interpolation of $A$. Note that the first point of $\P$ is always $(1,A[1])$ and the last point is $(n,A[n])$.\footnote{Here we use $n$ to denote the size of the array $A$.} 
Let $\ell_i(x)=\alpha_ix+\beta_i$ be the line segment between $p_i$ and $p_{i+1}$. 
Our representation will maintain the $\alpha_i$'s and $\beta_i$'s.
With this representation we can retrieve $A[x]$ for any $x\in[1,n]$ from $\alpha_i$ and $\beta_i$ where $x_i$ is predecessor of $x$ in the sequence $(x_1, x_2, \ldots)$.
Upon initialization, $A$ is represented as one linear segment, with $\alpha_1 =0$, and $\beta_1=0$.

We will use the following simple data structure.\footnote{The data structure can be implemented using a balanced search tree with a delta-representation (where the value of a node is represented by the sum of values of its ancestors), and having every node also store the minimal and maximal values in its subtree. See e.g. \cite{PlanarBook}.}

\begin{lemma}[Interval-add Data Structure]\label{l:IntervalAdd}
    There is a data structure supporting the following operations in $O(\log n)$ time per operation on a set of $n$ points with distinct first coordinates.
    \begin{itemize}
         \item $\Lookup(x)$ - return the second coordinate of the point with first coordinate $x$, if exists.
        \item $\Insert(x,y)$ - insert the point $(x,y)$.
        \item $\Remove(x)$ - remove the point with first coordinate $x$, if exists.
        \item $\AtR(i,j,c)$ - for every point (x,y) with $x\in[i\ldots j]$ set $y\assign y+c$. 
        \item $\nextGT(x,y)$ -\! return the point $p'=(x',y')$ with smallest $x'>x$ among points with $y'>y$.
        \item $\prevLT(x,y)$ - return the point $p'=(x',y')$ with largest $x'<x$ among points with $y'<y$. 
    \end{itemize}
\end{lemma}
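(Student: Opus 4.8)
The plan is to implement the interval-add data structure as a balanced binary search tree (e.g., a red-black tree or a weight-balanced tree) whose nodes correspond to the points, keyed by their first coordinates $x$. The standard trick for supporting $\AtR$ is a \emph{delta} (or \emph{lazy}) representation: instead of storing the second coordinate $y$ explicitly at each node, each node stores an offset $\Delta_v$, and the actual second coordinate of the point at node $v$ is the sum of $\Delta_u$ over all ancestors $u$ of $v$ (including $v$ itself). With this encoding, $\AtR(i,j,c)$ decomposes the key-range $[i\ldots j]$ into $O(\log n)$ canonical subtrees (the usual range-decomposition of a BST), and for each such subtree we simply add $c$ to the $\Delta$ value at its root; this correctly shifts the second coordinate of every point in that subtree by $c$ and touches $O(\log n)$ nodes. $\Lookup(x)$ is an ordinary BST search down to the node with key $x$, summing the $\Delta$ values along the root-to-node path. $\Insert(x,y)$ and $\Remove(x)$ are the usual BST operations; the only care needed is that the delta-representation must be preserved under rotations — when we rotate, we \emph{push down} the $\Delta$ of the nodes involved (adding it into the children's $\Delta$ and zeroing it, or equivalently recomputing the constant-size set of affected offsets) before restructuring, so that the invariant ``second coordinate $=$ sum of $\Delta$ along the path'' is restored. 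Since a rotation involves $O(1)$ nodes, this adds only $O(1)$ work per rotation and hence $O(\log n)$ per insertion/deletion.

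For $\nextGT(x,y)$ and $\prevLT(x,y)$ I would additionally store at each node $v$ the quantities $\mathrm{subMin}_v$ and $\mathrm{subMax}_v$: the minimum and maximum, over all points in the subtree rooted at $v$, of the second coordinate \emph{relative to $v$'s parent}, i.e. measured as if $v$ were the root (so these are invariant under a subtree shift and only get corrected by ancestor $\Delta$'s as we descend). Concretely, $\mathrm{subMin}_v = \min(\Delta_v, \Delta_v + \mathrm{subMin}_{\mathrm{left}(v)}, \Delta_v + \mathrm{subMin}_{\mathrm{right}(v)})$, and symmetrically for $\mathrm{subMax}_v$; these are maintained bottom-up in $O(1)$ per structural change, and under $\AtR$ the canonical-subtree roots' $\mathrm{subMin}/\mathrm{subMax}$ need no update (the shift is absorbed into $\Delta$) — only the $O(\log n)$ ancestors above each canonical root must be refreshed, which is again $O(\log n)$ total. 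To answer $\nextGT(x,y)$: walk down from the root maintaining the running sum $\sigma$ of $\Delta$'s on the current path; at each node we are, in effect, looking for the leftmost point with key $>x$ whose absolute second coordinate exceeds $y$. Using the keys we identify the $O(\log n)$ canonical subtrees covering the key-range $(x,\infty)$, listed left to right; for each in order we test, using $\sigma + \mathrm{subMax}$ of that subtree's root, whether it contains \emph{any} point with second coordinate $>y$, and descend into the first one that does, thereafter at each internal node preferring the left child if its subtree contains a qualifying point and otherwise going right (again decided by comparing $\sigma+\mathrm{subMax}$ of the child against $y$). This finds the point with the smallest qualifying key in $O(\log n)$ time. $\prevLT$ is the mirror image, using $\mathrm{subMin}$ and scanning the canonical subtrees covering $(-\infty,x)$ from right to left.

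The main obstacle I expect is purely bookkeeping rather than conceptual: getting the interaction of the delta-representation with rotations exactly right for \emph{all three} stored quantities ($\Delta$, $\mathrm{subMin}$, $\mathrm{subMax}$) simultaneously, so that after any rotation or rebalancing step every node's stored values still satisfy their defining recurrences. The clean way to do this is to push down the $\Delta$ of the constant set of rotated nodes into their children first (so those nodes temporarily have $\Delta = 0$), perform the rotation treating the affected node values as absolute, and then recompute $\mathrm{subMin}$ and $\mathrm{subMax}$ bottom-up along the $O(1)$ affected nodes and their $O(\log n)$ ancestors. Everything else — correctness of $\AtR$'s canonical decomposition, of the search in $\nextGT/\prevLT$, and of the $O(\log n)$ time bounds — follows from the standard analysis of augmented balanced search trees, so I would only sketch those and spell out the rotation/push-down invariant in detail. (Alternatively one can cite a textbook treatment of this exact augmentation, e.g.\ the delta-representation search trees in \cite{PlanarBook}, and merely note that the two extra augmented fields are maintained in the standard way.)
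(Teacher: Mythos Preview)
Your proposal is correct and matches the paper's approach exactly: the paper only gives a one-sentence footnote stating that the data structure is a balanced search tree with a delta-representation (value $=$ sum of ancestor offsets) augmented with subtree min/max, citing \cite{PlanarBook}. You have simply spelled out in detail what the paper leaves implicit, including the rotation/push-down bookkeeping and the canonical-subtree search for $\nextGT/\prevLT$.
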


\subsection{A Warmup algorithm}\label{sec:warmup}
We first present a naive and inefficient implementation of a range operations data structure.  
We maintain the sequence $\P$ in a predecessor/successor data structure over the sequence $(x_1, x_2, \ldots)$. With a slight abuse of notation we shall also use $\P$ to refer to this data structure. We maintain the $\alpha_i$'s and $\beta_i$'s using two Interval-add data structures $D_{\alpha}$ and $D_\beta$, respectively. 
The parameters $\alpha_i, \beta_i$ of the linear segment $\ell_i$ starting at $x_i$ are represented by points $(x_i,\alpha_i)$ in $D_\alpha$ and $(x_i, \beta_i)$ in $D_\beta$. In what follows, whenever we say we add a point $p=(x,y)$ to $\P$ we mean that $(x,y)$ is inserted into the predecessor/successor data structure $\P$, and that points with first coordinate $x$ are inserted into $D_\alpha$ and $D_\beta$, with their second coordinates appropriately set to reflect the parameters $\alpha,\beta$ of the segment ending at $p$ and the segment starting at $p$. This process requires $O(1)$ operations on $\P,D_\alpha$ and $D_\beta$. 
 
The effect of $\AC(i,j,c)$ (see \cref{fig:addconst}) is to break the segment containing $i$ into at most 3 linear segments (a prefix ending at $i-1$, a segment $[i-1,i]$, and a suffix starting at $i$), and similarly for the segment containing $j$.
Thus, to apply $\AC(i,j,c)$, we first replace the segments containing $i$ and $j$ with these $O(1)$ new segments by inserting or updating the endpoints of the segments in $\P,D_\alpha$, and $D_\beta$. 
We then invoke $\AtR(i,j,c)$ on $D_{\beta}$ to shift all segments between $i$ and $j$ by $c$.
Next, we set the parameters for the segment $[i-1,i]$ and for the segment $[j,j+1]$ by $O(1)$ additional calls to $\AtR$ on $D_\alpha$ and $D_\beta$. 
Finally, we check if any of the new segments we inserted has the same slope as its adjacent segments and, if so, we merge them into a single segment by removing their common point from $\P,D_\alpha$ and $D_\beta$. This guarantees that the set $\P$ we maintain is indeed the set $\P$ defined by $A$.
Supporting $\AG(i,j,g)$ is similar. The only difference is that we invoke  $\AtR(i,j,g)$ on $D_{\alpha}$ instead of on $D_{\beta}$ because the slope of the segments is shifted rather than their values.

\begin{figure}[!htb]
  \begin{center}
 \includegraphics[scale=0.8]{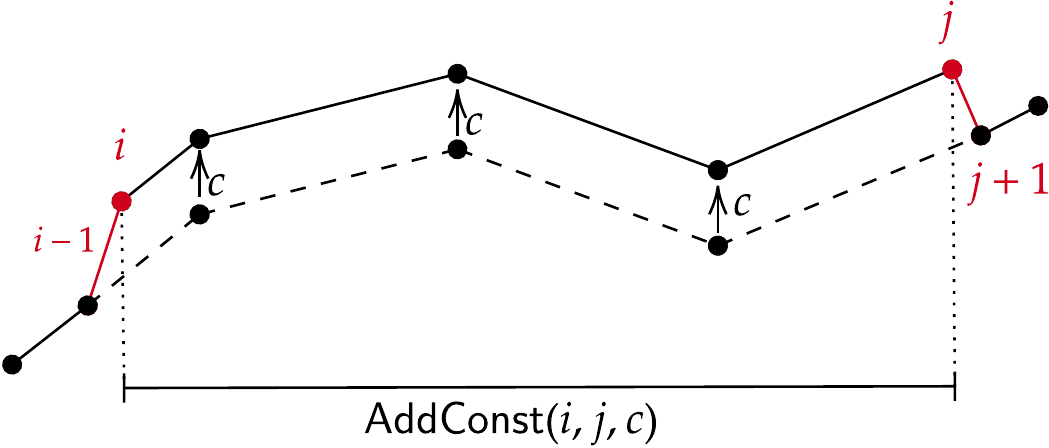} 
  \caption{An illustration of applying the $\AC(i,j,c)$ operation. The dashed line represents the segments before the operation. 
  After the operation, new points are created with $x$ coordinates $i-1,i,j$ and $j+1$ and the segments in $[i\ldots j]$ are shifted by $c$. 
  \label{fig:addconst} }
   \end{center}
\end{figure}

The challenge is thus in supporting $\LLW(i,j,\alpha)$. 
We first describe its effect and then describe how it is implemented. 
We assume without loss of generality that $i$ and $j$ are both endpoints of segments (otherwise we break the segments containing them into $O(1)$ segments as above).
Let $p_a=(i,A[i])$ and $p_{b+1}=(j,A[j])$ be the points corresponding to $i$ and $j$.
Thus, the segments contained within  $[i\ldots j]$ are $\ell_a, \ell_{a+1} \ldots \ell_b$.

If $\alpha_a\le \alpha $ then the segment $\ell_a$ is not affected by the linear wave. This is because 
for every  $k\in[x_a\ldots x_{a+1}]$, the linear wave assigns $A[k] \leftarrow \min_{x_a\le t\le k}(A[t] + (k-t)\alpha ) = $
$$ = \!\!\!
 \min_{x_a\le t\le k}(A[k]-(k-t)\alpha_a + (k-t)\alpha ) =\!\!\!
\min_{x_a\le t\le k}(A[k] +(k-t)(\alpha-\alpha_a)) = A[k].$$ 

Let $z\in[a\ldots b]$ be the minimum index such that $\alpha_z>\alpha$.
By the same reasoning, none of the segments $\ell_a,\ell_{a+1}, \ldots ,\ell_{z-1}$ is affected by the linear wave.
Let $\ray_z(x)$ be the (positive) ray with slope\footnote{Note that in \cref{fig:rayshootingexample} and in all subsequent figures we indicate the slope $\alpha$ of the ray $\ray_z$ by drawing an angle $\alpha$ between the ray and the positive direction of the $x$-axis. However, formally $\alpha$ is the slope of the ray, not the indicated angle.} $\alpha$ starting at $p_z$. Since $\alpha_z>\alpha$, the ray $\ray_z$ is below the linear segment $\ell_z$. 
Hence, the segment $\ell_z$ starting at $p_z$ is affected by the linear wave; its slope changes from $\alpha_z$ to $\alpha$, and it extends beyond $x_{z+1}$ as long as $A[x] \geq \ray_z(x)$.   
We describe this effect of $\LLW$ by a \emph{ray shooting} process from $p_z$ (See \cref{fig:rayshootingexample}).
This process identifies the new endpoint $p'$ of $\ell_z$, and removes all the existing segments between $p_z$ and $p'$, as follows.

Let $z'\in[z+1..b+1]$ be the minimum index with $y_{z'} <  \ray_z(x_{z'})$, i.e. the first point in $\P$ that lies strictly below the ray $\ray_z$.
Let $p^*=(x^*,y^*)$ be  the intersection point of the ray $\ray_z$ with $\ell_{z'-1}$ (if $z'$ does not exist, then $p^*=p_b$). 
The new endpoint of $\ell_z$ is the point $p' = (x',y') = (\floor{x^*},\ray_z(\floor{x^*})$, and it replaces all the points $p_w$ for $w\in (z\ldots z')$.
If $x^*$ is not an integer (or if $z'$ does not exist) then a new segment is formed between $p'$ and $p''=(x'+1,A[x'+1])$.  

\begin{figure}[!htb]
  \begin{center}
 \includegraphics[scale=0.8]{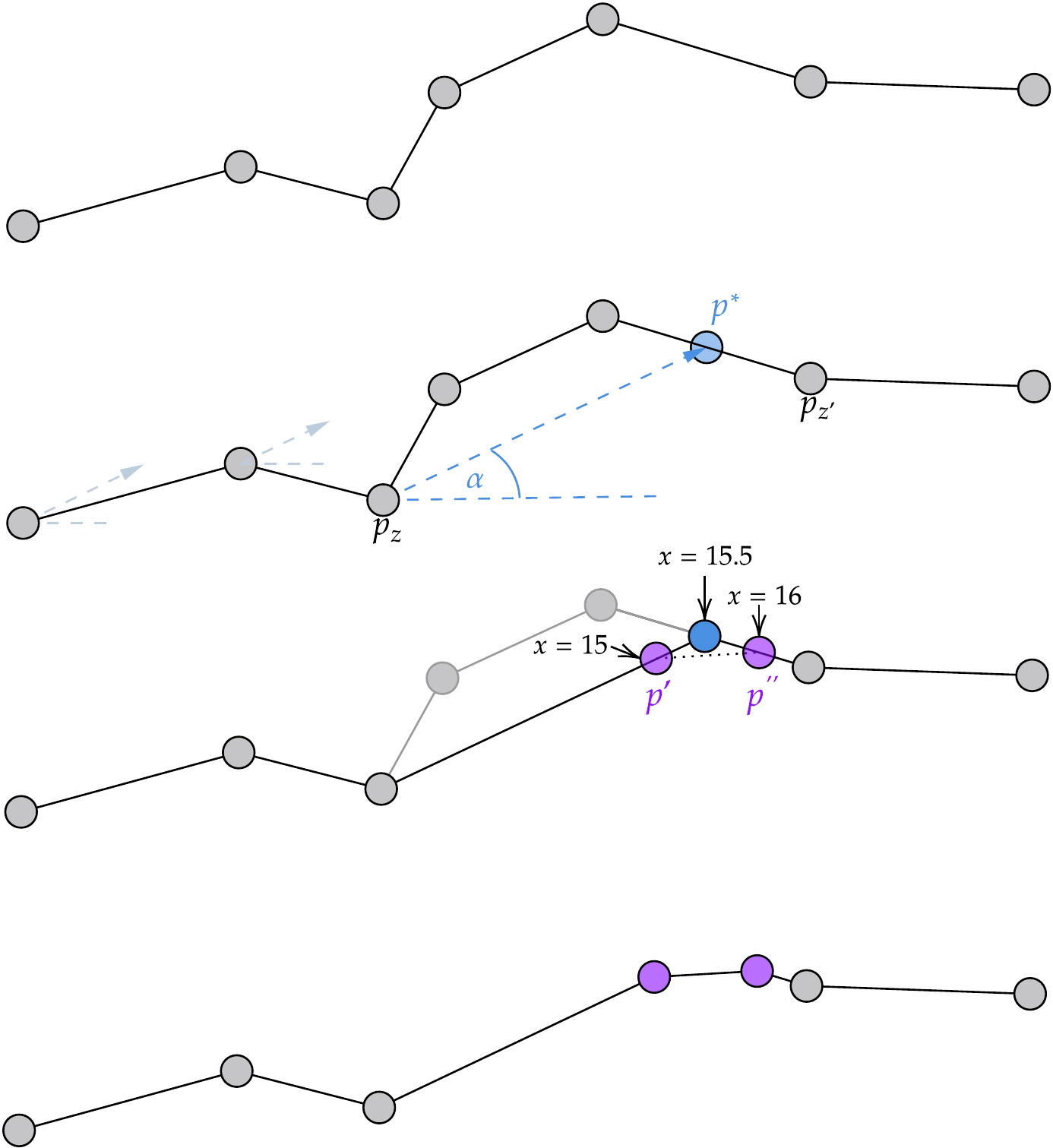} 
  \caption{
The effect of $\LLW(i,j,\alpha)$. 
The segments before $p_z$ are not affected. The segments between $p_z$ and $p_{z'}$ are affected. Namely, a ray $\ray_z$ with slope $\alpha$ (in dashed blue) is shot from $p_z$ and intersects at point  $p^*=(x^*,y^*)$.
The new endpoint of $\ell_z$ becomes $p'$ and all the segments between $p_z$ and $p'$ are removed. 
Since $x^*=15.5$ is not an integer, a new segment is formed between $p'$ (with $x$ coordinate 15) and $p''$ (with $x$ coordinate 16).
\label{fig:rayshootingexample}}
   \end{center}
\end{figure}

The effect of $\LLW(i,j,\alpha)$ on the remaining part of $A$, namely on $A[ x_{z'} \ldots j]$ is analyzed in the same way as above, this time starting from $p_{z'}$ instead of from $p_a$. 
In particular, the prefix of segments with slopes less than $\alpha$ is not  affected, and a ray with slope $\alpha$ is shot from the next $p_{w}$ with $\alpha_w > \alpha$, and so on.
In the appendix (\cref{lem:ray}) we formally prove that the above characterization indeed represents the new values of $A[i \ldots j]$.

We now describe a naive, non-efficient implementation of $\LLW(i,j,\alpha)$ according to the description above.  
Recall that $i$ and $j$ are assumed to be endpoints $p_a$ and $p_{b+1}$ of segments. 
We begin by finding the first $p_z$ with $x_z\in[i\ldots j]$ and $\alpha_z>\alpha$ by querying $D_{\alpha}.\nextGT(i,j,\alpha)$. A ray shooting process is then performed from $p_z$ (if $p_z$ exists) as follows:
Recall that $\ray_z(x)$ denotes the positive ray with slope $\alpha$ shot from $p_z$.
We scan the successor points of $p_z$ one by one in order, and for every point $p_w$ we check whether the ray $\ray_z(x_w)\leq y_w$.
If so, $p_w$ is removed by  removing $x_w$ from $\P$, $D_{\beta}$ and $D_{\alpha}$.
Otherwise, we compute $p^*=(x^*,y^*)$, the intersection point of $\ray_z$ and $\ell_{w-1}$, and from it the points $p' = (x',y')=(\floor{x^*},\ray_z(\floor{x^*})$ and, if $x^*$ is not an integer, also $p''=(\ceil{x^*},\ell_{w-1}(\ceil{x^*}))$.
Then, we insert the new points $p'$ and $p''$ just before $p_w$, as discussed above for $\AC$.
The scanning then continues with another $\nextGT$ query from $p_w$, and so on.
If, at the end of the process, the last point $p_b$ is removed since it is above some $\ray_z$, we insert a new point $(x_b,\ray_z(x_b))$.

We now analyze the time complexity of this naive implementation. Each $\AC$ and $\AG$ operation requires $O(1)$ operations on the Interval-add data structures, and therefore takes $O(\polylog(|\P|)$ time per operation, with $|\P|$ being the cardinality of $\P$ when the operation is applied. 

Regarding $\LLW$ operations, one might hope that the cost of each ray shooting process can be charged to the removal of points from $\P$ during the process. However, each ray shooting process might also add up to two new points, which might result in the size of $\P$ increasing. 
Indeed, a $\LLW$ operation may give rise to many such ray shooting processes, and hence may significantly increase the size of $\P$ and take too much time. This is the main technical challenge we need to address.

The idea is to  distinguish between {\em \longrange}ray shootings for which we can globally charge the new insertions, and {\em \local}ray shootings for which we cannot. We handle the \longrange rays as in the naive solution and devise a separate lazy mechanism that delays the application of all the \local rays stemming from a single $\LLW$ operation using a constant number of updates to a separate data structure that keeps track of the delayed rays. 

\para{Symmetry of $\RLW$.}
The discussion so far was focused on the $\LLW$ operation.
We note that the analysis of $\RLW$ is symmetric.
In particular, the execution of $\RLW(i,j,\alpha)$ can be described as a sequence of ray shootings with \emph{negative} rays.
The first point from which a ray is shot is $p_z$ with largest $z\in[a\ldots b]$ such that $\alpha_{z-1}<-\alpha$ ($p_z$ is found using $D_{\alpha}.\prevLT$). 
Note that the condition for starting a ray shooting process for $\RLW$ is on $\alpha_{z-1}$ rather than $\alpha_z$ since the slope of the segment to the left of $p_z$ is $\alpha_{z-1}$.
To simplify the presentation, we will keep describing only $\LLW$, and will comment at the very end about the minor adjustments required to also handled the symmetric $\RLW$. 

\subsection{Active and Passive Points, Long and Short Rays}
On our way to formally define \longrange rays and short rays we first observe that ray shootings only occur at points where slopes increase. We call such points {\em active} points.
\begin{definition}[Active and Passive points]\label{def:activepoints}
A point $p_z$ in $\P$ is called \emph{active} if $z\in\{1,|\P|\}$ or $\alpha_z>\alpha_{z-1}$.
A point that is not active, is called \emph{passive}.
We denote the sets of active points by $\AP$.
\end{definition}

\begin{lemma}\label{lem:activerays}
Ray shootings stemming from $\LLW(i,j,\alpha)$ occur either at point $p_a=(i,A[i])$ or at active points. 
   
\end{lemma}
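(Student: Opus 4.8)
The plan is to show that the points $p_z$ from which rays are shot during the execution of $\LLW(i,j,\alpha)$—apart from the possible very first ray shot from $p_a$ itself—are exactly the points $p_z$ identified in the warmup description as satisfying $\alpha_z > \alpha$, and that each such point is active in the sense of \cref{def:activepoints}. Recall from the description of the ray shooting process that the successive ray origins are: $p_a$ (only if $\alpha_a > \alpha$, i.e. the first segment is itself affected), and thereafter each $p_{z'}$ is obtained as the first point strictly below the previously shot ray, from which one advances to the \emph{next} point whose outgoing slope exceeds $\alpha$. So every ray origin other than $p_a$ is a point $p_w \in \P$ with $\alpha_w > \alpha$, found via a $\nextGT$-type query on the slopes.

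The key step is then to argue that any such $p_w$ with $\alpha_w > \alpha$ (and $w \notin \{1,|\P|\}$, which are active by definition anyway) must in fact be active, i.e. $\alpha_w > \alpha_{w-1}$. I would establish this via the invariant maintained by the ray shooting process: immediately before a ray is shot from $p_w$, the segment $\ell_{w-1}$ entering $p_w$ has slope $\alpha_{w-1} \le \alpha$. Indeed, $p_w$ is selected precisely because it is the first point, scanning forward from the previous ray origin, whose outgoing slope is $> \alpha$; all the segments strictly between the previous ray origin and $p_w$ were argued (in the warmup, via the telescoping computation $\min_{x_a \le t \le k}(A[k] + (k-t)(\alpha - \alpha_{w-1})) = A[k]$ when $\alpha_{w-1} \le \alpha$) to be unaffected and to have slope $\le \alpha$; in particular $\ell_{w-1}$, the segment immediately preceding $p_w$, has slope $\alpha_{w-1} \le \alpha < \alpha_w$. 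Combining, $\alpha_{w-1} < \alpha_w$, so $p_w$ is active by \cref{def:activepoints}. The same reasoning covers the case $w = |\P|$ (endpoint, active by fiat) and $w = 1$ does not arise as a non-$p_a$ origin since the scan starts at or after $p_a$.

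The main obstacle I anticipate is being careful about the bookkeeping across multiple ray shootings within a single $\LLW$ call: one must verify that when the process restarts from a new $p_{z'}$ (the first point below the previous ray), the invariant "the segment entering the next ray origin has slope $\le \alpha$" is re-established. This follows because the analysis of the remaining suffix $A[x_{z'} \ldots j]$ is, as stated in the warmup, "analyzed in the same way as above, this time starting from $p_{z'}$", so the prefix of segments with slope $< \alpha$ starting at $p_{z'}$ is untouched and the next ray origin is again the first point with outgoing slope $> \alpha$ after a run of slopes $\le \alpha$—giving an entering slope $\le \alpha$ strictly below the outgoing slope, hence active. A secondary subtlety is the newly created point $p'$ (and $p''$) at the end of a ray shooting: $p'$ lies on the ray (slope $\alpha$) and the segment entering it is the new ray segment itself (also slope $\alpha$), so $p'$ is passive, while $p''$ lies on the old segment $\ell_{w-1}$ and is not a ray origin, so neither of them violates the claim. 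This completes the argument that every ray shooting stems from $p_a$ or from an active point.
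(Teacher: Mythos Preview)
Your approach is essentially the paper's: both argue that whenever a ray is shot from some $p_w \neq p_a$, the entering slope satisfies $\alpha_{w-1} \le \alpha$, hence $\alpha_{w-1} < \alpha_w$ and $p_w$ is active. The paper phrases this as a proof by contradiction and splits into the case where $p_w$ is the very first ray origin versus a later one, but the content matches your direct argument via the minimality of $w$ in the $\nextGT$ scan.

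One case deserves an extra line of care. When the previous ray (from $p_q$) terminates at the first point $p_{q'}$ strictly below the ray, the next scan starts at $p_{q'}$, and it is possible that $p_{q'}$ itself is the next ray origin (i.e., $\alpha_{q'} > \alpha$). In that situation your ``run of slopes $\le \alpha$'' is empty, so minimality in the scan does not by itself yield $\alpha_{q'-1} \le \alpha$. The missing observation, which the paper supplies explicitly, is geometric: the segment $\ell_{q'-1}$ runs from a point not below the ray ($p_{q'-1}$) to a point strictly below it ($p_{q'}$), so its slope is strictly less than $\alpha$; hence $\alpha_{q'-1} < \alpha < \alpha_{q'}$ and $p_{q'}$ is active. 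With that one sentence added your argument is complete.
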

\begin{proof}
Assume to the contrary that a ray shooting process starts at a passive point $p_z \neq p_a$.
 If $p_z$ is the first point where a ray shooting starts, then $z$ is the minimal index in $[a\ldots b]$ with $\alpha_z > \alpha$. But since $p_z$ is passive, we have $\alpha < \alpha_{z} \le \alpha_{z-1}$, contradicting the minimality of $z$ (note that $p_{z} \neq p_a$ so $z-1 \in [a\ldots b]$).

Otherwise, let $p_q$ be the last point before $p_z$ from which a ray shooting process occurred.
Let $p_{q'}$ be the first point below the ray shot from $p_q$. Since $p_z$ is the next point from which a ray is shot, $z$ is the first point in $[q \ldots b]$ with $\alpha_z \ge \alpha$.
Since $p_z$ is passive, we have $\alpha < \alpha_{z} \le \alpha_{z-1}$.
If $z \neq q'$, we have $z-1 \in [q'\ldots b]$, a contradiction to the minimality of $z$.
Otherwise, $p_z = p_{q'}$ is the first point below the ray with slope $\alpha$ shot from $p_q$. 
It follows that $p_{z-1}$ is above the ray, and $\alpha_{z-1} > \alpha$.
It must be the case that $p_{q'}$ is above the ray, a contradiction.
\end{proof}

Similarly, we provide a proof in \cref{sec:appendix} for the following symmetric claim regarding $\RLW$
\begin{lemma}\label{lem:activeraysright}
Ray shootings stemming from $\RLW(i,j,\alpha)$ occur either at point $p_b=(j,A[j])$ or at active points.
\end{lemma}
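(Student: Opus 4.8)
The plan is to mirror the argument of \cref{lem:activerays} under the reflection that sends $\LLW$ to $\RLW$. Recall from the discussion of the symmetry of $\RLW$ that executing $\RLW(i,j,\alpha)$ is a sequence of ray shootings with \emph{negative} rays, where the first shooting point is the point $p_z$ with \emph{largest} $z\in[a\ldots b]$ such that $\alpha_{z-1}<-\alpha$, and subsequent shootings proceed right-to-left. The key local fact we will use repeatedly is that, at a passive point $p_z$ (with $z\notin\{1,|\P|\}$), we have $\alpha_z\le\alpha_{z-1}$, so the slope to the \emph{left} of a passive interior point is at least the slope to its right.

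First I would argue the base case: suppose, for contradiction, that the very first ray shooting in the execution of $\RLW(i,j,\alpha)$ occurs at a passive point $p_z\neq p_b$. By the description of $\RLW$, $z$ is the \emph{largest} index in $[a\ldots b]$ with $\alpha_{z-1}<-\alpha$. Since $p_z\neq p_b$ we have $z+1\in[a\ldots b]$, and since $p_z$ is passive and interior, $\alpha_z\le\alpha_{z-1}<-\alpha$. But then taking the index $z+1$ we get $\alpha_{(z+1)-1}=\alpha_z<-\alpha$, contradicting the maximality of $z$. Hence the first shooting point is either $p_b$ or an active point.

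For the inductive step, let $p_z$ be a shooting point that is not the first one, and let $p_q$ (with $q>z$) be the shooting point immediately to its right — i.e., the previous shooting point in the right-to-left order. Let $p_{q'}$ be the first point to the \emph{left} of $p_q$ that lies strictly below the negative ray shot from $p_q$; by the structure of the ray-shooting process for $\RLW$, $z$ is the largest index in $[a\ldots q']$ with $\alpha_{z-1}<-\alpha$. Assume toward a contradiction that $p_z$ is passive and $z\notin\{1,|\P|\}$, so $\alpha_z\le\alpha_{z-1}<-\alpha$. If $z\neq q'$, then $z+1\in[a\ldots q']$ and $\alpha_{(z+1)-1}=\alpha_z<-\alpha$ contradicts the maximality of $z$. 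Otherwise $z=q'$, so $p_z$ is the first point strictly below the negative ray emanating from $p_q$; then $p_{z+1}$ (the point just to its right, which is not below the ray) lies on or above the ray, while $p_z$ lies strictly below it, which forces the slope of the segment $\ell_z$ between $p_z$ and $p_{z+1}$ to be larger than $-\alpha$, i.e. $\alpha_z>-\alpha$, contradicting $\alpha_z\le\alpha_{z-1}<-\alpha$. In all cases we reach a contradiction, so $p_z$ is active (or $z\in\{1,|\P|\}$, which is also active by definition).

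The only subtlety — and the step I expect to require the most care — is the geometric claim in the last case, that a point strictly below a negative ray whose right-neighbor is on or above that ray must be the right endpoint of a segment with slope exceeding $-\alpha$; this is exactly the right-to-left analogue of the observation in \cref{lem:activerays} that ``$p_{z-1}$ is above the ray implies $\alpha_{z-1}>\alpha$'', and one should double-check the direction of the inequality under the sign flip. Everything else is a routine symmetric translation of the proof of \cref{lem:activerays}, and the cleanest write-up is to formalize the reflection $x\mapsto -x$ once and invoke \cref{lem:activerays} as a black box, rather than repeating the case analysis.
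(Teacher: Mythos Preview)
Your proposal is correct and follows essentially the same route as the paper's proof: a right-to-left mirror of the \cref{lem:activerays} argument, with the same two cases (first shooting point vs.\ subsequent ones) and the same sub-split $z\neq q'$ / $z=q'$. The only cosmetic difference is in the $z=q'$ case, where you derive $\alpha_z>-\alpha$ from ``$p_z$ below, $p_{z+1}$ on or above'' and contradict $\alpha_z\le\alpha_{z-1}<-\alpha$, whereas the paper argues the contrapositive (from $\alpha_z<-\alpha$ and $p_z$ below it deduces $p_{z+1}$ is below, contradicting that $p_z$ is the first such point); these are the same geometric fact.
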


Let $\AP = q_1,q_2 \ldots$ be the restriction of the sequence $\P$ to the active points. 
We can think of the active points as defining a piecewise linear function whose segments are a coarsening of the segments of $A$. We refer to these segments as {\em mega-segments}. 
Let $\gamma_z$ denote the slope of the mega-segment whose endpoints are $q_z$ and $q_{z+1}$. 
The following lemma asserts that the segments of $A$ are never below their corresponding mega-segments, and that the slope of a segment starting at an active point is never smaller than the slope of the mega-segment starting at the same point. 

\begin{lemma} \label{lem:above-mega}
Let $q_z = p_w$ and $q_{z+1} = p_{w'}$ be two consecutive active points.
For every $k \in [w \ldots w']$, the passive point $p_k$ is not below the mega-segment connecting $q_z$ and $q_{z+1}$.
Furthermore, $\alpha_w \ge \gamma_z$.
\end{lemma}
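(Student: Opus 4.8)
The plan is to prove both assertions simultaneously by induction on the number of range operations performed so far, showing that they are invariants maintained by every operation. For the base case, upon initialization $A$ is a single segment, so there is a single mega-segment, it coincides with that segment, and both assertions hold trivially. For the inductive step, I would assume both properties hold before an operation and verify they still hold after each of the five range operations of \cref{def:rangeops}; since $\Lookup$ does not modify $A$, only $\AC$, $\AG$, $\LLW$, and $\RLW$ need to be checked, and by the symmetry discussion it suffices to handle $\AC$, $\AG$, and $\LLW$.

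For $\AC(i,j,c)$, the operation rigidly shifts the sub-curve $A[i\ldots j]$ up by $c$; after re-breaking the curve at the $O(1)$ new breakpoints near $i$ and $j$, I would argue that the set of active points changes only locally, and on the shifted interior the relative position of each passive point to its surrounding mega-segment is preserved (both got shifted by the same $c$), while the new boundary segments $[i-1,i]$ and $[j,j+1]$ can be checked directly. The slope condition $\alpha_w \ge \gamma_z$ is likewise preserved since slopes inside $[i\ldots j]$ are unchanged, and the new boundary points need a short case analysis. The argument for $\AG(i,j,g)$ is analogous: it adds a linear function to a sub-curve, which adds $g$ to every slope in the range, so both the ``above'' relation (adding a common linear function preserves being above a line) and the slope inequality are preserved up to the $O(1)$ boundary segments.

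The main obstacle is $\LLW(i,j,\alpha)$, where I would use the ray-shooting characterization from Section~\ref{sec:warmup} together with \cref{lem:activerays}. The key points are: (i) a ray shot from an active point $p_z$ with slope $\alpha < \alpha_z$ lies below $\ell_z$, hence below all the segments it sweeps, so the new segment it creates stays above them; (ii) since ray shootings occur only at active points (or at $p_a$), the new breakpoints inserted are active, and passive points that survive are not moved, so I must check that each surviving passive point is still above the (possibly new) mega-segment through its neighbouring active points — this follows because the new mega-segments are either unchanged or are exactly the rays, which lie below the old curve; (iii) for the slope inequality, after an $\LLW$ the slope of a segment leaving an active point is either unchanged or equals $\alpha$, and in the latter case the mega-segment leaving that point also has slope $\alpha$ or less, so $\alpha_w \ge \gamma_z$ persists. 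I expect the delicate part to be tracking how the active/passive status and the mega-segment decomposition change across the ray-shooting process — in particular confirming that no passive point is ``left behind'' below a newly formed mega-segment — and this is where I would lean most heavily on \cref{lem:activerays} and on the geometry that a ray with slope $\alpha < \alpha_z$ shot from $p_z$ never rises above a curve whose segments all have slope $\ge \gamma$ for the relevant mega-segment.
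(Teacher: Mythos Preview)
Your approach is far more elaborate than necessary, and in fact misses the point of the lemma: it is a purely structural statement about any sequence of points $\P$ with the given definition of active/passive, independent of how $\P$ arose. No induction on the history of operations is needed.

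The paper's proof is a three-line direct argument. By definition, every point strictly between $q_z=p_w$ and $q_{z+1}=p_{w'}$ is passive, so the slopes $\alpha_w,\alpha_{w+1},\ldots,\alpha_{w'-1}$ are non-increasing. Now suppose for contradiction some $p_{k'}$ with $k'\in(w,w')$ lies strictly below the mega-segment, and take the smallest such $k'$. Then $p_{k'-1}$ is not below the mega-segment while $p_{k'}$ is, so $\alpha_{k'-1}<\gamma_z$. Since the slopes are non-increasing from there on, every subsequent $\alpha_k$ is also $<\gamma_z$, forcing $p_{w'}=q_{z+1}$ itself below the mega-segment --- a contradiction, since $q_{z+1}$ is an endpoint of that segment. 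The ``furthermore'' clause $\alpha_w\ge\gamma_z$ follows immediately: if $\alpha_w<\gamma_z$ then already $p_{w+1}$ would lie below the mega-segment.

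Beyond the unnecessary complexity, your inductive plan has a potential circularity: the analysis of where $\LLW$ ray-shooting processes terminate (which you would need to track how mega-segments change) is precisely \cref{lem:rayshootinggamma}, and that lemma \emph{uses} \cref{lem:above-mega}. Your closing sentence, where you ``lean on the geometry that a ray \ldots never rises above a curve whose segments all have slope $\ge\gamma$'', is already assuming a stronger form of the conclusion you are trying to establish (and is not even true as stated: slopes within a mega-segment are non-increasing but can drop below $\gamma_z$).
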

\begin{proof}
(See \cref{fig:abovemegaproof}) Assume by contradiction that there is a point below the mega-segment, and let $k'\in (w\ldots w')$ be the smallest index of such a point.
Since $p_{k'-1}$ is not below the mega-segment and $p_{k'}$ is below the mega-segment, we must have $\alpha_{k'-1} < \gamma_z$.
Moreover, since the points $p_k$ with $k \in [k' \ldots w')$ are passive, the slopes are non-increasing and therefore every $\alpha_k \leq \gamma_z$.
This means that all these points and in particular $p_{w'}$ are below the mega-segment. In contradiction to $p_{w'}$ lieing on the mega-segment.
\end{proof}

\begin{figure}[htb]
  \begin{center}
  \includegraphics[scale=0.8]{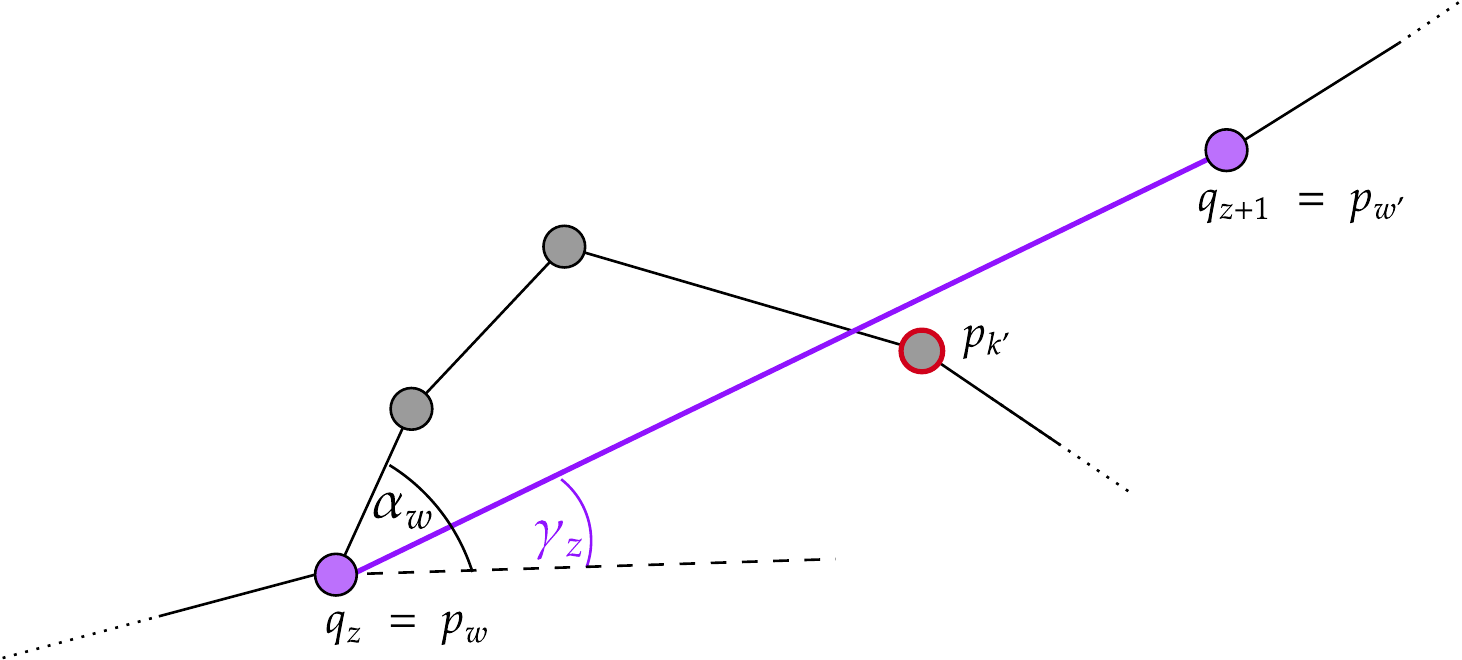} 
  \caption{The impossible configuration in  \cref{lem:above-mega}. 
  The points $q_z$ and $q_{z+1}$ are represented by the purple points and the mega-segment connecting them is represented by a thick purple line.
  The first point $p_{k'}$ below the segment is marked with red stroke. Since the points strictly within the mega-segment are passive, the points following $p_{k'}$ within the mega-segment (and in particular $q_{z+1}$) must remain below the mega-segment. \label{fig:abovemegaproof}}
   \end{center}
\end{figure}
We next show that if a ray shooting process starts at an active point $q_z$ with $\gamma_z< \alpha$ then the process ends before $q_{z+1}$, and the only affected points are the passive points between $q_z$ and $q_{z+1}$. On the other hand, if $\gamma_z\ge \alpha$ then as a result of the process $q_{z+1}$ ceases to be an active point, so $|\AP|$ decreases. 
\begin{lemma}\label{lem:rayshootinggamma}
    Consider a ray shooting process starting from point $p_w = q_z \in \AP$ during the application of $\LLW(i,j,\alpha)$. Let $q_{z+1} = p_{w'}$.
    \begin{enumerate}
    \item No new active points $p=(x,y)$ with $x \neq j$ are created in this process.
    \item If $\gamma_z<\alpha$ then the points that are deleted by this process are the (passive) points  $p_{k}$ with $k\in [w+1 \ldots r]$ for some $w < r < w'$.
    No other points between $p_w$ and $p_{w'}$ are deleted by $\LLW(i,j,\alpha)$. 
    \item If $\gamma_z \ge \alpha$ then $q_{z+1}$ is either deleted or becomes passive.
    \end{enumerate}
\end{lemma}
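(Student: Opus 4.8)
\textbf{Proof proposal for \cref{lem:rayshootinggamma}.}

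The plan is to analyze the ray shooting process from $p_w = q_z$ by carefully tracking where the ray $\ray_z$ (slope $\alpha$, shot from $q_z$) lies relative to the segments of $A$ and relative to the mega-segment from $q_z$ to $q_{z+1}$. The three parts share a common engine: by \cref{lem:above-mega}, every point $p_k$ with $k \in [w \ldots w']$ lies on or above the mega-segment $M$ connecting $q_z$ and $q_{z+1}$, and $\alpha_w \ge \gamma_z$. I will use this to locate the first point that falls below $\ray_z$.

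For part (2), assume $\gamma_z < \alpha$. The key observation is that $\ray_z$ starts at $q_z$, which is the left endpoint of $M$, and has slope $\alpha > \gamma_z$, so $\ray_z$ lies strictly above $M$ for all $x > x_w$; in particular $\ray_z(x_{w'}) > y_{w'}$ since $q_{z+1} = p_{w'}$ lies exactly on $M$. Hence the first index $z'$ with $p_{z'}$ strictly below $\ray_z$ satisfies $z' \le w'$, so the ray shooting process terminates at or before $q_{z+1}$, i.e. the intersection point $p^*$ lies on a segment $\ell_{z'-1}$ with $z'-1 < w'$. The deleted points are exactly $p_{z+1}, \ldots, p_{z'-1}$ (in the notation of the warmup, $p_{w+1}, \ldots, p_{r}$ with $r = z'-1 < w'$), all of which are strictly between $q_z$ and $q_{z+1}$ and therefore passive. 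It remains to check the boundary: if $z' = w'$, then $p_{w'}=q_{z+1}$ is the first point below $\ray_z$ but it lies on $M$ hence not strictly below $\ray_z$ by the inequality above — so actually $z' < w'$ and $q_{z+1}$ is not touched. This also shows no point outside $(q_z, q_{z+1})$ is deleted.

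For part (1), the only points created by a ray shooting process are $p' = (\floor{x^*}, \ray_z(\floor{x^*}))$ and possibly $p'' = (\ceil{x^*}, \ell_{z'-1}(\ceil{x^*}))$. The point $p'$ has the segment of slope $\alpha$ (that of $\ray_z$) entering it and a segment of slope at most $\ell_{z'-1}$'s slope leaving it; since $p'$ is where the new ray ends, the outgoing slope is $\le \alpha$ (it equals the slope of $\ell_{z'-1}$, which at the point of intersection must be $\le \alpha$, or $p'=p''$ coincide after cleanup), so $p'$ is not active unless it is the global last point $p_{|\P|}$, which happens only when $x^* = x_b$, i.e. $x = j$. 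The point $p''$, when created, has incoming slope $\alpha$ (from the extended $\ell_z$, now passing through $p'$) — wait, more precisely $p''$ lies on $\ell_{z'-1}$ with incoming slope $\alpha$ and outgoing slope $\alpha_{z'-1} > \alpha$ only if $p''$ were below... — here I need to argue that $p''$'s incoming slope is $\alpha$ and its outgoing slope is that of the old $\ell_{z'-1}$ which is $\le \alpha_{z'-1}$; but $\alpha_{z'-1} > \alpha$ is possible, so $p''$ could be active. I should instead argue that $p''$ coincides with or lies just before $p_{z'}$, and that $p''$ becomes active only when... Actually the cleanest route: a newly created point with $x \ne j$ that is active would have the ray (slope $\alpha$) entering it and a strictly larger slope leaving, but then that point lies below the ray's continuation, contradicting that the process stopped exactly at $p^*$; so no such active point is created. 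I expect this case analysis for $p''$ to be the main obstacle, and I will handle it by appealing to \cref{lem:above-mega} applied to the mega-segment that $p''$ falls into after the update.

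For part (3), assume $\gamma_z \ge \alpha$. Now $\ray_z$ lies on or below $M$ for all $x \ge x_w$ (same slope comparison), hence at $x = x_{w'}$ we have $\ray_z(x_{w'}) \le y_{w'}$, i.e. $q_{z+1} = p_{w'}$ is on or above $\ray_z$. So the ray shooting process does not stop strictly before $q_{z+1}$: either it continues past $q_{z+1}$ (in which case $q_{z+1}$ is deleted, being above $\ray_z$ and hence swallowed), or $q_{z+1}$ lies exactly on $\ray_z$ and becomes the new endpoint $p'$ with incoming slope $\alpha$; since the outgoing slope at $q_{z+1}$ was $\gamma_{z+1}' \le \alpha_{w'}$ and after the update the incoming slope is $\alpha \ge$ the old incoming slope's effective value... the relevant point is that $q_{z+1}$ had incoming slope $> \alpha$ before (it was active) and now has incoming slope $\alpha$; if its outgoing slope is $\le \alpha$ it becomes passive. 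The remaining case — $q_{z+1}$ on $\ray_z$ with outgoing slope $> \alpha$ — forces the process to continue past it anyway, so it is deleted. Either way $q_{z+1}$ is deleted or becomes passive, as claimed.
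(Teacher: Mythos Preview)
Your overall architecture matches the paper's, but there are two concrete gaps and one confused passage.

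\textbf{Part (1), the point $p''$.} Your proposed ``cleanest route'' assumes that the segment entering a newly created point has slope $\alpha$ (the ray). This is true for $p'$ but \emph{not} for $p''$: the segment entering $p''$ goes from $p'=(\lfloor x^*\rfloor,\ray_z(\lfloor x^*\rfloor))$ to $p''=(\lceil x^*\rceil,\ell_{z'-1}(\lceil x^*\rceil))$, and its slope is strictly less than $\alpha$ (since $p''$ lies below the ray). So your contradiction does not fire for $p''$. The correct argument, which the paper gives, is direct: since $p_{z'-1}$ is on or above $\ray_z$ and $p_{z'}$ is strictly below, the segment $\ell_{z'-1}$ has slope $\alpha_{z'-1}<\alpha$. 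The slope of the segment $p'p''$ is a convex combination of $\alpha$ and $\alpha_{z'-1}$, hence at least $\alpha_{z'-1}$. Thus the outgoing slope at $p''$ (which is $\alpha_{z'-1}$) does not exceed the incoming slope, and $p''$ is passive. Your worry that ``$\alpha_{z'-1}>\alpha$ is possible'' is exactly the missing observation: it is \emph{not} possible.

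\textbf{Part (2), second sentence.} You never address the claim that \emph{no other} points between $p_w$ and $p_{w'}$ are deleted by the remainder of $\LLW(i,j,\alpha)$. Your sentence ``This also shows no point outside $(q_z,q_{z+1})$ is deleted'' is about the current ray shooting, not about subsequent ones. The paper uses Part (1) here: since the current process creates no new active points (other than possibly at $x=j$), and by \cref{lem:activerays} rays are shot only from active points, the next ray is shot from $q_{z+1}$ or later, so the passive points that survived in $(q_z,q_{z+1})$ are never revisited.

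\textbf{Part (2), boundary.} Your paragraph ``if $z'=w'$, then $p_{w'}=q_{z+1}$ \ldots lies on $M$ hence not strictly below $\ray_z$ \ldots so actually $z'<w'$'' is self-contradictory: you already (correctly) established $\ray_z(x_{w'})>y_{w'}$, i.e.\ $q_{z+1}$ \emph{is} strictly below $\ray_z$, so $z'=w'$ is perfectly possible. Fortunately this does not hurt the conclusion: when $z'=w'$, the process stops at $q_{z+1}$ without deleting it, the deleted points are $p_{w+1},\ldots,p_{w'-1}$, and $r=w'-1<w'$ as required. Just delete the confused boundary discussion.

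Part (3) is essentially fine once you make explicit the step you set up at the start: by \cref{lem:above-mega} every $p_k$ with $k\in[w,w']$ lies on or above $M$, and since $\ray_z$ lies on or below $M$ when $\gamma_z\ge\alpha$, every such $p_k$ lies on or above $\ray_z$. This is what guarantees the process reaches $q_{z+1}$; without saying it explicitly your sentence ``the ray shooting process does not stop strictly before $q_{z+1}$'' is a non sequitur.
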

\begin{proof}
Let $\ell$ be the ray starting from $p_w=q_z$. Assume the process terminates by finding the first point $p^* = (x^*,y^*)$ below $\ell$ (the only process that does not end this way is the one that ends by reaching $(j,A[j])$).
The ray shooting process adds at most two new points $p'$ and $p''$ with decreasing slopes, so no new active points are created by the process.
The slope of $p'$ is decreasing because the segment entering $p'$ is (a sub-segment of) $\ell$ and the segment leaving $p'$ is to a point below $\ell$.
The slope of $p''$ is decreasing because the line segment entering $p''$ is a line from $p'$ (a point on $\ell$) and the line segment leaving $p''$ is to the suffix of a line segment below $\ell$.
 
Consider the case $\gamma_z < \alpha$. Then $q_{z+1}$ is below the ray with slope $\alpha$ starting at $q_z$. 
Hence the ray shooting process terminates at a point after $p_{w+1}$ and before $q_{z+1}$. Since no active points are created, the next ray will be shot from $q_{z+1}$ or later, so no other points between $q_z$ and $q_{z+1}$ are deleted by $\LLW(i,j,\alpha)$. 

Now consider the case $\gamma_z > \alpha$. Then the mega-segment between $q_z$ and $q_{z+1}$ is above the ray with slope $\alpha$ shot from $q_z$. By \cref{lem:above-mega}, all the (passive) points between $q_z$ and $q_{z+1}$ are also above this ray. Hence $q_{z+1}$ is deleted by the ray shooting process.

Finally, consider the case $ \gamma_z=\alpha$. Then the mega-segment between $q_z$ and $q_{z+1}$ coincides with the ray with slope $\alpha$ shot from $q_z$. By \cref{lem:above-mega}, all the (passive) points between $q_z$ and $q_{z+1}$ will be deleted by the ray shooting process. Let $w'$ be such that $q_{z+1}=p_{w'}$. If $\alpha_{w'} \geq \alpha$ then $q_{z+1}$ will be deleted by the process. Otherwise, $\alpha_{w'} < \alpha$, so the ray shooting process terminates at $q_{z+1}$. Since all the passive points between $q_z$ and $q_{z+1}$ were deleted, $q_z$ and $q_{z+1}$ become consecutive in $\P$, and the slope of the corresponding segment is $\gamma_z = \alpha$. But the slope of the segment starting at $q_{z+1}$ is $\alpha_{w'} < \alpha$, so $q_{z+1}$ becomes passive.
\end{proof}

We call rays with $\gamma_z > \alpha$ \emph{\longrange}rays, and those with $\gamma_z \le \alpha$ \emph{\local}rays. Since \longrange rays decrease $\AP$ we can handle them explicitly as in the warmup, charging the deletion of passive points during the process to their creation, and charging the insertion of the at most two passive points at the end of the process to the decrease in $|\AP|$. The \local rays, which do not decrease $|\AP|$, will be handled lazily. Namely, instead of explicitly shooting a \local ray in the mega-segment starting at an active point $q_z$, we only store the slope of the ray and postpone its execution until it is required (e.g., by a $\Lookup$ operation). Note that subsequent \local rays shot in this mega-segment may further change the stored slope, and subsequent \longrange rays may also affect it. We explain this in detail next. 

\subsection{The Data Structure}
Since our data structure is lazy, the sequence of points it maintains will be different than the sequence $\P$ that would have been maintained had we used the warmup algorithm from \cref{sec:warmup}. We will therefore use $\TP$ to denote the set of points actually maintained by the data structure. The points $\TP$ define linear segments $\tell_i(x)$ in the usual way. For $x \in [1,n]$ we denote by $\TA[x]$ the value $\tell_i(x)$, where $\tell_i$ is the segment containing $x$.
We stress that our algorithm does not maintain $\P$. However, for the sake of description and analysis only we shall keep referring to the original $\P$, and array $A$.
The definition of active and passive points, of the slopes $\gamma$ of mega-segments, and of \local and \longrange rays are now with respect to the slopes of the $\tell_i$'s.\footnote{It would have been more accurate to use $\tilde \alpha, \tilde \beta$, and $\tilde \gamma$, but this would be too cumbersome, so we stick to using $\alpha, \beta, \gamma$.} 
However, we shall maintain that the set of active points with respect to $\P$ and $\AP$ is the same:
\begin{invariant}\label{inv:TAP-AP}
$\TAP = \AP$.	
\end{invariant}

Following \cref{sec:warmup}, we maintain $\TP$ in a predecessor/successor data structure, as well as the Interval-add data structures $D_\alpha$ and $D_\beta$ representing the parameters of the linear segments $\tell_i(x)$ defined by the points of $\TP$. 
By implementing $\AC, \AG$ and \longrange ray shootings similarly to \cref{sec:warmup} (the exact details will be spelled out below),  we shall maintain the invariant that this part of the data structure correctly represents the values of active points.\footnote{See  \cref{inv:lazyrs} and the note following it.}  

We maintain the set of active points $\TAP=(q_1, q_2, \ldots)$ using a predecessor/successor structure on their $x$-coordinates. 
For each $q_z \in \TAP$, we maintain the slope $\gamma_z$ of the mega-segment starting at $q_z$ in an Interval-add data structure $D_\gamma$. In addition, we maintain a pending \local ray $\ray_z$ with slope $\rl{z}$ passing through $q_z$ (see \cref{fig:MegaSegmentData}) by maintaining $\rl{z}$ in a data structure $D_\rho$. This data structure, which we call the \addmin data structure is summarized below and described in detail in \cref{sec:minadd}. 

\begin{lemma}[\addmin Data Structure]\label{lem:addminds}
      There exists a data structure supporting the following operations in $O(\polylog n)$ time on a set of points $S$. 
    \begin{enumerate}
        \item $\Insert(x,y)$ - insert the point $(x,y)$ to $S$.
        \item $\Remove(x)$ - remove the a point $p=(x,y)$ from $S$, if such a point exists.
        \item $\Lookup(x)$ - Return $y$ such that $p=(x,y)$ is in $S$, or report that there is no such point.
        \item $\AtR(i,j,c)$ - for every $p=(x,y)\in S$ with $x\in[i\ldots j]$ set $y\assign y +c$. 
        \item $\Min(i,j,c)$ - for every $p=(x,y)\in S$ with $x\in[i\ldots j]$ set $y \assign \min (y ,c)$. 
    \end{enumerate}
    \end{lemma}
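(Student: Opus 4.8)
The plan is to maintain $S$ in a balanced search tree $T$ keyed by the first coordinates (assume these are distinct, as is the case in the application), storing the second coordinates in a lazy/delta representation as in the tree underlying \cref{l:IntervalAdd}, but augmented so that each node carries a richer lazy tag. The crucial observation is that both $\AtR$ and $\Min$ act on the value of an affected point by a map of the form $x \mapsto \min(x + a, m)$ with $(a,m) \in \R \times (\R \cup \{+\infty\})$ (the update $\AtR(\cdot,\cdot,c)$ uses $(c, +\infty)$ and $\Min(\cdot,\cdot,c)$ uses $(0, c)$), and that this family of maps is closed under composition: applying $(a,m)$ and then $(a',m')$ yields the map with parameters $\big(a + a',\ \min(m + a', m')\big)$, which is computable in $O(1)$, with identity $(0, +\infty)$. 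Consequently every node $v$ of $T$ can hold a single tag $(a_v, m_v)$ representing a transformation that is pending for the points in the subtree of $v$ (to be applied after the pending tags of the ancestors of $v$ have been pushed below $v$).

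With this representation the operations are implemented by standard lazy propagation. $\Lookup(x)$ walks from the root to the leaf of $x$, composes the tags it encounters in the appropriate (bottom-up) order, and applies the resulting map to the stored value of the leaf. For $\AtR(i,j,c)$ and $\Min(i,j,c)$ we decompose $\{\,p=(x,y)\in S : x\in[i\ldots j]\,\}$ into $O(\log n)$ subtrees in the usual way; while descending to each such subtree root we \emph{push down} the tag of every node on the path (composing it into the tags of its two children, or applying it to a child that is a leaf, and then resetting that node's tag to the identity), so that when we reach a subtree root $v$ its ancestors are tag-free and $(a_v,m_v)$ is exactly the full pending transformation of its subtree; we then compose the new map $(c,+\infty)$ or $(0,c)$ \emph{on top of} $(a_v, m_v)$. $\Insert(x,y)$ and $\Remove(x)$ push down along the search path to $x$, perform the structural modification at the bottom among tag-free nodes, and rebalance via rotations, pushing a node's tag below it before it participates in a rotation.

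The part that requires care — and is the reason this is more than the interval-add structure of \cref{l:IntervalAdd} — is the \emph{order} in which tags compose. A range update must behave as the outermost map on precisely the affected points, whereas a node's tag sits beneath (is applied before) the tags of its ancestors; pushing all ancestor tags below before touching a subtree is exactly what makes the newly composed tag land in the correct position. For the same reason a point inserted now must be shielded from tags issued before it existed: since we push all tags on the root-to-insertion-point path below before creating the new leaf, the new leaf is born under tag-free nodes and starts with value exactly $y$, as required. I expect the main conceptual point to be that, unlike the ``segment tree beats'' solutions that also support range-sum queries and therefore need an amortized case analysis (cf.\ the discussion in \cref{section:introduction}), here we maintain \emph{no} range aggregate whatsoever, so the lazy propagation above needs no recursion-on-a-potential argument: every operation touches $O(\log n)$ nodes and spends $O(1)$ time per node, giving $O(\log n)$ worst-case time per operation, which is within the claimed $O(\polylog n)$ bound.
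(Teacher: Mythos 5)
Your proposal is correct, but it takes a genuinely different and in fact simpler route than the paper's. Your key observation is that the maps $y \mapsto \min(y+a,m)$ form a monoid under composition (identity $(0,\infty)$, composition $(a,m)$ then $(a',m')$ giving $\bigl(a+a',\ \min(m+a',m')\bigr)$), that both $\AtR$ and $\Min$ lie in this family, and that because the data structure supports only point $\Lookup$ and no range aggregate, ordinary lazy propagation with these tags on a balanced BST suffices. This gives a clean $O(\log n)$ worst-case bound per operation (with the usual care about pushing tags before rotations and before inserting a fresh leaf). The paper instead builds a recursive layered structure: it pairs up consecutive points, stores a per-point interval-add structure $D$ plus a recursive \addmin instance $R$ over the pair representatives, and maintains the invariant $y_k = \min(D[x_k], R[M[x_k]])$; $\Min$ is pushed into $R$ (after explicitly fixing the $O(1)$ boundary pairs) and $\AtR$ into both $D$ and $R$, with the bulk of the case analysis devoted to maintaining a $1$-or-$2$-per-segment partition under $\Insert$ and $\Remove$. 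That recursion has depth $O(\log n)$ with polylog work per level, landing at $O(\log^4 n)$ per operation. Both solutions are worst-case (which is what the authors emphasize, in contrast to the amortized ``segment tree beats'' solution cited in the introduction), but your composable-tag approach is both conceptually lighter and asymptotically faster by polylog factors, and it makes transparent exactly why dropping range-sum queries removes the need for any amortized/potential argument — precisely the point the paper's footnote gestures at without exploiting.
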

    
Note that storing $\rl{z}$ suffices to compute $\ray_z(x)$ since the active point $q_z$ that determines the free coefficient of $\ray_z$ is correctly represented by $D_\alpha$ and $D_\beta$.  
We shall show that storing a single pending ray suffices to represent all the pending changes in a mega-segment. This property will rely on maintaining the following invariant.
\begin{invariant}\label{inv:rr_rl}
    For every active point $q_z$ we have $\rl{z}>\gamma_z$. (Recall that $\gamma_z$ is the slope of the mega-segment connecting $q_z$ and $q_{z+1}$.)
\end{invariant}

The idea is that with this representation, for any $x$, the value of $A[x]$ is given by the minimum of the value $\TA[x] = \ell_w(x)$ of the segment of $\TP$ containing $x$, and the value $\ray_z(x)$ of the pending \local ray for the mega-segment containing $x$. This is captured by the following main invariant maintained by the data structure.

\begin{invariant}\label{inv:lazyrs}
Let $x \in [1,n]$, and let $p_w$ and $q_z$ be the predecessor of $x$ in $\TP$ and in $\TAP$, respectively. 
It holds that $A[x] = \min (\tell_w(x), \ray_z(x))$. Furthermore, if $p=(x,A[x])$ is an active point in $\P$, then $A[x] = \TA[x]$.
\end{invariant}

\begin{figure}[htb]
  \begin{center}
 \includegraphics[scale=0.8]{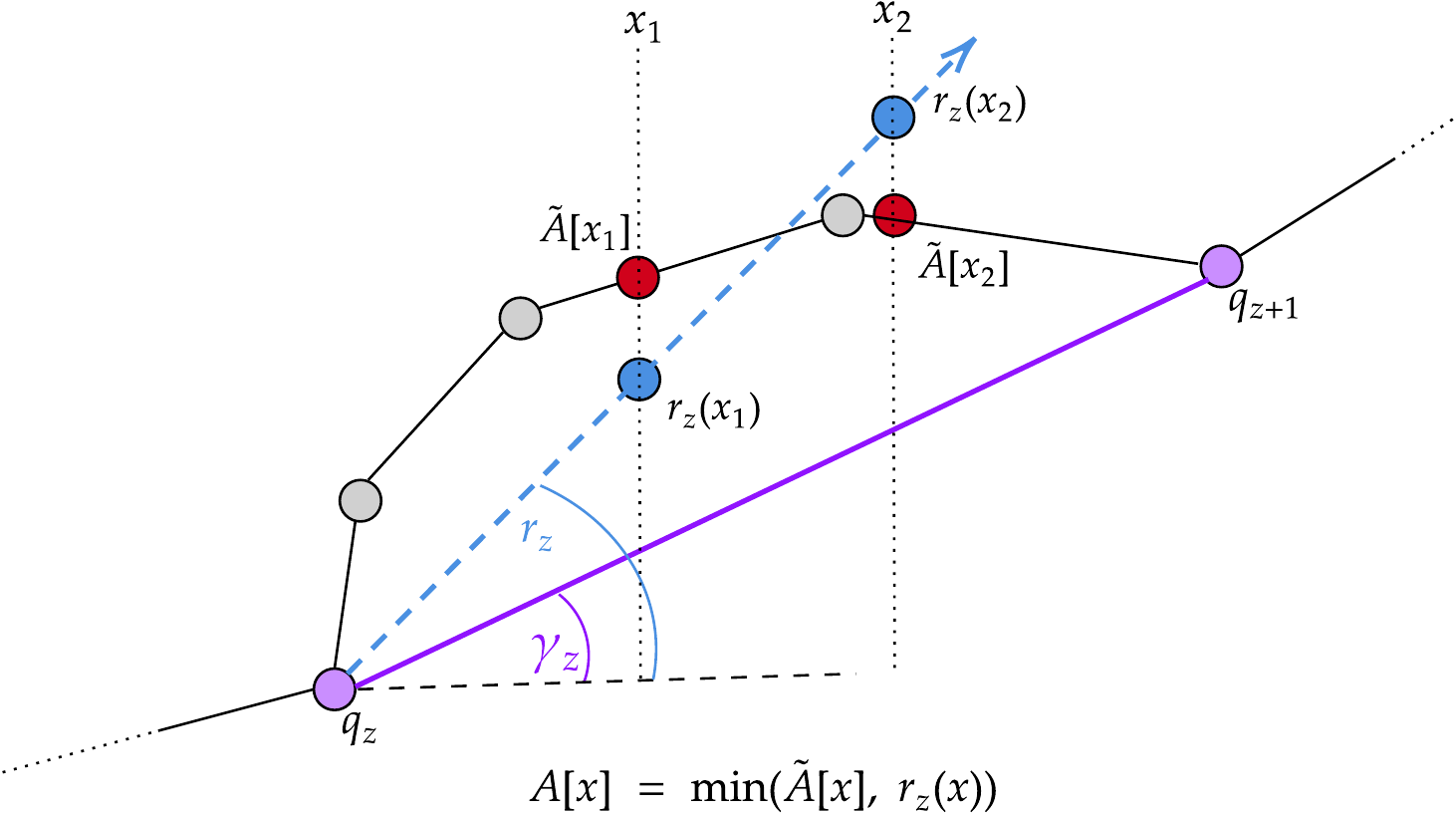} 
  \caption{An illustration of the data stored for a mega-segment between two consecutive active points $q_z$ and $q_{z+1}$ (purple points).
  The slope $\gamma_z$ is the slope of the mega-segment. 
  The slope $\rl{z} > \gamma_z$ stored in $q_z$ represents a pending ray $\ray_{z}$ (dashed blue) that should be shot from $q_z$.
  The value of $A[x]$ is the minimum between $\ray_{z}(x)$ (a blue point) and the $\TA[x]$ (a red point), the value of the piece-wise linear function defined by $\TP$ (in grey). \label{fig:MegaSegmentData} 
  }
   \end{center}
\end{figure}

Note that the first part of \cref{inv:lazyrs}, together with \cref{inv:TAP-AP} implies the second part of \cref{inv:lazyrs}.
This is because the predecessor of $x$ for an active point $p=(x,A[x])$ in $\AP$ is itself. Since $\AP=\TAP$ we have that $p\in \TAP$ is the predecessor of $x$ in $\TAP$ as well.
By definition $r_z$ goes through $p = q_z$, so $r_z(x) = \TA[x]$, and $\TA[x] = \tell_w(x)$ by definition. Hence, when proving that the invariants are maintained, we will not need to explicitly establish the second statement in \cref{inv:lazyrs}.

Initially, $\TP = \P = \{(1,0), (|A|,0)\}$, and 
$\rl{1} = \rl{2}  = \infty$. 
Indeed, $A[x] = \min(\TA[x], \ray_1(x)) = \min(0,\infty) = 0$ and \cref{inv:lazyrs} is satisfied.
It remains to specify  the implementation of the various operations supported by the data structure, to prove that the invariants are  maintained, and to analyze the running times.

\para{The $\flush$ Operation.}
We first describe a service operation $\flush(q_z)$ which explicitly shoots the pending \local ray in the mega-segment starting at the active point $q_z$. 
It will be useful to invoke $\flush$ before serving  $\Lookup$ operations, but also when serving the other operations in order to guarantee that the lazy implementation properly follows the explicit implementation in the warmup. This is particularly important in operations which may create $O(1)$ new active points and thus change the partition into mega-segments, but is also useful to streamline the proof of correctness. 
Recall that the reason we avoided shooting local rays in the first place was that there could be many of them, and we could not afford to pay for the possible creation of $O(1)$ new passive points at the end of each of them. We can afford, however, to perform $O(1)$ $\flush$ operations before each $\Lookup, \AC $ or $\AG$ operation, because the cost of adding the $O(1)$ new points can be charged to the operation itself. 

A $\flush$ of $q_z = p_w \in \AP$ is performed as follows.
Starting from $p_{w+1}$, we scan the points in $\TP$.
When scanning $p=(x,y)$, we compare $y$ and $\ray_z$. 
If $\ray_z(x) \le y$, we remove $p$ from $\TP$.
Otherwise, the scan halts. 
Let $p_{\mathsf{end}}$ be the point on which the scan halts.
If no point was deleted throughout the scan, we set $\rl{z} = \infty$ and terminate.
Otherwise, let $p_{\mathsf{del}}$ be the last point deleted by the scan.
We compute the intersection $p^*=(x^*,y^*)$ of $\ray_z$ and the line $\tell$ between $p_{\mathsf{del}}$ and $p_{\mathsf{end}}$.
Finally, we insert $p'=(\floor{x^*}, \ray_z(\floor{x^*}))$ and $p''= (\ceil{x^*}, \tell(\ceil{x^*}))$ to $\TP$ (as in the warmup algorithm of \cref{sec:warmup}), update $D_\alpha$ and $D_\beta$ with the new parameters of the segments ending and starting at $p'$ or at $p''$, and set $\rl{z} = \infty$.

\begin{lemma}\label{lem:flush}
Applying $\flush$ to an active point $q_z \in \AP$ preserves \cref{inv:rr_rl,inv:lazyrs,inv:TAP-AP}.
Furthermore, it guarantees that 
the restriction of $\P$ and $\TP$ to the (passive) points between $q_z$ and $q_{z+1}$
is identical,
and that for every $x\in[x_z \ldots x_{z+1}]$, $A[x]=\TA[x]$.
\end{lemma}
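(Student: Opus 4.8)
The key observation is that $\flush(q_z)$ does not change the (conceptual) array $A$ — and hence does not change $\P$, $\AP$, the mega‑segment slopes $\gamma$, nor the pending rays of mega‑segments other than the $z$‑th one; it only re‑organizes the representation $\TP$ inside the single mega‑segment $[x_z\ldots x_{z+1}]$ between $q_z$ and $q_{z+1}$, and sets $\rl{z}\assign\infty$. Consequently all four invariants and the two ``furthermore'' claims are automatic for indices outside $[x_z\ldots x_{z+1}]$, and the whole proof reduces to analyzing what happens inside this range.

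\textbf{Step 1: the scan stays inside the mega‑segment.} Let $q_z=p_w$ and $q_{z+1}=p_{w'}$. By the second part of \cref{inv:lazyrs}, the active points are correctly represented, so $A[x_z]=y_w$, $A[x_{z+1}]=y_{w'}$, and since both lie on the mega‑segment of slope $\gamma_z$ we have $y_{w'}=y_w+(x_{z+1}-x_z)\gamma_z$. By \cref{inv:rr_rl}, $\rl{z}>\gamma_z$, hence $\ray_z(x_{z+1})=y_w+(x_{z+1}-x_z)\rl{z}>y_{w'}$, i.e. $p_{w'}$ lies strictly below $\ray_z$. Therefore the scan of $\flush$ halts at $p_{w'}$ or earlier and never deletes $q_z$ or $q_{z+1}$; in particular $\TAP$ (and $D_\gamma$) are untouched by $\flush$. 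Since $\AP$ is unchanged and $\TAP=\AP$ held before, \cref{inv:TAP-AP} is preserved.

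\textbf{Step 2: correctness of the ray shooting and the remaining claims.} By \cref{inv:lazyrs}, before the $\flush$ we have $A[x]=\min(\tell(x),\ray_z(x))$ for every $x\in[x_z\ldots x_{z+1}]$, where $\tell$ is the old $\TP$‑segment containing $x$; that is, on this mega‑segment $A$ is the lower envelope of the old segments and of the single pending ray $\ray_z$. The scan/insert steps of $\flush(q_z)$ are precisely the ray‑shooting process from $p_w$ with slope $\rl{z}$, which by Step 1 remains confined to $[x_z\ldots x_{z+1}]$, so by the ray‑shooting correctness lemma (\cref{lem:ray} in the appendix) the points it produces represent exactly this lower envelope. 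Hence after the $\flush$, $\TA[x]=A[x]$ for every $x\in[x_z\ldots x_{z+1}]$, which is the second ``furthermore'' claim. For the first ``furthermore'' claim, the two new points $p',p''$ created at the end of the scan have strictly decreasing slopes (same argument as in the proof of \cref{lem:rayshootinggamma}), so they are genuine passive breakpoints and no merges are missed; thus $\TP$ restricted to $[x_z\ldots x_{z+1}]$ is exactly the breakpoint set of $\TA=A$ there, whose endpoints $q_z,q_{z+1}$ are (unchanged) active points of $A$, and therefore the $\TP$‑points strictly inside equal the passive breakpoints of $A$ strictly inside, matching $\P$. Finally, after the $\flush$, $\rl{z}=\infty>\gamma_z$, so \cref{inv:rr_rl} holds for index $z$ (and is untouched elsewhere); and inside the mega‑segment the pending ray is now $\ray_z$ of slope $\infty$, so $\min(\tell(x),\ray_z(x))=\TA[x]=A[x]$, giving \cref{inv:lazyrs} inside the range (its second part then follows, as noted after the invariant, from the first part together with \cref{inv:TAP-AP}).

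\textbf{Main obstacle.} The delicate part is Step 1 together with the bookkeeping it enables: besides ensuring the ray does not escape the mega‑segment, one must check that inserting $p',p''$ and rewiring the segment leaving $q_z$ (now a prefix of $\ray_z$) does not ``un‑activate'' $q_z$ or $q_{z+1}$ — i.e., that the slope of the $\TP$‑segment entering $q_z$ stays strictly below $\rl{z}$, and that the segment entering $q_{z+1}$ keeps its old slope. These facts are exactly where \cref{inv:rr_rl} and \cref{lem:above-mega} (which bounds the slopes of the passive points inside a mega‑segment relative to $\gamma_z$) are needed, and they are what make the lazy representation consistent with the explicit warmup behaviour.
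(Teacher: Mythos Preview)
Your overall strategy matches the paper's: first confine the scan to the mega-segment using $\rl{z}>\gamma_z$, then verify that the resulting $\TA$ coincides with $A$ on that range. The paper carries out the second step by a direct case split on $x\le x^*$ versus $x>x^*$, showing in each case that $\min(\TA[x],\ray_z(x))$ equals the post-$\flush$ value of $\TA[x]$. Your Step~2 instead appeals to \cref{lem:ray}, and that appeal is loose in two respects.

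First, \cref{lem:ray} is stated for the full $\LLW$ procedure, which in general performs several ray shootings; to invoke it here you must argue that on $[x_z,x_{z+1}]$ only the single ray from $q_z$ is ever shot. This is true (all interior points are passive, so by \cref{lem:activerays} no further ray shootings are triggered), but you do not say it. Second, even granting that $\flush(q_z)$ coincides with $\LLW(x_z,x_{z+1},\rl{z})$ applied to $\TA$, \cref{lem:ray} yields $\min_{t\in[x_z,x]}\big(\TA[t]+(x-t)\rl{z}\big)$, which is not literally the two-term lower envelope $\min(\TA[x],\ray_z(x))$ you claim. The two quantities do coincide, but only because $\TA$ is concave on $[x_z,x_{z+1}]$ (slopes are non-increasing at passive points), so the minimum over $t$ is attained at an endpoint $t\in\{x_z,x\}$. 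Once these two observations are made explicit your argument is complete; the paper's direct case analysis is essentially an unrolled version of the same reasoning.
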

\begin{proof}
\cref{inv:rr_rl} is maintained because the $\flush$ operation sets $\rl{z}$ to $\infty$.
Since $\rl{z} > \gamma_{z}$, it is guaranteed by \cref{lem:rayshootinggamma}  that the scan of $\flush$ ends at $q_{z+1}$ or before $q_{z+1}$.
It follows that \cref{inv:TAP-AP} is maintained because $\AP$ does not change and $\flush$ only deletes passive points of $\TP$.
We proceed to prove that \cref{inv:lazyrs} is maintained.
Note that $\rl{z}$ is set to $\infty$ by the end of $\flush$, and that $q_z$ remains the predecessor active point of every $x\in [x_z \ldots x_{z+1}]$, so we need to show $A[x] = \TA[x]$.
Let $x \in [x_z \ldots x_{z+1}]$.
If $x \le x^*$, then before $\flush$ was applied, we had $\TA[x] \ge \ray_{z}(x)$, and therefore by \cref{inv:lazyrs} $A[x] = \min(\TA[x] , \ray_{z}(x)) = \ray_{z}(x)$. Since $\flush$ sets the value of $\TA[x]$ to be $\ray_z(x)$ for $x<x^*$, \cref{inv:lazyrs} still holds.
If $x > x^*$, the value of $\TA[x]$ is not changed by $\flush$.
Since the line $\tell$ between $p_{\mathsf{del}}$ and $p_{\mathsf{end}}$ starts not below the $\ray_{z}$ and ends below $\ray_{z}$, its slope is smaller than $\rl{z}$.
Since the points between $p_{\mathsf{end}}$ and $q_z$ (excluding $q_z$) are passive, the slopes of the corresponding segments are also lower than $\rl{z}$ and therefore $(x,\TA[x])$ is below $\ray_{z}$ for every $x \in (x' \ldots x_{z+1}]$.
Due to \cref{inv:lazyrs} before the application of $\flush$, we have $A[x] = \min(\TA[x], \ray_{z}(x)) = \TA[x]$.
Therefore, assigning $\rl{z} \assign \infty$ and not changing $\TA[x]$ satisfies \cref{inv:lazyrs}.
\end{proof}

\subsection{Implementing the Data Structure}

\subparagraph*{$\Lookup(k)$.} To perform $\Lookup(k)$ we retrieve the predecessor $q_z$ of $k$ in $\TAP$, and invoke $\flush(q_z)$. We then retrieve the predecessor $p_w$ of $k$ in $\TP$ and return $\tell_w(k)$ which is correct by \cref{lem:flush}. All three invariants are clearly maintained by this operation. 

\para{$\AC(i,j,c)$.}
Similar to the implementation in the warmup algorithm (\cref{sec:warmup}), we first perform $\Lookup$ queries to retrieve $A[x]$ for $x\in\{i-1,i,j,j+1\}=\B$. Let $\P_\B$ be the resulting set of $O(1)$ points. We assume that no point of $\P_\B$ was previously in $\TP$ (the other cases are handled similarly). 
We insert the points of $\P_\B$ into $\TP$ and update $\Dbeta$ and $\DAL$ accordingly using $O(1)$ operations. Note that at this point all $O(1)$ mega-segments containing points in $\P_\B$ are flushed (because of the calls to $\Lookup$). 
Next, we apply $\Dbeta.\AtR(i,j,c)$. 
Note that this changes the linear segments between $i-1$ and $i$ and between $j$ and $j+1$.
We update $\DAL$ and $\Dbeta$ to reflect these changes using $O(1)$ additional operations.

Next, 
for every $p_k = (x_k,y_k) \in \P_{\B}$, 
 if $p_k$ just became active, then we insert $x_k$ to $\TAP$  and set the $\rl{}$ value of $x_k$ in $D_\rho$ to be $\infty$. 
Otherwise, if $p_k$ just became passive, then we remove $x_k$ from $\TAP$ and $\DGAM$ if necessary.
Finally, if $\alpha_k=\alpha_{k-1}$, we merge the two segments by removing $p_k$ from $\TP, D_\alpha$ and $D_\beta$.

\begin{lemma}\label{lem:addconst}
    Applying $\AC(i,j,c)$ preserves \cref{inv:rr_rl,inv:lazyrs,inv:TAP-AP}.
\end{lemma}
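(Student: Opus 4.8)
The plan is to reduce to the behaviour of the warmup implementation of \cref{sec:warmup}, exploiting that the preliminary $\Lookup$ queries on the $O(1)$ points of $\B=\{i-1,i,j,j+1\}$ have flushed all mega-segments that meet $\B$. So first I would record the state immediately after these $\Lookup$s and before any destructive update: by \cref{lem:flush}, every mega-segment intersecting $\B$ has pending ray $\infty$, the restrictions of $\P$ and $\TP$ to it coincide, and $A[x]=\TA[x]$ on it; moreover \cref{inv:rr_rl,inv:lazyrs,inv:TAP-AP} still hold. Since $\AC(i,j,c)$ adds $c$ to $A[x]$ for $x\in[i\ldots j]$ and leaves $A$ unchanged elsewhere, I would then verify the three invariants region by region, noting that once the breakpoints of $\P_\B$ have been inserted, the call $D_\beta.\AtR(i,j,c)$ together with the $O(1)$ explicit corrections of the two segments incident to $i-1$ and $j+1$ realizes exactly the update $A\mapsto A+c$ on $[i\ldots j]$ at the level of the segment-parameter representation, so that $\TA[x]$ rises by exactly $c$ for every $x\in[i\ldots j]$.

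For an index $x$ in no mega-segment meeting $[i\ldots j]$, nothing relevant changes and the invariants persist verbatim. For an index in an \emph{interior} mega-segment $[q_z,q_{z+1}]$ — one contained in $[i\ldots j]$ and disjoint from $\B$ — every segment of the mega-segment has its left endpoint in $[i\ldots j]$, so $D_\beta.\AtR(i,j,c)$ translates all of them up by $c$; the value of the anchor $q_z$ (obtained from $D_\alpha,D_\beta$) likewise rises by $c$, while $\gamma_z$ and $\rl{z}$ are untouched, so the mega-segment line and the pending ray $\ray_z$ (determined by $q_z$ together with $\rl{z}$) are also translated up by $c$. As $A$ too is shifted up by $c$ here, the identity $A[x]=\min(\tell_w(x),\ray_z(x))$ of \cref{inv:lazyrs} is preserved, $\rl{z}>\gamma_z$ of \cref{inv:rr_rl} is unchanged, and no interior point changes its activity because a uniform shift leaves all slopes in $[i\ldots j]$ unchanged, which gives \cref{inv:TAP-AP}.

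For the $O(1)$ \emph{boundary} mega-segments (those containing a point of $\B$): right after the flush the data structure is in ``warmup mode'' there ($\TP=\P$, $A=\TA$, pending ray $\infty$), and the subsequent operations — inserting $\P_\B$, the single $D_\beta.\AtR(i,j,c)$, the $O(1)$ fixups of the segments incident to $i-1$ and $j+1$, and the final equal-slope merges — are precisely the warmup implementation of $\AC$ restricted to this region, so afterwards $\TA$ equals the new $A$ and $\TP$ equals the canonical point set there. New active points are given pending ray $\infty$ and the old ones already had $\infty$, so \cref{inv:lazyrs} reduces to the equality $A[x]=\TA[x]$ just established and \cref{inv:rr_rl} holds since $\infty>\gamma$. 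For \cref{inv:TAP-AP} I would use that $A[k]-A[k-1]$ changes, relative to before $\AC$, only for $k\in\{i,j+1\}$, so the discrete second difference that detects active points changes only at $x\in\{i-1,i,j,j+1\}=\B$; since the implementation edits $\TAP$ (and $D_\gamma$, $D_\rho$) exactly at these $O(1)$ points to reflect the new activity, $\TAP=\AP$ is restored.

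The step I expect to be the main obstacle is making the boundary bookkeeping of the previous paragraph fully rigorous: verifying that $D_\beta.\AtR(i,j,c)$ together with the constant-size corrections exactly reproduces the canonical representation near the two ends of $[i\ldots j]$ — in particular undoing the spurious shift of the segment starting at $j$ and setting the parameters of the segment $[i-1,i]$ that $\AtR$ leaves untouched — as well as recomputing the $O(1)$ slopes $\gamma$ affected by the activity changes at $\B$ (when an active point is created or destroyed there, a mega-segment is split or two are merged), and treating the degenerate cases where a point of $\B$ already lies in $\TP$ or $\TAP$, which are analogous but need care so that no mega-segment straddling an end of $[i\ldots j]$ is left un-flushed.
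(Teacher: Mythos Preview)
Your proposal is correct and follows essentially the same approach as the paper: a three-way case split on mega-segments (disjoint from $[i-1\ldots j+1]$, contained in $[i+1\ldots j-1]$, or meeting $\B$), using \cref{lem:flush} to reduce the boundary case to the warmup implementation and observing that interior mega-segments (including the anchor $q_z$ and hence the pending ray $\ray_z$) are uniformly translated by $c$. Your discussion of the boundary bookkeeping and the second-difference argument for why activity can only change at points of $\B$ is somewhat more explicit than the paper's version, but the structure of the argument is the same.
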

\begin{proof}
The only points of $\P$ that become active or passive due to $\AC(i,j,c)$ are the points in $\P_{\B}$. Since the mega-segments containing points in $\P_{\B}$ are flushed, \cref{lem:flush} and the explicit handling by $\AC$ of the points of $\P_{\B}$ that become active or passive guarantee that \cref{inv:TAP-AP} is maintained. 

Similarly, the only mega-segments whose $\gamma$ value is changed by $\AC(i,j,c)$ are those containing points of $\P_{\B}$.
The values $\rl{}$ for all these mega-segments are set to $\infty$ either by flushing or explicitly. Hence \cref{inv:rr_rl} holds.

To establish \cref{inv:lazyrs}, 
let $k\in [1\ldots |A|]$ and let $q_z$ be the active point prior to the application of $\AC(i,j,c)$ such that $k\in [x_z \ldots x_{z+1}]$.
\begin{itemize}
    \item If $[x_z \ldots x_{z+1}] \cap [i-1\ldots j + 1] = \emptyset$, then both $A[k]$ and $\TA[k]$  are  not affected by the update.
    Moreover, the predecessor active point of $k$ remains $q_z$ after the update, and $\rl{z}$ was not affected by the update.
    It follows that $\min(\TA[k], \ray_z(k))$ is not changed.
    \item If $[x_z \ldots x_{z+1}] \subseteq [i+1 \ldots j-1]$, then notice that $q_z$ remains active after the update since $\alpha_z$ and $\alpha_{z-1}$ are not affected by the update.
    The value of $\TA[k]$ and the $y$ coordinate of $q_z$ were increased by $c$ via the $\Dbeta.\AtR(i,j,c)$ operation.
    The value $\rl{z}$ was not changed, so $\ray_z(k)$ was increased by $c$ as well.
    It follows that $\min(\TA[k],\ray_z(k))$ was increased by $c$, as required.
    \item If $[x_z \ldots x_{z+1}] \cap \B \neq \emptyset$, then note that we applied a $\flush$ operation on $q_z$, so we have $\rl{z} = \infty$ and $\TA[k] = A[k]$ prior to the application of $\Dbeta.\AtR(i,j,c)$.
    Thus, after the $\flush$ operation, we have $A[k]=\min(\TA[k],\ray_z(k))=\min(\TA[k],\infty)=\TA[k]$.
    After applying the update $\AC(i,j,c)$, the predecessor active point of $k$ is either $q_z$, some point in $\P_{\B}$ (if a point in $\P_{\B}$ became active as a result of the operation), or the predecessor active point of a point in $\P_{\B}$ (if a point in $\P_{\B}$ was $q_z$, and became passive as a result of the update). 
    In all these cases, the predecessor active point $q_a$ of $k$ in the updated representation has $\rl{a} = \infty$ (since either it is a new active point, or it is an existing active point on which a $\flush$ was applied).
    The value of $\TA[k]$ was increased by $c$ via the operation $\Dbeta.\AtR(i,j,c)$. 
    In conclusion, we have $\min(\TA[k],\ray_a(k)) = \min(\TA[k],\infty) = \TA[k]$.
    Since $\TA[k]$ was increased by $c$ if it was necessary, it is now representing the value of $A[k]$ after the $\AC(i,j,c)$ operation.\qedhere 
\end{itemize}

\end{proof}

\para{$\AG(i,j,c)$.}
As was the case in the warmup algorithm, the implementation of $\AG$ is similar to that of $\AC$ except that rather than applying $\AtR(i,j,c)$ to to $D_\beta$, it is applied to $D_\alpha$ to increase the slope of all line segments between $i$ and $j$ by $c$. In the same manner we increase the slope of the corresponding mega-segments by $c$ using $O(1)$ calls to $\AtR$ on $D_\gamma$ (the mega-segments containing $i$ and $j$ need a special treatment since their slope might increase by less than $c$).
Finally, we increase the slope of the pending rays using $\AtR$ on $D_\rho$. 

\begin{lemma}\label{lem:addgrad}
    Applying $\AG(i,j,c)$ preserves \cref{inv:rr_rl,inv:lazyrs,inv:TAP-AP}.
\end{lemma}
\begin{proof}
The proof for \cref{inv:TAP-AP} is identical to that in \cref{lem:addconst}. 
\cref{inv:rr_rl} holds since the only mega-segments whose $\gamma$ and $\rl{}$ change by different values are those containing points of $\P_{\B}$, and those are flushed and handled explicitly by the implementation.

As for \cref{inv:lazyrs},
let $k\in [1\ldots |A|]$ and let $q_z$ be the active point such that $k\in [x_z \ldots x_{z+1}]$ prior to the application of $\AG(i,j,c)$.
\begin{itemize}
    \item If $[x_z \ldots x_{z+1}] \cap [i-1\ldots j + 1] = \emptyset$, 
 then no changes occurs, just like in the proof of \cref{lem:addconst}.
    \item If $[x_z \ldots x_{z+1}] \subseteq [i+1 \ldots j-1]$, notice that $q_z$ is still active since $\alpha_z$ and $\alpha_{z-1}$ were both increased by $c$.
    The value of $\TA[k]$ was increased by $k \cdot c$ via the $\DAL.\AtR(i,j,c)$ operation.
    The $y$ coordinate of $q_z$ was increased by $x_z \ldots c$ via the same operation. 
    The value $\rl{z}$ was increased by $c$ as well, so $\ray_z(k)$ was increased by $x_z + (k-x_z) \cdot c = k\cdot c$.
    It follows that $\min(\TA[k],\ray_z)$ was increased by $k\cdot c$, as required.
    \item If $[x_z \ldots x_{z+1}] \cap B \neq \emptyset$, then $A[k]=\TA[k]$ by the same argument as in the proof of the corresponding case in \cref{lem:addconst}.\qedhere 
\end{itemize}
\end{proof}

\para{$\LLW(i,j,c)$.}
In the description of the algorithm we will say that it \textit{explicitly} performs a ray shooting process from some point $p$. 
By this we mean the following.
First, the mega-segment containing $p$ is flushed.
We then scan the points of $\TP$ starting in $p$ using $\Successor$ queries in $\TP$.
Similarly to the warmup algorithm of \cref{sec:warmup}, we delete the scanned point $p_k=(x_k,y_k)$ from $\TP$ if $y_k \ge \ell(x)$ with $\ell$ being the ray with slope $c$ shot from $p$.
If the scan reaches an active point $q_w = (x_w,y_w)$, and finds that $q_w$ is not below $\ell$ - we perform a $\flush$ operation on $q_w$ before deleting $x_w$ from $\TP, \DAL, D_\beta$.
We also delete $x_w$ from $\TAP, \DGAM, D_\rho$, and update the slopes of the predecessors of $q_w$ in $\TP$ and in $\TAP$ accordingly.
Upon reaching a point $p_k$ that is below the ray $\ell$ (or when reaching $p_b$), we add to $\TP$ the points $p'$ and $p''$ (delete $p_b$ and add $(j,\ell(j))$ if necessary) as described in \cref{sec:warmup}, and the ray shooting process terminates.

We now describe the algorithm for $\LLW(i,j,c)$. (Refer to  \cref{fig:RayShootingAdvanced}).
First, as in the $\AC$ and $\AG$ operations,  we add the points in $\P_{\B}$ to $\TP$, and flush every mega-segment containing a point from $\P_{\B}$ .
If $\alpha_a > c$, we explicitly perform a ray shooting process from $p_a$.
Let $p$ be the point at which the explicit ray shooting process terminated, or $p=p_a$ if $\alpha_a \le c$.

Let $q_{w}=(x_w,y_w)$ be the first active point (weakly) after $p$. 
We use $\DGAM.\nextGT(x_w,c)$ to obtain the  point $q_z = (x_z,y_z)$ from which the next \longrange ray shooting should start.
We then implicitly shoot \local rays in 
every mega-segment starting at an active point $q_t=(x_t,y_t)$ with $x_t\in [x_w \ldots x_z)$. This is done by applying a single  operation, $\Min(x_w,x_{z-1},c)$ on the \addmin data structure $D_\rho$ maintaining the $\rl{z}$ values.
This operation has the effect of setting  $\rl{t} \assign \min(\rl{t},c)$ for all such $t$'s. Next, we explicitly perform the \longrange  ray shooting from $q_z$.

 We keep repeating the above paragraph with the point at which the last explicit \longrange ray shooting process terminated taking the role of $p$.
We stop if an explicit ray shooting reaches $j$ or if $\DGAM.\nextGT(x_w,c)$ returns 'null' or a point beyond $j$. In the latter case, let $q_{b'}$ be the starting point of the mega-segment containing $j$. We  implicitly shoot all the \local rays in all the mega-segments starting not earlier than $p$ and ending no later than $q_{b'}$. Finally, we call $\flush(q_{b'})$, and if $\alpha_{b'} > c$ we explicitly perform a ray shooting process from $q_{b'}$.

To finalize we also need to update the effects around $i$ and $j$. We check for every $p_k\in\P_{\B}$ if $p_k$ is active, and update $\DGAM$ accordingly (similar to this update in $\AC$).

\begin{figure}[htb]
  \begin{center}
 \includegraphics[scale=0.6]{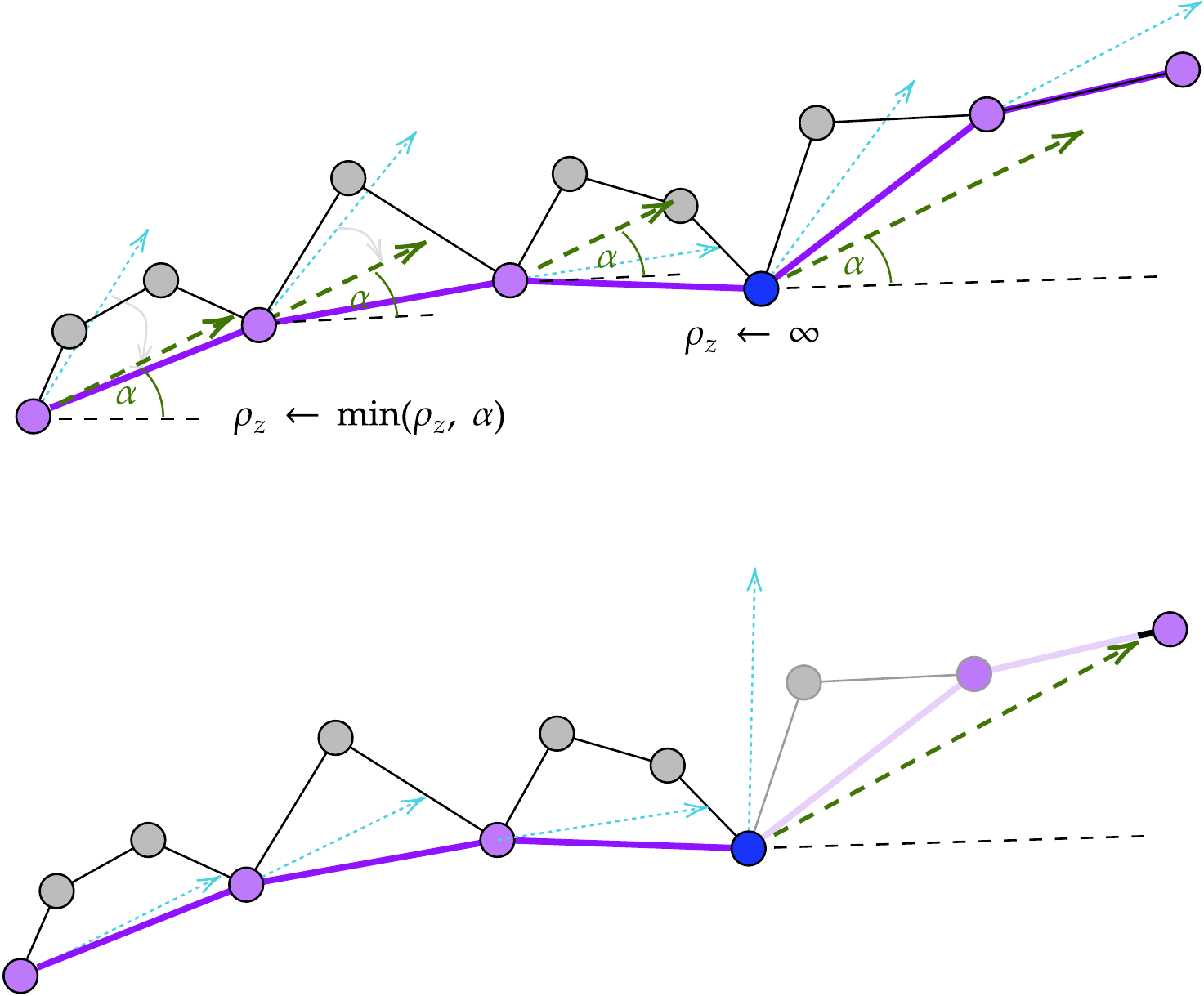} 
  \caption{A demonstration of the first explicit \longrange ray shooting process.
  First, the leftmost active point $q_z$ with $\gamma_z$ larger than $\alpha$ (blue) is identified (assuming that a ray shooting process from $p_a$ is not required).
  Then, the $\rl{}$ values (light blue rays) of the active points preceding $q_z$ is assigned $\rl{} \assign \min( \rl{},\alpha)$.
  Finally, a ray shooting process is explicitly applied from $q_z$.
  Since an explicit ray shooting process starts by applying $\flush$ to the mega-segment of $q_z$, the value of $\rl{z}$ is set to $\infty$.
 \label{fig:RayShootingAdvanced}}
     \end{center}
\end{figure}

\begin{lemma}
     Applying $\LLW(i,j,c)$  preserves \cref{inv:rr_rl,inv:lazyrs,inv:TAP-AP}.
\end{lemma}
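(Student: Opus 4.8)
The plan is to verify each of the three invariants separately, following the structure of the proofs for $\AC$ and $\AG$ (\cref{lem:addconst,lem:addgrad}) but accounting for the new phenomenon of implicitly shooting \local rays via $\Min$ operations on $D_\rho$. I would first handle \cref{inv:TAP-AP}. The active points that can change status are exactly those in $\P_{\B}$ (which are flushed and handled explicitly at the end, exactly as in $\AC$) and, for the rest of the points, I must argue that the explicit \longrange ray shootings correctly delete or passivate active points, while the implicit \local rays never create or destroy active points. The latter follows from \cref{lem:rayshootinggamma}(1)--(2): a \local ray only affects passive points strictly inside its mega-segment and creates no new active points, so recording $\rl{t}\assign\min(\rl{t},c)$ cannot change $\TAP$. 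The former follows because an explicit ray shooting is literally the warmup process with intermediate $\flush$es, and by \cref{lem:rayshootinggamma}(3) a \longrange ray deletes or passivates the active point at the far end of its mega-segment; we perform the corresponding bookkeeping ($\Remove$ from $\TAP,\DGAM,D_\rho$ and slope updates) explicitly.

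Next I would establish \cref{inv:rr_rl}, i.e.\ $\rl{z}>\gamma_z$ for every active point. For mega-segments touched by $\P_{\B}$ or by an explicit ray shooting, $\rl{}$ is set to $\infty$ (by $\flush$), so the invariant holds trivially there. For a mega-segment $[x_t\ldots x_{t+1}]$ with $x_t\in[x_w\ldots x_z)$ on which we apply $\rl{t}\assign\min(\rl{t},c)$: by the choice of $q_z$ via $\DGAM.\nextGT(x_w,c)$, every such $q_t$ is a \local ray point, meaning $\gamma_t\le c$; combined with the previous $\rl{t}>\gamma_t$ (\cref{inv:rr_rl} before the update, or $\rl{t}=\infty$ if $q_t$ was just flushed), we get $\min(\rl{t},c)>\gamma_t$ unless $\gamma_t=c$ and $\rl{t}>c$; in the boundary case $\gamma_t=c$ we need to argue that this mega-segment does not actually remain with $q_{t+1}$ active — but here I must be careful, because \cref{lem:rayshootinggamma} tells us that a $\gamma_t=\alpha$ ray may passivate $q_{t+1}$, which is exactly the situation we are postponing. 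The resolution is that postponing is harmless: the point still satisfies the invariant as stated with a non-strict-or-strict inequality depending on the convention, and the eventual $\flush$ will reconcile it. I would lean on the fact that $c$ being equal to $\gamma_t$ is a measure-zero coincidence in the analysis and the invariant $\rl{t}>\gamma_t$ still holds when $\rl{t}>c=\gamma_t$, so the only genuinely delicate case is $\gamma_t=c=\rl{t}$ which cannot arise since $\rl{t}>\gamma_t$ held before.

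The bulk of the work, and the main obstacle, is \cref{inv:lazyrs}: for every $k\in[1\ldots|A|]$ with predecessor active point $q_z$ in the updated structure, $A[k]=\min(\tell_w(k),\ray_z(k))$. I would case on the position of the mega-segment containing $k$ relative to the operation. Mega-segments disjoint from $[a\ldots b]$ are untouched on both the $A$ side (by the semantics of $\LLW$) and the $\TP/D_\rho$ side, so nothing changes. Mega-segments meeting $\P_{\B}$ are flushed and handled as in \cref{lem:addconst}. The heart of the matter is a mega-segment $[x_t\ldots x_{t+1}]\subseteq[a\ldots b]$ on which we only did $\rl{t}\assign\min(\rl{t},c)$: here I need to show that the true effect of the \local ray shooting from $p_a$ through the cascade of rays, restricted to this mega-segment, is captured by lowering the pending slope to $\min(\rl{t},c)$. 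By \cref{lem:activerays} every ray in the $\LLW$ cascade is shot from an active point, and by \cref{lem:rayshootinggamma}(2) a \local ray shot from $q_t$ stays inside $[x_t\ldots x_{t+1}]$; one must also verify that the ray genuinely shot from $q_t$ in the warmup has slope exactly $c$ and originates at the correctly-represented point $q_t$ — this is where \cref{inv:lazyrs} before the update (giving $A[x_t]=\TA[x_t]$ since $q_t$ is active) and \cref{inv:TAP-AP} are used. Then for $k$ in this mega-segment, the warmup value is $\min_{t'\in[\text{start}\ldots k]}(A[t']+(k-t')c)$ restricted appropriately, which by the pre-update invariant equals $\min(\TA[k],\ray_t(k),\ray_t^{\text{new}}(k))=\min(\TA[k],\ray^{\text{new}}_t(k))$ with the new pending ray having slope $\min(\rl{t},c)$ — using that the old pending ray $\ray_t$ had slope $\rl{t}>\gamma_t$ and the newly injected ray has slope $c$, and the minimum of two rays through the same point $q_t$ is the ray with the smaller slope. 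I would need a short sub-lemma stating that the minimum over the mega-segment of $\TA$ and any number of pending rays through $q_t$ collapses to $\min(\TA,\text{ray of minimum slope})$ — this uses convexity/monotonicity along the mega-segment granted by \cref{lem:above-mega}. Finally, the pieces of $[a\ldots b]$ processed by explicit \longrange ray shootings are handled exactly as the warmup $\LLW$ plus the flush reconciliation of \cref{lem:flush}, and stitching the interfaces between consecutive mega-segments uses that each explicit process begins and ends with a $\flush$. The subtle obstacle throughout is ensuring the cascade of rays in the warmup — which alternates between "skip a prefix of small-slope segments" and "shoot from the next active point" — lines up mega-segment by mega-segment with our bookkeeping, so that no \local ray we postponed actually needed to propagate past its mega-segment boundary; this is precisely what \cref{lem:rayshootinggamma}(2) buys us, and I expect the proof to be a careful but not deep induction over the mega-segments crossed by $[a\ldots b]$.
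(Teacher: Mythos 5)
Your high-level decomposition (prove each invariant separately, leverage the correspondence with the warmup $\LLW$) matches the paper's strategy, but there is one important structural difference. The paper's proof is built around an explicit claim, proven first, that the lazy implementation performs \emph{exactly} the same \longrange ray shooting processes as the warmup algorithm of \cref{sec:warmup} --- same starting points, same ending points, same deleted points --- and then derives all three invariants from this claim in a few lines each. You invoke this correspondence ("an explicit ray shooting is literally the warmup process with intermediate flushes", "processed exactly as the warmup plus the flush reconciliation") but never isolate it as a statement with its own proof. This is not a cosmetic omission: the correspondence is the load-bearing step, and it needs an inductive argument along the sequence of \longrange rays, justified by $\TAP = \AP$ (so $\DGAM$ reflects the true mega-segment slopes), by \cref{lem:flush} (so the scan visits the same points as the warmup), and by \cref{lem:rayshootinggamma}(2) (so skipped \local rays cannot spill out of their mega-segments). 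Without naming that claim, your case analysis for \cref{inv:lazyrs} drifts into ad hoc sub-lemmas about convexity and collapsing minima of rays through a common point, which are neither used nor needed in the paper.

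Your treatment of \cref{inv:rr_rl} contains a genuine error. You correctly flag the $\gamma_t = c$ boundary, but your resolution misidentifies the delicate case. The problematic scenario is $\gamma_t = c$ with $\rl{t} > c$ before the update (which is exactly what \cref{inv:rr_rl} gave you). After $\rl{t} \assign \min(\rl{t}, c)$ you get $\rl{t} = c = \gamma_t$, which violates the strict inequality. Your statement that "the only genuinely delicate case is $\gamma_t = c = \rl{t}$ which cannot arise since $\rl{t} > \gamma_t$ held before" is not a resolution: the case $\gamma_t = c < \rl{t}$ is the one that actually breaks the invariant, and nothing you say prevents it. Appealing to "measure-zero coincidence" is not acceptable for a worst-case analysis. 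The paper asserts $\gamma_z < c$ at this point and also does not dwell on the boundary; the cleanest fix is to observe that the $\gamma_z = c$ case has the same decreasing-$|\AP|$ behavior as $\gamma_z > c$ (this is exactly what \cref{lem:rayshootinggamma}(3) says, since its hypothesis is $\gamma_z \geq \alpha$), so $\gamma = c$ rays should be handled explicitly along with the \longrange rays. Either way, your current argument has a gap you would need to close.
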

\begin{proof}

We start by proving the following claim.
\begin{claim}\label{c:samerayshootings}
The implementation of $\LLW$ performs exactly the same {\em \longrange}ray shooting processes as the warmup algorithm of  \cref{sec:warmup} (same starting points, ending points, and points deleted).
\end{claim}
\begin{proof}
Since the mega-segment containing $p_a$ is flushed, our algorithm shoots a ray from $p_a$ if and only if the warmup algorithm also has.
If we did not shoot a ray from $p_a$, our algorithm shoots a \longrange ray from the first active point $q_z$ with $\gamma_z>c$.
Since $\AP=\TAP$ (\cref{inv:TAP-AP}), the $\gamma$ slopes stored in $\DGAM$ are the slopes of the mega-segments of $\P$.
Therefore, the first \longrange ray shot by the warmup algorithm does not start before $q_z$.
It is possible that the warmup algorithm executed several \local ray shooting processes from active points $q_w$ with $w<z$, but 
by \cref{lem:rayshootinggamma}, the effect of these \local rays is confined to the mega-segment starting at $q_w$, and does not affect any active points.
Therefore, it is guaranteed that the first \longrange ray shot by the warmup algorithm is also from $q_z$.

Consider the first \longrange ray shooting process applied from $q_z$ (or from $p_a$, depending on $\alpha_a$).
Throughout the scan, we always perform a $\flush$ operation on a mega-segment before scanning points in the mega-segment.
Therefore, by \cref{lem:flush}, it is guaranteed that we see the exact same points as in the scan of the warmup algorithm.
It follows that the scan terminates by creating exactly the same point as the warmup algorithm, and deletes all the points between those points and $q_z$).
Our algorithm then proceeds to find where the  next  \longrange ray  should start using $\DGAM.\nextGT$.
The claim follows inductively by repeating the reasoning above on all subsequent \longrange rays.
\end{proof}

\para{\cref{inv:TAP-AP}}

By \cref{c:samerayshootings}, every active point that is deleted from $\AP$ as a result of a ray shooting process is deleted from $\TAP$ as well (and only these points).
By \cref{lem:rayshootinggamma}, the only point that may become active as a result of a ray shooting is $p=(j,A[j]) \in P_{\B}$. This is explicitly handled by the algorithm by flushing the mega-segment containing $j$, which guarantees that the slopes of the segments ending and starting at $p$ are identical in $\P$ and in $\TP$. Hence, $p$ is active in $\TP$ if and only if it is active in $\P$, and \cref{inv:TAP-AP} is satisfied.

\para{\cref{inv:rr_rl}}
Let $q_z$ be an active point in $\TAP$ after the application of $\LLW(i,j,c)$.
Note that since a ray shooting does not create any new active points (except possibly a point $(j,A[j])$), the point $q_z$ was  active also before the application of $\LLW(i,j,c)$ (or $q_z$ is a point in $\P_{B}$).
If $\flush(q_z)$ was invoked then $\rl{z}$ is set to $\infty$ and \cref{inv:rr_rl} is satisfied. 

Otherwise, 
    it must be the case that a \longrange ray shooting process was not executed from $q_z$. 
    Furthermore, no \longrange ray shooting deleted a point in the mega-segment of $q_z$.
    It follows that $q_{z+1}$ was also not affected by the update, since a ray shooting starting in $q_w$ with $w \ge z+1$ does not change the value of $q_{z+1}$, and a \local ray shooting process that may have been applied from $q_z$ does not change $q_{z+1}$ as well.
    Therefore, $\gamma_z$ is unchanged by $\LLW(i,j,c)$.
    Since a \longrange ray did not start at $q_z$, and $q_z$ was not deleted by a \longrange ray shooting, we must have that $\gamma_z <c$.
    Before the application of $\LLW$, we had $\gamma_z < \rl{z}$. Either
    $\rl{z}$ is unchanged by the algorithm, or it was set to $\min(c,\rl{z})$ by a \local ray.
    In both cases, \cref{inv:rr_rl} is maintained.

\para{\cref{inv:lazyrs}}
Let $x\in [n]$.
We distinguish between two cases regarding the mega-segment $[x_z \ldots x_{z+1}]$ containing $x$ prior to the application of $\LLW(i,j,c)$.
\begin{itemize}
    \item If $q_z$ was deleted by an explicit \longrange ray shooting.
    Let $q_w$ be the point from which the \longrange ray started. 
    Hence, after the application of $\LLW(i,j,c)$, $q_w$ is the predecessor active point of $x$, and the mega-segment starting at $q_w$ (and containing $x$) is flushed. Hence the algorithm sets $\rl{w} = \infty$, applies on this segment exactly the same changes as the warmup algorithm $\TA[x] = A[x] = \min(\TA[x], \ray_w(x))$ as required.
    \item If $q_z$ was not deleted by an explicit \longrange ray shooting. 
    It follows from \cref{c:samerayshootings} and from the correctness of the warmup algorithm that if the value of $A[x]$ needs to be modified, it is as a result of a \local ray shooting.
    If the warmup algorithm does not perform a ray shooting process from $q_z$, it must be the case that $\gamma_z  < c$.
    In this case, the assignment $\rl{z} \assign \min(\rl{z},c)$ does not change $\rl{z}$, so $\min(\TA[x],\ray_z)$ is not changed, as required.
    We proceed to treat the case in which a local ray $r$ with slope $c$ is shot from $q_z$ by the warmup algorithm.
    Let $A_{\mathsf{before}}[x]$ and $A_{\mathsf{after}}[x]$ be the values of $A[x]$ before and after $\LLW(i,j,c)$ is applied, respectively.
    The value of $A_{\mathsf{after}}[x]$ is $A_{\mathsf{after}}[x] = \min(A_{\mathsf{before}}[x], r(x))$.
    Let $\ray_{z}^{\mathsf{before}}$ (resp. $\ray_{z}^{\mathsf{after}}$) be the pending ray with slope $\rl{z}$ (resp. $\min(\rl{z},c)$) at $q_z$ before  (resp. after) applying $\LLW(i,j,c)$.
Note that $\ray_{z}^{\mathsf{after}}(x) = \min (\ray_{z}^{\mathsf{before}}(x), r(x))$, so
    \begin{align*}
        & \min(\TA[x] ,\ray_{z}^{\mathsf{after}}(x) ) = \min(\TA[x],\ray_{z}^{\mathsf{before}}(x), r(x)) = 
        \\
        & \min(\min(\TA[x],\ray_{z}^{\mathsf{before}}(x)),r(x)) =
        \min(A_{\mathsf{before}}[x] , r(x)) = A_{\mathsf{after}}[x]
    \end{align*}
    as required.\qedhere
\end{itemize}

\end{proof}

\para{Complexity.}
We start by showing that the number of active points added to $\TAP$ throughout a sequence of $s$ operations is $O(s)$.
This is because $\flush$ operations do not add active points to $\TAP$, and 
each invocation of $\AC$, $\AG$ or $\LLW$ may create $O(1)$ active points.

Consider a sequence of $s$ operations. We use a standard charging argument to prove that the amortized time per operation is $\tilde O(1)$.  
The only difficulty is in charging the time of ray shootings that are performed explicitly in any call to $\flush$ and during $\LLW$, and the time of the implicit \local ray shootings during $\LLW$. 

We charge to each operation the $\tilde O(1)$ time of handling the mega-segments containing the points of $\P_{\B}$, including the time to insert the $O(1)$ new points in $\P_{\B}$ but excluding calls to $\flush$ on these mega-segments. 
Similarly, we charge the $\tilde O(1)$ time update $D_\alpha$ and $D_\beta$ during $\AC$ and $\AG$ to the operation itself.

Each call to $\flush$ may remove  many passive points from $\TP$ and, in addition, takes $\tilde O(1)$ time to insert $O(1)$ new passive points into $\TP$. 
The time to delete each point $p$ of $\TP$ is charged to the insertion of $p$. Calls to $\flush$ on a mega-segments containing points in $\P_{\B}$, or to the mega-segment containing the point $p_a$ in $\LLW$ charge the additional $\tilde O(1)$ required time to the calling operation. All other calls to $\flush$ occur during explicit \longrange ray shootings in $\LLW$, and will be charged next.

Every mega-segment $[q_z,q_{z+1}]$ that is encountered during an explicit \longrange ray shooting, except the last one, results in deleting the active point $q_{z+1}$ from $\TAP$. 
For each such mega-segment we charge $\tilde O(1)$ time for inspecting $q_z$ and calling $\flush(q_z)$ to the the deletion of $q_{z+1}$ from $\TAP$. 
For the last mega-segment, by \cref{lem:activerays}, $q_{z+1}$ is either deleted or becomes inactive, so we charge $\tilde O(1)$ time to $|\TAP|$ decreasing by 1. Note that handling this last segment may include the insertion of $O(1)$ new passive points to $\TP$, which is within the $\tilde O(1)$ charged budget.

Finally, we charge the time of calls to $D_\rho.\Min$ in implementing \local ray shootings implicitly. Each of these calls results from some $\LLW$ operation. 
We charge $O(1)$ such calls to the $\LLW$ operation itself, and each of the remaining calls to the \longrange ray shooting preceding it.

Each operation and each decrease in $|\TAP|$ was charged $\tilde O(1)$ time. Since the sequence consists of $O(s)$ operations and since, as we have shown, $O(s)$ points ever become active, the total time for serving the entire sequence is $\tilde O(s)$.

\para{Handling $\RLW$.} As we had mentioned above, handling $\RLW$ is symmetric to $\LLW$ with the algorithm proceeding right-to-left, starting from $p_b$, and shooting negative rays from active points (the definition of active points remains unchanged). To keep track of pending \local negative rays we maintain an additional Add-min data structure $D'_{\rho}$, and now $A[x]$ is obtained as $\min(\TA[x],\ray_z(x),\ray'_{z+1}(x))$, where $\ray'_{z+1}$ is the pending negative ray going through $q_{z+1}$. The proof of correctness, maintenance of invariants, and analysis of complexity are completely symmetric to those of $\LLW$.   
With that, the proof of \cref{t:mainds} is complete.

\section{The \addmin Data Structure}\label{sec:minadd}
In this section we describe the \addmin data structure of \cref{lem:addminds}.
To explain the main idea of the data structure we assume that no points are added or removed and focus on supporting just $\Min(i,j,c)$ and $\AtR(i,j,c)$. 
We consider the points 
$p_1=(x_1,y_1), p_2=(x_2,y_2), \ldots$ ordered by their $x$-coordinates.
The \addmin data structure is recursive. It consists of an Interval-add data structure $D$, and of a recursive instance of \addmin $R$, which only stores a constant fraction (2/3) of the points.

The points are partitioned into pairs of consecutive points. 
The {\em representative} of a point $p=(x,y)$ is the first point in the pair that $p$ belongs to. Let $M(x)$ denote the $x$-coordinate of the representative of $p$. 
Initially $D$ stores all the points $p_k=(x_k,y_k)$, and $R$ stores only the representatives, which are initialized to $\infty$. Namely, $(x_1,\infty), (x_3,\infty), (x_5,\infty), \ldots$.
We maintain the invariant that $y_k = \min(D.\Lookup(x_k), R.\Lookup(M(x))$, so a $\Lookup(x_k)$ query on the data structure can be served with a single $\Lookup$ on $D$ and a single recursive $\Lookup$ on $R$, which would take total polylogarithmic time.

We use a service operation $\AssignOp(x_k,c)$, that assigns $y_k \assign c$ if there is a point $p=(x_k,y_k) \in S$.
The operation $\AssignOp(x_k,c)$ is implemented by making $D$ store the value $c$ as follows (note that doing $\AssignOp$ on $D$ is trivial using $\Remove$ and $\Insert$). Let $p_{k'}$ be the other point in the pair with $p_k$. We obtain the value of $y_{k'}$ using a $\Lookup$ operation, and call $D.\AssignOp(x_{k},c)$, $D.\AssignOp(x_{k'},y_{k'})$, and recursively call $R.\AssignOp(M(x_k),\infty)$. 

To implement $\Min(i,j,c)$, let $p_a$ (resp. $p_b$) be the first point with $x_a \geq i$ (resp. $x_b \leq j$).  Assume first that $p_a$ is a representative (i.e., $M(x_a) = x_a$) and $p_b$ is not, so the effected range $[i,j]$ exactly corresponds to a range of consecutive pairs of points. We simply invoke $R.\Min(i,j,c)$, which clearly correctly implements the $\Min$ operation while maintaining the invariant. 
If $p_a$ is not a representative then we need to handle the pair containing $p_a$ differently since we do not want to affect the value of $p_{a-1}$. We obtain the values of $y_{a-1}$ and $y_a$ before the update using two $\Lookup$ operations, and assign these values by calling $D.\AssignOp(x_{a-1},y_{a-1})$, $D.\AssignOp(x_a,\min(y_a,c))$,  $R.\AssignOp(x_{a-1},\infty)$ (this last call is a recursive assignment). This maintains the invariant and guarantees that both $p_{a-1}$ and $p_a$ are correctly represented. We handle similarly the case where $p_b$ is a representative.

We implement $\Add(i,j,c)$ in a similar spirit. If $p_a$ is a representative and $p_b$ is not, we simply invoke $D.\AtR(i,j,c)$ and $R.\AtR(i,j,c)$. Otherwise, we handle the endpoints of the intervals explicitly in the manner described for $\Min$. 

To support insertions and deletions of points we can no longer work with the rigid partition into consecutive pairs. Instead, we shall use the standard technique of partitioning the points into segments consisting of a single point or two consecutive points. Only the first point from each segment is represented in the recursive structure. 
We make sure to merge consecutive segments whenever both contain just a single point. This guarantees that the number of segments is at most $2/3$ the number of points, and hence the recursive structure is sufficiently small. 

To keep track of the partition into segments we maintain the representatives in a predecessor data structure $M$. The representative of a point $p_k$ is then given by the predecessor of $x_k$ in $M$. The invariant now becomes $y_k = \min(D.\Lookup(x_k), R.\Lookup(M.\Predecessor(x_k))$.
We denote by $D[x]$ (resp. $R[x]$) the value of the $y$ coordinate of the point with $x$ coordinate in $D$ (resp. in $R$) if such a point exists. We denote by $M[x]$ the value of $M.\Predecessor(x)$.
With this notation the invariants we maintain is:
\begin{invariant}\label{inv:addminvalue}
For every $p = (x,y)$ in the data structure, we have  $y = \min(D[x],R[M[x]])$.
\end{invariant}

We denote the segments by $s_1,s_2 \ldots s_r$. The invariant on the segments is:
\begin{invariant}\label{inv:addminsegments}
    For every $i\in[1\ldots r-1]$ either $s_i$ is of length two or $s_{i+1}$ is of length two.
\end{invariant}
Note that a direct consequence of \cref{inv:addminsegments} is that $r \le \ceil{\frac{2n}{3}}$.
Also note that, for every $i\in [1\ldots |D|]$, $M[x_i]$ is either $x_i$ or $x_{i-1}$. 

For completeness we give all the details of the data structure.
Upon initialization, the data structure contains no elements and $D,M,$ and $R$ are empty.

\para{$\Lookup(x)$.}
We perform a $\Lookup(x)$ query on $D$, and a (recursive) $\Lookup(M[x])$ query on $R$ and return the minimum of the two.

\para{$\AssignOp(x,y)$.}
Let $x_{i'}=M[x_i]$ and let $s_a$ be the segment containing the point $p$ with $x$ value of $x$.
We assign $D[x] \assign y$. If there is another point $p'=(x',y')$ in $s_a$ we also  assign $D[x']\assign y'$.
Note that we can identify the two candidates for being $p'$ using predecessor and successor queries, and confirm which is in $s_a$ using $M$.
We conclude by (recursively) assigning $R[x_{i'}]\assign \infty$.

\para{$\ShiftOp(x,x')$.}
We introduce a service operation that would be useful for maintaining the invariants throughout insertions and deletions.
The input for $\ShiftOp$ is an $x$ value of a point $p_i = (x,y_i)$ in $D$, and a new value $x'$ satisfying $x_{i-1} \le  x' \le x_{i+1}$.
The operation $\ShiftOp$ replaces $p_i$ with $p = (x',y_i)$.
Note that due to the constraint on $x'$, the new point $p$ can enter the segment from which $p_i$ is removed.
We implement $\ShiftOp(x,x')$ as follows.
First, we remove $p_i$ from $D$ and insert $p = (x',y_i)$ instead (a $\Lookup$ operation is required to acquire $y_i$).
If $M[x_i] = x_i$, we also remove $x_i$ from $M$ and add $x'$ instead.
Finally, if $x_i$ was replaced in $M$ we also recursively apply $R.\ShiftOp(x,x')$.

\para{$\Insert(x,y)$.}
Let $x_p$ and $x_s$ be the predecessor and the successor $x$ values of $x$ in the data structure, respectively.
Let $x_a = M[x_p]$ and $x_b= M[x_s]$.
\begin{itemize}
    \item if $x_a\ne x_b$ 
    we apply $D.\Insert(x,y)$.
    Then, we update $M$ and $R$ as follows:
    \begin{itemize}
        \item If the segments containing $x_s$ and $x_p$ are both of length two, 
        then we create a new segment $s=[(x,y)]$ and apply $R.\Insert(x,\infty)$ (recursively).
        We also add $x$ to $M$.
        \item If one of the segments containing $x_p$ and $x_s$ are of length one, 
        assume that the segment $s$ containing $x_p$ is the segment of length one.
        We update the value of $D[x_{p}]$ to be the proper value of $x_p$ in our data structure by applying $D[x_{p}]\assign \Lookup(x_{p})$. 
        Moreover, we apply $R[x_a]\assign\infty$ to guarantee \cref{inv:addminvalue}.
        Note that in this operation, we add a point to a segment.
        If in this process the added point $(x,y)$ becomes the first point of the segment previously containing a single point $p'= (x',y')$, we need to update $M$ s.t. $x'$ is mapped to $x$ and update $R$ to contain a point with $x$ coordinate $x'$ instead of $x$.
        This is achieved by replacing $x$ with $x'$ in $M$ and applying $R.\ShiftOp(x,x')$.
    \end{itemize}
    \item If $x_a=x_b$, then inserting some point $(x',y')$ right before the segment $s$ containing $x_p$ is a case that we already covered.
    Thus, instead of inserting $(x,y)$ to $s$, we replace $x_p$ with $(x,y)$ and insert $x_p$ before $s$ following the previous cases.
    Let $p_p = (x_p,y_p)$ (obtained via $\Lookup(x_p)$). 
    We replace $p_p$ with $p=(x,y)$ by removing the point with $x$ value $x_p$ from $D$ and insert $(x,y)$ instead.
    We also remove $x_p$ from $M$ and add $x$ instead.
    Recall that $p_s = (x_s,y_s)$ is also in the segment 
    and we assign $\Lookup(x_{s})$ to $D[x_s]$ and $\infty$ to $\R[x_a]$.
    We also apply $R.\ShiftOp(x_p,x)$.
    With that, we have replaced $p_p$ with $p=(x,y)$.
    We proceed to insert $(x_p,y_p)$ via one of the previous cases.
\end{itemize}

\para{$\Remove(x)$.}
Let $x'=M[x]$. Let $s_{i}$ be the segment containing $x$. 
\begin{itemize}
    \item If both $s_{i-1}$ and $s_{i+1}$ are of length two or do not exist, then we remove $x$ from $D$ by applying $D.\Remove(x)$.
        If $s_{i}$ is of length two, then it may be the case that the first point in $s_{i}$ was removed and is now $p'=(x',y')$.
        If it is the case, we replace $x$ with $x'$ in $M$ and apply $R.\ShiftOp(x,x')$.
        If $s_{i}$ is of length one, the segment $s$ should be removed.
            We remove $R[x']$ by applying $R.\Remove(x')$.
            We also remove $x'$ from $M$.
    
    \item If $s_{i-1}$ or $s_{i+1}$ is of length one (assume that $s_{i-1}$ is of length one).
        Notice that by \cref{inv:addminsegments} $s_{i}$ is of length two so $s_{i-1}\cup s_{i}$ has exactly $3$ points.
        Let $p'$ be the point in $s_{i-1}$ and $p''$ be the other point in $s_{i}$ other than $(x,y)$.
        We remove the point $p'=(x',y')$ from $s_{i-1}$ as described in previous cases.
        We then manipulate the $x$ and $y$ values of the points in $s_i$ via operations on $D$ and assign and shift operations on $R$ (as described  $\Insert$) to replace them with $p'$ and $p''$.
        Notice that we already described how to remove elements in a segment of length one (due to \cref{inv:addminsegments} it must be the case that $s_{i-2}$ is of length two, if it exists).
        If the first point in $s_{i}$ is changed as a result of the deletion, we update $M$ and $R$ accordingly as in $\Insert$.
\end{itemize}

\para{$\Add(i,j,c)$.}
Let $p_p = (x_p,y_p)$ and $p_s = (x_s,y_s)$ be the $x$ successor of $i$ and the $x$ predecessor of $j$ in the data structure, respectively.
Let $x_a= M[x_p]$ and $x_b=M[x_s]$. Let $s_{a'}$ and $s_{b'}$ be the segments containing $x_p$ and $x_s$, respectively.
First, we update the values of all (at most 4) elements in $s_{a'}$ and $s_{b'}$ that their value need to be changed. 
This is done via $\AssignOp(x, \Lookup(x)+c)$ operation for every point $p=(x,y)$ in $s_{a'}$ or in $s_{b'}$ with $x\in [i\ldots j]$.
Let $p_1 = (x_1,y_1)$ be the first point in $s_{a'+1}$, and $p_2= (x_2,y_2)$ be the last point in $s_{b'-1}$, it remains to add $c$ to all elements in the range $[x_1\ldots x_2]$.
Since the $y$ value of every point $(x,y)$ is represented by $\min(D[x],R[M[x]])$, we add $c$ to the two parts of the representation as follows.
We add $c$ to the $y$ value for every point $p=(x,y)$ of $D$ with $x \in [x_1 \ldots x_2]$.
In addition, we (recursively) add $c$ to the corresponding segments via a $R.\Add(M[x_1],M[x_2],c)$ operation.

\para{$\Min(i,j,c)$.}
As before, let $p_p = (x_p,y_p)$ and $p_s = (x_s,y_s)$ be the $x$ successor of $i$ and the $x$ predecessor of $j$ in the data structure, respectively.
Let $x_a= M[x_p]$ and $x_b=M[x_s]$. Let $s_{a'}$ and $s_{b'}$ be the segments containing $x_p$ and $x_s$, respectively.
First, we update the values of all (at most 4) points in $s_{a'}$ and $s_{b'}$ that their value need to be changed. 
This is done via $\AssignOp(\min(x,\Lookup(x)),x)$ for every point $(x,y)$ in $s_{a'}$ and $s_{b'}$ with $x\in[i\ldots j]$.
Let $p_1=(x_1,y_1)$ be the first point in $s_{a'+1}$, and $p_2 = (x_2,y_2)$ be the last element in $s_{b'-1}$, it remains to apply the $\Min$ operation to all the points $(x,y)$ with $x \in [x_1\ldots x_2]$.
Since the representation of the $y$ value of a point $p=(x,y)$ is $y = \min(D[x],R[M[x]])$, it is sufficient to apply $R.\Min(M[x_1],M[x_2],c)$.
This is due to the equation $\min(\min(x,y),z)=\min(x,\min(y,z))$.

\para{Complexity.}
A $\Lookup$ performs $O(1)$ operations on Interval-add and predecessor data structures and a recursive call to $R$. 
Thus, $T_{\Lookup}(n)=T_{\Lookup}(\ceil{\frac{2n}{3}})+O(\log n)$, and therefore the time complexity of $\Lookup$ operation is $O(\log^2 n)$.

An $\AssignOp$ (resp. $\ShiftOp$) operation is performed using $O(1)$ operations on Interval-add and predecessor data structures, a constant number of $\Lookup$ operations and a recursive call of $\AssignOp$ (resp. $\ShiftOp$) to $R$. 
Thus, $T_{\AssignOp}(n) =T_{\AssignOp}(\ceil{\frac{2n}{3}})+O(\log^2 n)$, and therefore the time complexity of $\AssignOp$ (resp. $\ShiftOp$) is $O(\log^3 n)$.

An operation $\Insert$, $\Remove$, $\Add$ or $\Min$ is performed by applying $O(1)$ operations on interval-add and predecessor data structures, a constant number of $\Lookup$, $\AssignOp$ and $\ShiftOp$ operations and a recursive call to $R$. 
Thus, for all these operations $T(n)=T(\ceil{\frac{2n}{3}})+O(\log^3 n)$, and therefore their time complexity is $O(\log^4 n)$.\footnote{We did not attempt to optimize the degree of the polylog. Some of the $\log$ factors can be easily avoided.}

\appendix
\section{Missing proofs}\label{sec:appendix}

\para{Proof of 
\cref{lem:onlydiagonal}.}
We prove the case where $(x,y),(x,y+1) \in B$, the other two cases are similar.  
All edges entering $(x+1,y+1)$ have the same weight $c$. 
We claim that the values $\dist(x,y)$ are weakly monotone along every row and column in a block.
This implies that $\dist(x,y)+c$ (the path to $(x+1,y+1)$ through $(x,y))$) is not larger than $\dist(x,y+1)+c$ (the path to $(x+1,y+1)$ through $(x,y+1))$).  

To see why the values $\dist(x,y)$ are weakly monotone along every row and column in a block, consider two vertices $(x,y),(x,y+1)$ in the same block $B$. We show that $\dist(x,y) \le  \dist(x,y+1)$ (a  symmetric proof shows that $\dist(x,y) \le  \dist(x+1,y)$).
Let $P$ be a shortest path from $(0,0)$ to $(x,y+1)$. 
Let $(x',y)$ be the last vertex in $P$ with second coordinate $y$. 
If $x'=x$ then clearly $\dist(x,y) \le \dist(x,y+1)$. Otherwise, $x' < x$. 
Then, we can assume that the suffix of $P$ starting from $(x',y)$ is composed of a single diagonal edge followed by zero or more vertical edges (since a horizontal edge followed by a vertical edge is always not shorter than just using the diagonal edge). 
Now consider the path $P'$ from $(0,0)$ to $(x,y)$ that is identical to $P$ until $(x',y)$ and from $(x',y)$ continues vertically. Paths $P'$ and $P$ only differ in the suffix from $(x',y)$. But in this suffix they both use the same number of edges $x$-$x'$ and the same edge weights (since all edges in a block have the same weight). 
This means that $P'$ and $P$ have the same length and thus $\dist(x,y) \le \dist(x,y+1)$).      

\begin{lemma}\label{lem:ray}
For every $x\in [i\ldots j]$ the procedure $\LLW(i,j,\alpha)$ described in \cref{sec:warmup} assigns  $A[x]\assign \min_{t\le x}(A[t] + (x-t)\alpha)$.
\end{lemma}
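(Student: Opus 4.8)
Write $A_0$ for the contents of the array just before the operation, and set $g(x):=\min_{t\in[i\ldots x]}\big(A_0[t]+(x-t)\alpha\big)$ for integer $x\in[i\ldots j]$; this is exactly the value $\LLW(i,j,\alpha)$ is required to write (\cref{def:rangeops}), so the goal is to show that the ray-shooting procedure of \cref{sec:warmup} leaves $A[x]=g(x)$ for $x\in[i\ldots j]$ (it trivially leaves $A[x]=A_0[x]$ elsewhere, since only points with abscissa in $[i\ldots j]$ are touched and the rays are confined to $[i\ldots j]$). The plan is to pass to a continuous version of the quantity, describe it structurally, and match that structure to the procedure step by step.

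First I would introduce the continuous function $G(x):=\min_{t\in[i,x]}\big(\bar A_0(t)+(x-t)\alpha\big)$ for real $x\in[i,j]$, where $\bar A_0$ is the piecewise-linear interpolation of $A_0$ (so $\bar A_0$ has integer breakpoints and $\bar A_0(k)=A_0[k]$ for integers $k$). On each linear piece of $\bar A_0$ the map $t\mapsto\bar A_0(t)+(x-t)\alpha$ is linear, so its minimum over $t\in[i,x]$ is attained at an endpoint of that piece or at $t=x$; all of these are integers when $x$ is, hence $G(x)=g(x)$ for integer $x$. So it suffices to show that the piecewise-linear function $\hat A$ the procedure leaves on $[i,j]$ agrees with $G$ at every integer point, and I will in fact show it agrees at every real point outside the at most one segment of the form $p'p''$ created per ray — which spans between $\floor{x^*}$ and $\ceil{x^*}$ and therefore contains no integer in its interior.

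The heart of the argument is a structural description of $G$, proved by induction on the number of rays shot (finite, at most $|\P|$): (i) on the initial maximal prefix of segments with slope $\le\alpha$, i.e.\ up to the first point $p_z$ with $\alpha_z>\alpha$, one has $G=\bar A_0$ — this is precisely the ``unaffected'' computation already spelled out in \cref{sec:warmup}, using $\bar A_0(x)-\bar A_0(t)\le(x-t)\alpha$ there. (ii) Letting $\ray_z$ be the slope-$\alpha$ ray through $p_z$ and $x^*$ the abscissa at which $\ray_z$ next meets $\bar A_0$ (put $x^*=j$ if it never does before reaching $j$), we get $G(x)=\ray_z(x)$ for $x\in[x_z,x^*]$: ``$\le$'' is the choice $t=x_z$; for ``$\ge$'', for $t\in[i,x_z]$ all slopes of $\bar A_0$ on $[t,x_z]$ are $\le\alpha$ so $\bar A_0(t)+(x-t)\alpha\ge\bar A_0(x_z)+(x-x_z)\alpha=\ray_z(x)$, while for $t\in[x_z,x]$ we have $\bar A_0\ge\ray_z$ on $[x_z,x^*]$ by the choice of $x^*$, hence $\bar A_0(t)+(x-t)\alpha\ge\ray_z(t)+(x-t)\alpha=\ray_z(x)$. (iii) Since $x^*$ lies on $\ell_{z'-1}$, $\bar A_0(x^*)=\ray_z(x^*)=G(x^*)$, so for $x\ge x^*$ the contribution of $t\le x^*$ to $G(x)$ equals $G(x^*)+(x-x^*)\alpha=\bar A_0(x^*)+(x-x^*)\alpha$, which dominates the $t=x^*$ term; thus $G(x)=\min_{t\in[x^*,x]}\big(\bar A_0(t)+(x-t)\alpha\big)$ on $[x^*,j]$ and the analysis restarts from $x^*$, equivalently from $p_{z'}$ since $\ell_{z'-1}$ has slope $<\alpha$ on $[x^*,x_{z'}]$. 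This mirrors the procedure exactly: keep $\bar A_0$ on the slope-$\le\alpha$ prefix, remove the points strictly between $p_z$ and $p_{z'}$, insert $p'=(\floor{x^*},\ray_z(\floor{x^*}))$ (and $p''=(\ceil{x^*},\bar A_0(\ceil{x^*}))$ when $x^*\notin\Z$), and continue from $p_{z'}$. Comparing $\hat A$ with $G$ piece by piece — kept prefix; sub-ray on $[x_z,\floor{x^*}]$; the short segment $p'p''$ whose interior contains no integer; the kept tail of $\ell_{z'-1}$ on $[\ceil{x^*},x_{z'}]$ — yields $\hat A(x)=G(x)=g(x)$ at every integer $x\in[i,j]$, with the degenerate cases ($p_z$ does not exist; $z'$ does not exist, so the ray runs to $j$; $x^*$ integral or a breakpoint; $i,j$ interior to segments, which the procedure pre-splits) checked directly.

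\textbf{Main obstacle.} The conceptual content is clean — $G$ is the slope-$\alpha$ lower envelope of $\bar A_0$ swept from the left, and each ray-shooting step peels off exactly one ``bump'' — so the genuinely delicate point is the discretization: verifying that the floor/ceil rounding at a non-integral intersection $x^*$ never alters the stored value at an integer index, together with dispatching the degenerate cases above. These are routine but must be stated with care, so I would isolate them in a short case analysis at the end of the proof.
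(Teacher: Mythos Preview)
Your argument is correct and takes a genuinely different route from the paper. The paper first unrolls the target into the one-step recurrence $L(k)=\min(A[k],L(k-1)+\alpha)$ and then runs an integer-by-integer induction on $x$, doing a case analysis on whether $x-1$ lies on the current ray or past its intersection. You instead pass to the continuous interpolation $\bar A_0$, give a direct structural description of $G$ (unchanged on the slope-$\le\alpha$ prefix, equal to the ray on $[x_z,x^*]$, then restarted from $x^*$), and match that description to the procedure by induction on the number of rays, handling the floor/ceil discretization as a separate final check. Your approach is more geometric and makes the ``lower-envelope sweep'' picture explicit, which arguably explains \emph{why} the procedure is correct rather than just verifying it; the paper's approach is more elementary in that it never leaves the integers and needs no interpolation, at the cost of a slightly more ad-hoc case split. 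Both are complete; your identification of the rounding at $x^*$ as the only delicate point is accurate, and your observation that the segment $p'p''$ has no integer in its interior is exactly what's needed there.
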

\begin{proof}
Let $L(k)$ be the value assigned to $A[k]$ by $\LLW(i,j,\alpha)$, we first claim that: 

$$L(k) = \begin{cases}
A[i] & k=i
\\
\min(A[k], L(k-1) + \alpha) & k \in [i+1 \ldots j]
\end{cases}
$$
We prove this by induction on $k-i$.
For $k-i = 0$, $\LLW(i,j,\alpha)$ assigns $A[i] \leftarrow \min_{t\in [i\ldots i]}(A[t] + (i-t) \alpha) = A[i] = L(i)$ as required.
For $k - i > 0$:
\begin{align*}
A[k] & \leftarrow \min_{t\in [i\ldots k]}(A[t] + (k-t) \alpha) = \min\big( \min_{t\in [i\ldots k-1]}(A[t] + (k-t) \alpha) , A[k] \big) \\
&=   \min\big( \min_{t\in [i\ldots k-1]}(A[t] + (k- 1 - t) \alpha) + \alpha , A[k] \big) = \min(L(k-1) + \alpha , A[k] ) = L(k). 
\end{align*}

By the above, we need to prove that after $\LLW$ is applied, $A[x] = L[x]$ for every $x\in[i\ldots j]$ (clearly, the operation $\LLW$ does not change $A[x]$ for $x\notin [i\ldots j]$).
We prove this claim by induction on $x \in [i \ldots j]$. 
For $x=i$ the claim holds since $L[i]=A[i]$ and indeed, $\LLW$ does not change $A[i]$. 
Otherwise, assume that the claim holds for $x-1 \in [i\ldots j-1]$.
Let $z\in[q\ldots b]$ be the maximal value such that a ray shooting process started from $p_z = (x_z,y_z)$ and $x_z < x$. 
Let $w\in [z+1\ldots |\P|$ be minimal index such that $p_w$ is below the ray $\ray_z$ with slope $\alpha$ shot from $p_z$. Let $p'=(x',y')$ be the intersection point of $\ell_{w-1}$ and the ray with slope $\alpha$ shot from $p_z$.
We distinguish between two cases.

\begin{itemize}
    \item \textbf{Case 1: $x' \ge x-1$.}
    In this case, the linear segment containing $x-1$ after $\LLW(i,j,\alpha)$ is applied is a sub-segment of the ray $\ray_z$. 
    Therefore, the value assigned to $A[x-1]$ by the procedure is $\ray_z(x-1)$.
    According to the induction hypothesis, we have $\ray_z(x-1) = L[x-1]$.
    We consider two cases regarding the value $A[x]$ before the update is applied.
    \begin{itemize}
        \item \textbf{$A[x] \ge L[x-1] + \alpha$}. In this case, the point $(x,A[x])$ is not below the ray $\ray_z$ and $p'$ must be to the left of $(x,A[x])$.
        It follows that after the procedure is applied, $x$ is also on a linear segment that is a sub-segment of $\ray_z$ and therefore $A[x] = \ray_z(x)$. 
        Indeed, in this case $L[x] = \Min(L[x-1] + \alpha,A[x]) = L[x-1] + \alpha = \ray_z(x-1) + \alpha = \ray_z(x)$ as required.
        \item \textbf{$A[x] < L[x-1] + \alpha$} Note that in this case, we have $p_w = (x,A[x])$.
        Recall that when $p_w$ is met in the ray shooting process, a segment connecting $(x-1,\ray_z(x-1))$ and $(x,A[x])$ is created.
        Therefore, the value of $A[x]$ is not changed by the ray shooting process.
        Indeed, in this case we have $L[x] = \min(L[x-1] + \alpha,A[x]) = A[x]$.
    \end{itemize}
    \item \textbf{Case 2: $x' < x-1$.}
    In this case, the procedure $\LLW(i,j,\alpha)$ did not change the value of $x-1$ and we have $L[x-1] = A[x-1]$ from the induction hypothesis.
    Specifically, the ray shooting process from $q_z$ terminated, and the next ray shooting process, if a necessary one exists, is from a point $p_{z'}$ with $x_{z'} > x-1$.
    This implies that the slope of the linear segment containing $x-1$ is at most $\alpha$.
    Therefore, we must have $A[x]< A[x-1] + \alpha =L[x-1] + \alpha$ and as a result $L[x] = A[x]$.
    Whether or not a ray shooting starts from $(x,A[x])$, the value of $A[x]$ is not changed by $\LLW(i,j,\alpha)$.
    If a ray shooting does not start from $(x,A[x])$ - the linear segment containing $x$ is not affected by $\LLW(i.j.\alpha)$ as no ray shooting process interacted with it.
    If a ray shooting process starts from $(x,A[x])$, the linear segment containing $x$ after the update is applied will be a sub-segment of a ray $\ell^*$ starting from $(x,A[x])$, and clearly $\ell^*(x) = A[x]$ as required.\qedhere 
\end{itemize}
\end{proof}

\para{Proof for \cref{lem:activeraysright}.}
Assume to the contrary that a ray shooting process starts at a passive point $p_z \neq p_b$.
 If $p_z$ is the first point from the right where a ray shooting starts, then $z$ is the maximal index in $[a\ldots b]$ with $\alpha_{z-1} <- \alpha$. But since $p_z$ is passive, we have $-\alpha > \alpha_{z-1} \ge \alpha_{z}$, contradicting the maximality of $z$ (note that $p_{z} \neq p_b$ so $z+1 \in [a\ldots b]$).

Otherwise, let $p_q$ be the last point before $p_z$ from which a ray shooting process occurred.
Let $p_{q'}$ be the first point below the ray shot from $p_q$. Since $p_z$ is the next point from which a ray is shot, $z$ is the rightmost point in $[a \ldots q']$ with $\alpha_{z-1} < -\alpha$.
Since $p_z$ is passive, we have $-\alpha > \alpha_{z-1} \ge \alpha_{z}$.
If $z \neq q'$, we have $z+1 \in [a\ldots q']$, a contradiction to the maximality of $z$.
Otherwise, $p_z = p_{q'}$ is the first point below the ray with slope $-\alpha$ shot from $p_q$. 
It follows from $\alpha_z < -\alpha$ that $p_{z+1}$ is below the ray as well, and a contradiction to $p_z = p_{q'}$ being the first point below the ray.
\end{document}